\newtheorem{assumption}{Assumption}
\newtheorem{remark}{Remark}
\let\ALP \mathcal
\let\VEC      \mathbf
\let\SF \mathscr
\newcommand\transpose{*}
\newcommand\IND{\mathds{1}}
\newcommand{\ind}[1]{\IND_{\{#1\}}}
\newcommand{\beq}[1]{\begin{eqnarray} #1 \end{eqnarray}}
\newcommand{\beqq}[1]{\begin{eqnarray*} #1 \end{eqnarray*}}
\renewcommand{\Re}{\mathds{R}}
\newcommand{\ex}[1]{\mathds{E}\left[#1\right]}
\newcommand{\pr}[1]{\mathds{P}\left\{#1\right\}}
\renewcommand{\matrix}[2]{\left[\begin{array}{#1} #2 \end{array}\right]}
\newcommand{\bmat}[1]{\begin{bmatrix}#1\end{bmatrix}}
\let\VSF  \mathsf
\title{Common Information based Markov Perfect Equilibria for Linear-Gaussian Games with Asymmetric Information}
\author{Abhishek Gupta, Ashutosh Nayyar, C\'{e}dric Langbort, Tamer Ba\c{s}ar
\thanks{Ashutosh Nayyar is with the Department of Electrical Engineering at the University of Southern California, Los Angeles, CA, USA. His email is  {\tt\small ashutosn@usc.edu}. All other authors are with Coordinated Science Laboratory at the University of Illinois at Urbana-Champaign, USA. Email: {\tt\small \{gupta54,langbort,basar1\}@illinois.edu}}%
}
\begin{document}
\maketitle
\begin{abstract}
We consider a class of two-player dynamic stochastic nonzero-sum games where the state transition and observation equations are linear, and the primitive random variables are Gaussian. Each controller acquires possibly different dynamic information about the state process and the other controller's past actions and observations. This leads to a dynamic game of asymmetric information among the controllers. Building on our earlier work on finite games with asymmetric information, we devise an algorithm to compute a Nash equilibrium by using the common information among the controllers. We call such equilibria {\it common information based Markov perfect equilibria} of the game, which can be viewed as a refinement of Nash equilibrium in games with asymmetric information. If the players' cost functions are quadratic, then we show that under certain conditions a unique common information based Markov perfect equilibrium exists. Furthermore, this equilibrium can be computed by solving a sequence of linear equations. We also show through an example that there could be other Nash equilibria in a game of asymmetric information, not corresponding to common information based Markov perfect equilibria.
\end{abstract}

\section{Introduction}
A game models a scenario where multiple strategic controllers (or players) optimize their objective functionals, which depend not only on the self actions but also on the actions of other controllers. In stochastic static games, players observe the realization of some random state of nature, possibly through separate noisy channels, and use such observations to independently determine their actions so that the expected values of their individual cost (or utility) functions are optimized. In a stochastic dynamic game, on the other hand, the players act at multiple time steps, based on observation or measurement of some dynamic process which itself is driven by past actions as well as random quantities, which could again be called random states of nature. What information each player acquires at each stage of the game determines what is called the {\it information structure} of the underlying game. If all the players acquire the same information at each time step, then the dynamic game is said to be a game of 
symmetric information. However, in many real scenarios, the players do not have access to the same information about the underlying state processes and other players' observations and past actions. Such games are known as games with asymmetric information. For example, several problems in economic interactions \cite{harsanyi1967,cole2001,myerson1997}, attacks on cyber-physical systems \cite{gupta2012}, auctions, cryptography, etc. can be modeled as games of asymmetric information among strategic players.

Games with symmetric and/or perfect information have been well studied in the literature; see, for example, \cite{Shapley:1953,Sobel1971, fudenberg1991,Basarbook,Filarbook}. In these games, the players have the same beliefs on the states of the game, future observations and future expected costs or payoffs. However, in games with asymmetric information, the players need not have the same beliefs on the current state and future evolution of the game. General frameworks to compute or refine Nash equilibria in stochastic games of symmetric or perfect information have received attention from several researchers, see for example, \cite{maskin2001, myerson1997, fudenberg1991} among many others. However, by comparison, such general frameworks for games of asymmetric information are scant (for exceptions, see \cite{Basar1975,Basaronestep,Basarmulti, basar1985}). This paper, in addition to its earlier finite-game version \cite{nayyar2012a}, provides such a 
framework.


In our recent work \cite{nayyar2012a}, we considered a general finite non-zero sum dynamic stochastic game of asymmetric information with stagewise additive cost functions. Under certain assumptions on the information structures of the players, we obtained a characterization of a particular class of Nash equilibria using a dynamic programming like approach. The key idea there was to use common information among the players to transform the original game of asymmetric information to a game of symmetric and perfect state information with an expanded state and action spaces of the players so that a more easily computable Markov perfect equilibrium of the latter can be used to obtain a Nash equilibrium for the former. The advantage of this technique is that instead of searching for equilibrium in the (large) space of strategies (which grows with the number of stages), we only need to compute Nash equilibrium in a succession of static games of complete information. This reduces the computational effort in 
computing a Nash equilibrium of the game. We call the Nash equilibria obtained with this approach as {\it common information based Markov perfect equilibrium}.

In this work, we extend the framework and results of \cite{nayyar2012a} to infinite games, particularly those with linear state and observation equations and Gaussian random variables. For quadratic cost functions of the players satisfying certain assumptions, we show that a unique common information based Markov perfect equilibrium exists. The general framework developed in the paper can be applied to obtain Nash equilibria of broader classes of stochastic dynamic games with asymmetric information, satisfying two general assumptions delineated in the paper.  

\subsection{Previous Work}
In the past, specific models of various classes of games have been studied, where different players acquire different information. Harsanyi, in his seminal paper \cite{harsanyi1967}, studied one subclass of static games with finite state and action spaces of the players, and showed that under some technical conditions, Nash equilibrium exists in such games. Various authors \cite{Behn68,Rhodes69,Willman69} have studied two-player zero-sum differential games with linear state dynamics and quadratic payoffs, where the players do not make the same measurements about the state. A zero sum differential game where one player's observation is nested in the other player's observation was considered in \cite{Hojota}. A zero-sum differential game where one player makes a noisy observation of the state while the other one does not make any measurement was considered in \cite{MintzBasar}.

Discrete-time non-zero sum LQG games with one step delayed sharing of observations were studied in \cite{Basaronestep} and \cite{Basarmulti}. A game with one-step delayed observation and action sharing among the players was considered in \cite{Altman:2009}. A two-player finite game in which the players do not have access to each other's observations and  control actions was considered in \cite{hespanha2001}, where a necessary and sufficient condition for existence of a Nash equilibrium in terms of two coupled dynamic programs was obtained.

Obtaining equilibrium solutions for  stochastic  games  when players make independent noisy observations of the state and do not  share all of their information (or even when they have access to the same noisy observation as in \cite{Bjota}) has remained a challenge for general classes of games. Identifying classes of games which would lead to tractable solutions or feasible solution methods is therefore an important goal in this area.

\subsection{Contributions of this Paper}
This paper is a sequel to our earlier finite game work in \cite{nayyar2012a}, where we make similar assumptions on the information structures of the players. We study games in which the state, the players' actions and primitive random variables take values in finite-dimensional Euclidean spaces. The state evolution and observation equations are taken to be linear in their arguments and all primitive random variables are assumed to be mutually independent zero-mean Gaussian random variables. We assume that the players have a stagewise additive total cost function. 

We assume that the information structures of the players satisfy two sufficient conditions. For any dynamic game satisfying these assumptions, we show that we can decompose it into several static games using a backward induction algorithm. If there exists a Nash equilibrium for each of those static games, then there exists a common information based Markov perfect equilibrium for the original dynamic game. Furthermore, we present an algorithm that computes the common information based Markov perfect equilibrium in such games, provided that it exists. For games in which the cost functions of the players are quadratic satisfying certain assumption, we show that the static game at each time step admits a unique Nash equilibrium in the class of all Borel measurable strategies of the players at that time step, thereby proving the existence of a unique common information based Markov perfect equilibrium. We also show, by example, that there may be other Nash equilibria of such games that cannot be computed using 
the conceptual method developed in this paper.

To sum up, common information based Markov perfect equilibria constitute a subclass of Nash equilibria of such games, and it can be thought of as a {\it refinement of Nash equilibrium} for games with asymmetric information. However, we do not look into implementation issues of common information based Markov perfect equilibrium in this paper, and we leave this as a topic of further investigation.

\subsection{Notation}
Random variables are denoted by upper case letters and their realizations by the corresponding lower case letters. Random vectors are denoted by upper case bold letters and their realizations by lower case bold letters. Unless otherwise stated, the state, action and observations are assumed to be vector valued. 

Let $\ALP X$ be a set. For a subset $\SF X\subset\ALP X$, we let $\SF X^\complement$ denote the complement of the set $\SF X$. We use $\text{id}_{\ALP X}$ to denote the identity map on the set $\ALP X$. The transpose of a matrix $A$ is denoted by $A^\transpose$. 



Subscripts are used as time indices and superscripts are used as player/controller indices. Consider $a,b\in\mathbb{N}$. Let $\VEC X_t$ be an element of a finite dimensional Euclidean space $\ALP X_t$ for $a\leq t\leq b$. If $a\leq b$, then we let $\VEC X_{a:b}$ denote the set of vectors $\{\VEC X_a, \VEC X_{a+1}, \dots, \VEC X_b\}$. If $a>b$, then $\VEC X_{a:b}$ is empty. On the other hand, we use $\ALP X_{a:b}$ to denote the product space $\prod_{t=a}^b\ALP X_t$, which is a finite dimensional Euclidean space, with the understanding that $\ALP X_{a:b}=\emptyset$ if $a>b$. We use a similar convention for superscripts.

We use $\mathds{P}\{\cdot\}$ to denote the probability of an event and $\mathds{E}[\cdot]$ to denote the expectation of a random variable. For a collection of functions $\boldsymbol{g}$, the notations $\mathds{P}^{\boldsymbol{g}}\{\cdot\}$ and $\mathds{E}^{\boldsymbol{g}}[\cdot]$ indicate that the probability/expectation depends on the choice of functions in $\boldsymbol{g}$. Similarly, for a probability measure $\pi$, the notation $\mathds{E}^{\pi}[\cdot]$ indicates that the expectation is with respect to the measure $\pi$. The notation $\ind{\VEC x}$ denotes a Dirac measure at the point $\VEC x$. For a set $\ALP X$ and its subset $\SF X$, $1_{\SF X}:\ALP X\rightarrow\{0,1\}$ denotes the indicator function on the set $\SF X$.

Let $\VEC X, \VEC Y$ and $\VEC Z$ be three random variables taking values, respectively, in the spaces $\ALP X,\:\ALP Y$ and $\ALP Z$. Then, $\mathds{P}\{\SF X|\VEC y,\VEC z\}$ denotes the probability of the event $\SF X\subset\ALP X$ given the realizations $\VEC y$ and $\VEC z$ of the random variables $\VEC Y$ and $\VEC Z$. Similarly, $\mathds{E}[\cdot|\VEC y]$ denotes the expected value of a real-valued function $(\cdot)$ given the realization $\VEC y$. We use $\mathds{P}\{d\VEC x,d\VEC y|\VEC z\}$ to denote the conditional probability measure over the space $\ALP X\times\ALP Y$ given a realization $\VEC z$ of another random variable $\VEC Z$.

\subsection{Outline of the Paper}
The paper is organized as follows. In Section \ref{sec:problem}, we formulate the two-player non-zero sum game problem with linear dynamics, linear observation equations, and asymmetric information among the players. We make two assumptions on the information structures of the controllers and an assumption on the admissible strategies of the agents. We also discuss consequences of the assumptions we make on the information structures. In Section \ref{sec:main}, we state the main result of the paper and develop a backward induction algorithm that computes the common information based Markov perfect equilibrium of the game formulated in Section \ref{sec:problem}, provided that it exists. In Section \ref{sec:lqggames}, we specialize the result of Section \ref{sec:main} to LQG games, and show that under further assumptions on cost functions, a unique common information based Markov perfect equilibrium exists in the class of measurable strategies of the players.
In Section \ref{sec:information}, we show through an example that there may be other Nash equilibria of a game with asymmetric information, and that using our algorithm, we compute only a subclass of all Nash equilibria. We discuss some implications of our assumptions in Section \ref{sec:discussion}. Finally, we conclude our discussion in Section \ref{sec:conclude} and identify several directions for future research. Proofs of most of the results in the paper are given in appendices.

\section{Problem Formulation}\label{sec:problem}
Let $\VEC X_t$ be the state of a linear system which is controlled by two controllers (players)\footnote{In the paper, we use the term ``controller'' instead of ``player'', because we introduce another set of players in the symmetric information game introduced in the next section.}. At each time step $t$, Controller $i$, $i=1,2$, observes the state through a noisy sensor; this observation is denoted by $\VEC Y^i_t$.  Controller $i$'s action at time step $t$ is denoted by $\VEC U^i_t$. For each Controller $i\in\{1,2\}$ at time $t\in\{1,\ldots,T-1\}$, the state, action and observation spaces are denoted by $\ALP X_t$, $\ALP U^i_t$ and $\ALP Y^i_t$, respectively, and they are assumed to be finite dimensional Euclidean spaces. The dynamics and observation equations are given as
\beq{\label{eq:lqgdynamics} \VEC X_{t+1} &=& A_t\VEC X_{t}+B^1_t\VEC U^1_t+B^2_t\VEC U^2_t+\VEC W_t^0,\\
\label{eq:lqgobserv} \VEC Y^i_t &=& H^i_t\VEC X_t + \VEC W^i_t, \mbox{~~}i=1,2,}
where $\VEC W^i_t$ is a random variable taking values in a finite dimensional Euclidean space denoted by $\ALP W^i_t$ for all $i\in\{0,1,2\}$ and $t\in\{1,\ldots,T-1\}$, and $A_t,B^i_t,H^i_t,\: i\in\{1,2\},\: t\in\{1,\ldots,T-1\}$ are matrices of appropriate dimensions. $\VEC X_1,\VEC  W^{0:2}_{1:T-1}$ are primitive random variables, and they are assumed to be mutually independent and zero-mean Gaussian random vectors.

\subsection{Information Structures of the Controllers}
The information available to each controller at time step $t\in\{1,\ldots,T-1\}$ is a subset of all information generated in the past, that is, $\{\VEC Y^{1:2}_{1:t},\VEC U^{1:2}_{1:t-1}\}$. Let $\ALP E^i_t$ and $\ALP F^i_t$, respectively, be defined as
\beqq{\ALP E^i_t:=\{(j,s)\in\{1,2\}\times\{1,\ldots,T\}:\text{ Controller } i \text{ at time } t \text{ knows } \VEC Y^j_s\},\\
\ALP F^i_t:=\{(j,s)\in\{1,2\}\times\{1,\ldots,T\}:\text{ Controller } i \text{ at time } t \text{ knows } \VEC U^j_s\}.}
Define $\ALP I^i_t$, $\ALP C_t$ and $\ALP P^i_t$ for $i=1,2$ and $t\in \{1,\ldots,T-1\}$ as
\beqq{\ALP I^i_t  &=& \prod_{(j,s)\in\ALP E^i_t} \ALP Y^j_s\times \prod_{(j,s)\in\ALP F^i_t} \ALP U^j_s,\\
\ALP C_t &=& \prod_{(j,s)\in\ALP E^1_t\cap\ALP E^2_t} \ALP Y^j_s\times \prod_{(j,s)\in\ALP F^1_t\cap \ALP F^2_t} \ALP U^j_s,\\
\ALP P^i_t & = & \prod_{(j,s)\in\ALP E^i_t\setminus(\ALP E^1_t\cap\ALP E^2_t)} \ALP Y^j_s\times \prod_{(j,s)\in\ALP F^i_t\setminus (\ALP F^1_t\cap \ALP F^2_t)} \ALP U^j_s.}
Note that $\ALP I^i_t$, $\ALP C_t$ and $\ALP P^i_t$ for $i=1,2$ and $t\in \{1,\ldots,T-1\}$ are finite dimensional Euclidean spaces. 

We let $\VEC I^i_t\in\ALP I^i_t$ denote the information available to Controller $i$ at time step $t\in\{1,\ldots,T-1\}$, which is a vector comprised of measurements and control actions that are observed by the controller. The common information of the controllers at a time step is defined as the vector of all random variables that are observed by both controllers at that time step. The private information of a controller at a time step is the vector of random variables that are not observed by the other controller. The common information is denoted by $\VEC C_t\in\ALP C_t$, and the private information of Controller $i$ is denoted by $\VEC P^i_t\in\ALP P^i_t$ at time step $t\in\{1,\ldots,T\}$.

A dynamic game is said to be one of symmetric information if $\VEC C_t=\VEC I^1_t = \VEC I^2_t$ at all time steps. We are interested in games where the controllers may have asymmetry in information, that is, $\VEC I^1_t \neq \VEC I^2_t$. An extreme example of a game of asymmetric information is the case when $\ALP P^1_t\neq \emptyset$ while $\ALP P^2_t=\emptyset$ for all time steps $t$. Another example of a game of asymmetric information is when the controllers recall their past information and share their observations after a delay of one time step, that is, $\ALP C_t = \ALP Y^{1:2}_{1:t-1}$ and $\ALP P^i_t =\ALP Y^i_t$ for all time steps $t$.

\subsection{Admissible Strategies of Controllers}

At every time step, Controller $i$ uses a {\it control law} $g^i_t: \ALP P^i_t\times \ALP C_t\rightarrow \ALP U^i_t$ to map its information to its action. We assume that the control law $g^i_t$ is a Borel measurable function, and denote the space of all such control laws by $\ALP G^i_t$.

A strategy of Controller $i$, which we define as the collection of its control laws over time, is denoted by $\mathbf g^i = (g^i_1,\ldots,g^i_{T-1})$ and the space of strategies of Controller $i$ is denoted by $\ALP G^i_{1:T-1}$. The pair of strategies of both controllers, $(\mathbf g^1,\mathbf g^2)\in\ALP G^1_{1:T-1}\times \ALP G^2_{1:T-1}$, is called the strategy profile of the controllers.

The total cost to Controller $i$, as a function of the strategy profile of the controllers, is
\beqq{J^i(\mathbf g^1,  \mathbf g^2) := \mathds{E}\Bigg[c^i_T(\VEC x_T) + \sum_{t=1}^{T-1}c^i_t(\VEC x_t,\VEC u^1_t,\VEC u^2_t)\Bigg],}
where $c^i_t$ is a non-negative continuous function of its arguments for $i\in\{1,2\}$ and $t\in\{1,\ldots,T-1\}$, and the expectation is taken with respect to the probability measure on the state and action processes induced by the choice of strategy profile $(\mathbf g^1,  \mathbf g^2)$.

A strategy profile $(\mathbf g^1,  \mathbf g^2)$ is said to be a Nash equilibrium of the game if it satisfies the following two inequalities
\beqq{J^1(\mathbf g^1,  \mathbf g^2) \leq J^1(\tilde{\mathbf g}^1,  \mathbf g^2),\quad\text{ and }\quad J^2(\mathbf g^1,  \mathbf g^2) \leq J^2(\mathbf g^1,  \tilde{\mathbf g}^2),} 
for all admissible strategies $\tilde{\mathbf g}^1\in\ALP G^1_{1:T-1}$ and $\tilde{\mathbf g}^2\in\ALP G^2_{1:T-1}$. 

We assume that the state evolution equations, observation equations, the noise statistics, cost functions of the controllers and the information structures of the controllers are part of common knowledge. The game thus defined is referred to as game {\bf G1}.

\subsection{Assumption on Evolution of Information}
As noted above, each controller's information consists of common information and private information. We place the following condition on the evolution of common and private information of the controllers in game {\bf G1}.  
 \begin{assumption}\label{assm:lqginfoevolution}
The common and private information evolve over time as follows:
\begin{enumerate}
\item The common information increases with time, that is, $(\ALP E^1_t\cap \ALP E^2_t) \subset (\ALP E^1_{t+1}\cap \ALP E^2_{t+1})$ and $(\ALP F^1_t\cap \ALP F^2_t) \subset (\ALP F^1_{t+1}\cap \ALP F^2_{t+1})$ for all $t\in\{1,\ldots,T-1\}$. Let $\ALP D_{1,t+1}:= (\ALP E^1_{t+1}\cap \ALP E^2_{t+1})\setminus(\ALP E^1_{t}\cap \ALP E^2_{t})$ and $\ALP D_{2,t+1}:= (\ALP F^1_{t+1}\cap \ALP F^2_{t+1})\setminus(\ALP F^1_{t}\cap \ALP F^2_{t})$. Define
\beq{\label{eqn:commoninfo}\ALP Z_{t+1} = \prod_{(j,s)\in\ALP D_{1,t+1}} \ALP Y^j_s\times \prod_{(j,s)\in\ALP D_{2,t+1}} \ALP U^j_s.}
Then, $\VEC Z_{t+1}\in\ALP Z_{t+1} $ denotes the increment in common information from time $t$ to $t+1$ and we have 
\begin{equation}
\VEC Z_{t+1} = \zeta_{t+1}([\VEC P^{1\transpose}_t, \VEC P^{2\transpose}_t, \VEC U^{1\transpose}_t, \VEC U^{2\transpose}_t, \VEC Y^{1\transpose}_{t+1}, \VEC Y^{2\transpose}_{t+1}]^\transpose),\label{eq:lqgcommoninfo}
\end{equation}
where $\zeta_{t+1}$ is an appropriate projection function.
\item The private information evolves according to the equation
\begin{equation}
\VEC P^i_{t+1} = \xi^i_{t+1}([\VEC P^{i\transpose}_{t}, \VEC U^{i\transpose}_t,  \VEC Y^{i\transpose}_{t+1}]^\transpose). \label{eq:lqgprivateinfo}
\end{equation}
where $\xi^i_{t+1}$ is an appropriate projection function.
\end{enumerate}
\end{assumption}


We now introduce a few notations in order to prove an important result. Let $\ALP S_t := \ALP X_t\times\ALP P^1_t\times\ALP P^2_t$ for $t\in\{1,\ldots,T-1\}$. Fix the strategy profile of the controllers as $(\VEC g^1,\VEC g^2)$. Let $\Pi_t$ be the conditional measure on the space of state and private informations, $\ALP S_t$,  given the common information $\VEC C_t$ at time step $t$. Thus,
\beqq{\Pi_t(d\VEC s_t) = \mathds{P}^{g^{1:2}_{1:t-1}}\left\{d\VEC s_t\Big|\VEC C_t\right\},}
where the superscript denotes the fact that the probability measure depends on the choice of control laws. The conditional probability measure $\Pi_t$ is a $\VEC C_t$-measurable random variable, whose realization, denoted by $\pi_t$, depends on the realization  $\VEC c_t$ of the common information. We now have the following result, which is a consequence of Assumption \ref{assm:lqginfoevolution}.

Let $\Gamma^i_t$ be a random measurable function from $\ALP P^i_t$ to $\ALP U^i_t$ defined as $\Gamma^i_t(\cdot):=g^i_t(\cdot,\VEC C_t)$ for $i=1,2$ and $t\in\{1,\ldots,T-1\}$, where the realization of $\Gamma^i_t$ is denoted by $\gamma^i_t$ and it depends on the realization of the random variable $\VEC C_t$. Thus, $g^i_t(\VEC P^i_t,\VEC c_t) = \gamma^i_t(\VEC P^i_t)$. We now have the following result about the evolution of conditional measure $\Pi_t$.
\begin{lemma}\label{lem:piupdate}
Fix the strategy profile $(\VEC g^1,\VEC g^2)\in\ALP G^1_{1:T-1}\times\ALP G^2_{1:T-1}$. If Assumption \ref{assm:lqginfoevolution} holds, then 
\beqq{\Pi_{t+1} = F_t(\Pi_t,\Gamma^1_t,\Gamma^2_t,\VEC Z_{t+1}),}
where $F_t$ is a fixed transformation which does not depend on the choice of control laws.
\end{lemma}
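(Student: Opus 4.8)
The plan is to realize $\Pi_{t+1}$ as a Bayesian (disintegration) update of $\Pi_t$, and to verify that every ingredient of this update other than the realized values of $\Gamma^1_t,\Gamma^2_t$ and $\VEC Z_{t+1}$ is fixed by the system data. First I would record the key conditional-independence fact. Both $\VEC C_t$ and $\VEC S_t=(\VEC X_t,\VEC P^1_t,\VEC P^2_t)$ are functions of the primitive variables $\VEC X_1,\VEC W^0_{1:t-1},\VEC W^{1:2}_{1:t}$: the observations in the information enter through $\VEC W^i_s$ with $s\le t$, while the states and actions up to time $t$ unwind recursively into $\VEC X_1,\VEC W^0_{1:t-1}$ and observations up to $t$. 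By contrast the noises $\VEC W^0_t,\VEC W^1_{t+1},\VEC W^2_{t+1}$ carry time indices not appearing in that list, so mutual independence of the primitive variables makes these three future noises independent of the pair $(\VEC S_t,\VEC C_t)$. Consequently, conditioned on $\VEC C_t$, the tuple $(\VEC S_t,\VEC W^0_t,\VEC W^1_{t+1},\VEC W^2_{t+1})$ is distributed as the product $\Pi_t\otimes\mu^0_t\otimes\mu^1_{t+1}\otimes\mu^2_{t+1}$, where $\mu^0_t,\mu^1_{t+1},\mu^2_{t+1}$ are the fixed, control-independent Gaussian laws of the respective noises.

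Next I would assemble the one-step update map. Using $\VEC U^i_t=\gamma^i_t(\VEC P^i_t)$, equation (\ref{eq:lqgdynamics}) for $\VEC X_{t+1}$, equation (\ref{eq:lqgobserv}) for $\VEC Y^i_{t+1}$, the private-information recursion (\ref{eq:lqgprivateinfo}) for $\VEC P^i_{t+1}$, and the common-information increment (\ref{eq:lqgcommoninfo}) for $\VEC Z_{t+1}$, I can exhibit a single measurable map $\Phi_t$ that sends $(\VEC s_t,\VEC w^0_t,\VEC w^1_{t+1},\VEC w^2_{t+1})$, together with the pair of partial control functions $(\gamma^1_t,\gamma^2_t)$, to the pair $(\VEC s_{t+1},\VEC z_{t+1})$. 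Crucially $\Phi_t$ is built only from the matrices $A_t,B^i_t,H^i_{t+1}$ and the projection functions $\zeta_{t+1},\xi^i_{t+1}$, so its dependence on the strategy profile is entirely through the inserted arguments $\gamma^1_t,\gamma^2_t$. Pushing the product measure of the previous paragraph forward through $\Phi_t(\,\cdot\,;\gamma^1_t,\gamma^2_t)$ yields the joint conditional law of $(\VEC S_{t+1},\VEC Z_{t+1})$ given $\VEC C_t$, and this law is a fixed functional of $\Pi_t$ and $(\gamma^1_t,\gamma^2_t)$ alone.

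Finally I would obtain $\Pi_{t+1}$ by conditioning this joint law on the realized increment. By Assumption \ref{assm:lqginfoevolution}(1) the common information is nested, so $\VEC C_{t+1}$ is a relabelling of $(\VEC C_t,\VEC Z_{t+1})$; hence $\Pi_{t+1}(d\VEC s_{t+1})=\PR^{g^{1:2}_{1:t}}\{d\VEC s_{t+1}\mid \VEC C_t,\VEC Z_{t+1}\}$ is precisely the regular conditional probability of $\VEC S_{t+1}$ given $\VEC Z_{t+1}$ extracted from the joint law just constructed. Since that joint law depends on the strategy profile only through $(\Pi_t,\gamma^1_t,\gamma^2_t)$, the disintegration defines a map $F_t$ with $\Pi_{t+1}=F_t(\Pi_t,\Gamma^1_t,\Gamma^2_t,\VEC Z_{t+1})$ that is independent of the choice of control laws, as claimed; existence of the regular conditional probability is guaranteed because all the spaces are finite-dimensional Euclidean, hence standard Borel.

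I expect the main obstacle to be this last, disintegration, step: one must argue carefully that conditioning on $\VEC Z_{t+1}$ produces the \emph{same} functional $F_t$ regardless of the strategy profile, so that no residual control-law dependence sneaks in through the normalization once a realized value $\VEC z_{t+1}$ is plugged in. The clean way to dispatch this is to observe that the conditioning is applied to a joint law already shown to be a fixed functional of $(\Pi_t,\gamma^1_t,\gamma^2_t)$; what then remains is the routine measurability and almost-everywhere bookkeeping attached to regular conditional probabilities, together with the freshness of the future noises verified in the first paragraph, which is the fact that makes the whole update Markovian in $\Pi_t$.
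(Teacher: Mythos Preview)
Your proposal is correct and follows essentially the same approach as the paper's proof: both compute the joint conditional law of $(\VEC S_{t+1},\VEC Z_{t+1})$ given $\VEC C_t$, verify that this law depends on the strategy profile only through $\pi_t$ and the prescription pair $(\gamma^1_t,\gamma^2_t)$, and then disintegrate with respect to $\VEC z_{t+1}$ to obtain $\pi_{t+1}$. The paper writes the joint law as an explicit integral with indicator-function kernels for $\zeta_{t+1},\xi^i_{t+1}$ and transition/observation kernels, whereas you phrase the same computation as pushing the product measure $\Pi_t\otimes\mu^0_t\otimes\mu^1_{t+1}\otimes\mu^2_{t+1}$ forward through a map $\Phi_t$; your explicit conditional-independence paragraph simply makes precise the factorization the paper uses implicitly.
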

\begin{proof}
See Appendix \ref{app:piupdate}.
\end{proof}

\subsection{Strategy Independence of Beliefs}
A crucial assumption, which forms the basis of the analysis in this paper, is the following.
\begin{assumption}[Strategy Independence of Beliefs] \label{assm:lqgseparation}
At any time $t$ and for any realization of common information $\VEC c_t$, the conditional probability measure $\pi_t$ on the state $\VEC X_t$ and the private information $(\VEC P^1_t, \VEC P^2_t)$ given the common information does not depend on the choice of control laws. In particular, if $\VEC z_{t+1}$ is a realization of the increment in the common information at time step $t+1$, then $\pi_{t+1}$ evolves according to the equation
\beq{\label{eqn:pit1}\pi_{t+1} = F_t(\pi_t, \VEC z_{t+1}),}
where $F_t$ is a fixed transformation that does not depend on the control laws.
\end{assumption}

Assumption \ref{assm:lqgseparation} allows us to define the conditional belief $\pi_t$ without specifying the control laws used. Another important consequence of Assumption \ref{assm:lqgseparation} is that these conditional beliefs on the state and private information admit Gaussian density functions. We make this precise in the following lemma.

\begin{lemma}\label{lem:pitgaussian}
For any time step $t\in \{1,\ldots,T\}$ and any realization of common information $\VEC c_t$, the common information based conditional measure $\pi_t$ admits a Gaussian density function.
\end{lemma}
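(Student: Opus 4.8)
The plan is to argue by induction on $t$, using Assumption \ref{assm:lqgseparation} to sidestep the fact that the equilibrium control laws may be nonlinear. Assumption \ref{assm:lqgseparation} asserts that both the belief $\pi_t$ and the update map $F_t$ of Lemma \ref{lem:piupdate} are independent of the control laws; consequently, whenever I need to evaluate $\pi_t$ or apply $F_t$ I am free to substitute any admissible control laws, and I would substitute \emph{affine} ones (for concreteness the zero laws $\gamma^i_t\equiv 0$, or any fixed affine maps). Under affine control laws every action $\VEC U^i_t=\gamma^i_t(\VEC P^i_t)$ is an affine function of the private information, so, together with the linear dynamics \eqref{eq:lqgdynamics}, the observation equations \eqref{eq:lqgobserv}, and the mutual independence and Gaussianity of the primitive variables $\VEC X_1,\VEC W^{0:2}_{1:T-1}$, every quantity entering the recursion becomes an affine map of jointly Gaussian inputs.

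For the base case $t=1$ there are no past actions, so $\VEC C_1$ and $(\VEC X_1,\VEC P^1_1,\VEC P^2_1)$ are subvectors of the jointly Gaussian family $(\VEC X_1,\VEC Y^{1:2}_1)$; since the conditional law of one subvector of a Gaussian vector given another is Gaussian, $\pi_1$ admits a Gaussian density. For the inductive step, suppose $\pi_t$ is Gaussian. Evaluating $F_t$ under affine control laws, I would first form the joint conditional law, given $\VEC C_t$, of the tuple $(\VEC X_{t+1},\VEC P^1_{t+1},\VEC P^2_{t+1},\VEC Z_{t+1})$: the state propagates through \eqref{eq:lqgdynamics}, the private informations through \eqref{eq:lqgprivateinfo}, and the common increment through \eqref{eq:lqgcommoninfo}, each being an affine (projection) map of $(\VEC X_t,\VEC P^1_t,\VEC P^2_t)$ together with the fresh independent Gaussian noises $\VEC W^0_t,\VEC W^{1:2}_{t+1}$. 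Because $\pi_t$ is Gaussian by hypothesis and these maps are affine with additive independent Gaussian noise, this joint conditional law is Gaussian. Finally, $\pi_{t+1}=F_t(\pi_t,\VEC z_{t+1})$ is obtained by conditioning this joint Gaussian law on the realized value $\VEC Z_{t+1}=\VEC z_{t+1}$ of \eqref{eqn:pit1}, and conditioning a jointly Gaussian vector on one of its (affine-Gaussian) components again yields a Gaussian law whose covariance is independent of the conditioning value and whose mean is affine in it. Hence $\pi_{t+1}$ is Gaussian, completing the induction.

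The main obstacle, and the reason Assumption \ref{assm:lqgseparation} is indispensable, is the presence of the control actions $\VEC U^j_s$ inside the common and private information: under the actual (possibly nonlinear) strategies these actions are non-Gaussian functions of the observations, and the conditioning step would not preserve Gaussianity. Invoking strategy independence to replace them with affine laws is precisely what removes this difficulty. A secondary, purely technical point is degeneracy: once actions are affine functions of other coordinates, the resulting Gaussian may be supported on a proper affine subspace of $\ALP S_t$, so ``Gaussian density'' should be read as a (possibly degenerate) Gaussian measure characterized by a mean and covariance. I would handle the degenerate conditioning using the Gaussian conditioning formula with a Moore--Penrose pseudo-inverse of the innovation covariance, verifying in passing that the mean and covariance of $\pi_{t+1}$ are well-defined functions of $(\pi_t,\VEC z_{t+1})$, consistent with the map $F_t$ furnished by Lemma \ref{lem:piupdate}.
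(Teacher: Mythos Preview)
Your proposal is correct and follows essentially the same approach as the paper: invoke Assumption~\ref{assm:lqgseparation} to replace the actual control laws by affine ones, under which all relevant variables become jointly Gaussian so that conditioning yields a Gaussian $\pi_t$. The only structural difference is that the paper argues in one shot (under affine laws the triple $(\VEC X_t,\VEC P^{1:2}_t,\VEC C_t)$ is jointly Gaussian for every $t$, and full-support noise guarantees every $\VEC c_t$ is realizable), whereas you carry out an explicit induction through the strategy-independent update map $F_t$; both routes rest on the same key idea.
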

\begin{proof}
See Appendix \ref{app:pitgaussian}.
\end{proof}

We henceforth call $\pi_t$ as {\it common information based conditional belief}. In the next subsection, we prove a result on the evolution of the mean and variance of the common information based conditional beliefs.

\subsection{Evolution of Conditional Beliefs}
Since the common information based conditional belief $\pi_t$ admits a Gaussian density at any time step $t$, $\pi_t$ is completely characterized by its mean ${\VEC m}_t$ and the covariance matrix $\Sigma_t$. The conditional covariance of a collection of jointly Gaussian random variables is data independent. 
Assumption \ref{assm:lqgseparation} allows us to derive the following result for game {\bf G1}.
\begin{lemma}\label{lemma:lqgmean}
The evolution of the conditional mean \[\VEC M_t := (\VEC M^0_t, \VEC M^1_t, \VEC M^2_t) =(\mathds{E}[\VEC X_t| \VEC C_t],\mathds{E}[\VEC P^1_t| \VEC C_t],\mathds{E}[\VEC P^2_t| \VEC C_t])\]  of the density function of common information based conditional belief is given as
\begin{equation}
{\VEC M}_{t+1} = F^1_t({\VEC M}_t,\VEC Z_{t+1}), \label{eq:lqgevolution1}
\end{equation}
where $F^1_t$ is a fixed affine transformation that does not depend on the strategies of the controllers. The evolution of conditional covariance matrix $\Sigma_t$ is given as
\begin{equation}
\Sigma_{t+1} = F^2_t(\Sigma_t), \label{eq:lqgevolution2}
\end{equation}
where $F^2_t$ is a fixed  transformation that does not depend on the strategies of the controllers.
\end{lemma}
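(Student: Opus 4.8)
The plan is to use Assumption \ref{assm:lqgseparation} to reduce the computation to the case of affine control laws, for which game {\bf G1} becomes a standard linear--Gaussian filtering problem, and then to read off the two recursions from the classical conditioning formula for jointly Gaussian vectors. The starting point is what is already in hand: by Lemma \ref{lem:pitgaussian} the belief $\pi_t$ is Gaussian, hence completely described by the pair $(\VEC M_t,\Sigma_t)$, and by Assumption \ref{assm:lqgseparation} the update $\pi_{t+1}=F_t(\pi_t,\VEC z_{t+1})$ is a fixed map that does not depend on the control laws. Consequently the induced map $(\VEC M_t,\Sigma_t,\VEC Z_{t+1})\mapsto(\VEC M_{t+1},\Sigma_{t+1})$ is itself strategy independent and may therefore be evaluated under any convenient admissible strategies. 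This reduction immediately delivers the ``does not depend on the strategies'' clauses; what remains is only to pin down the functional form of the two recursions, and for that I am free to pick the most tractable laws.

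Next I would carry out the computation under affine control laws $\VEC U^i_t=L^i_t\VEC P^i_t+\ell^i_t$. With such laws the augmented state $\VEC S_{t+1}=(\VEC X_{t+1},\VEC P^1_{t+1},\VEC P^2_{t+1})$ is an affine function of $\VEC S_t$ and the primitive noises: equation \eqref{eq:lqgdynamics} makes $\VEC X_{t+1}$ affine in $(\VEC S_t,\VEC W^0_t)$, equation \eqref{eq:lqgobserv} makes $\VEC Y^i_{t+1}$ affine in $(\VEC S_t,\VEC W^0_t,\VEC W^i_{t+1})$, and since $\xi^i_{t+1}$ and $\zeta_{t+1}$ in \eqref{eq:lqgprivateinfo} and \eqref{eq:lqgcommoninfo} are projections (hence linear), both $\VEC P^i_{t+1}$ and $\VEC Z_{t+1}$ are affine in $(\VEC S_t,\VEC W^0_t,\VEC W^1_{t+1},\VEC W^2_{t+1})$. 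Because $\VEC S_t$ given $\VEC C_t$ is $\mathcal{N}(\VEC M_t,\Sigma_t)$ and the noises are independent zero-mean Gaussians independent of $\VEC S_t$, the pair $(\VEC S_{t+1},\VEC Z_{t+1})$ is jointly Gaussian conditioned on $\VEC C_t$, with mean an affine image of $\VEC M_t$ and joint covariance an affine image of $\Sigma_t$ together with the fixed noise covariances, the latter being independent of $\VEC M_t$.

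I would then apply the classical conditioning formula for jointly Gaussian vectors to $(\VEC S_{t+1},\VEC Z_{t+1})$ given $\VEC C_t$. Writing $\VEC M_{t+1}=\mathds{E}[\VEC S_{t+1}\mid\VEC C_t,\VEC Z_{t+1}]$, this formula expresses $\VEC M_{t+1}$ as $\mathds{E}[\VEC S_{t+1}\mid\VEC C_t]$ plus a fixed gain times $(\VEC Z_{t+1}-\mathds{E}[\VEC Z_{t+1}\mid\VEC C_t])$; since both conditional means are affine in $\VEC M_t$ and the gain depends only on the joint second-order statistics, i.e. only on $\Sigma_t$, this is exactly a fixed affine map $F^1_t(\VEC M_t,\VEC Z_{t+1})$, giving \eqref{eq:lqgevolution1}. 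The conditional covariance $\Sigma_{t+1}$ produced by the same formula is, by the data-independence of the conditional covariance of jointly Gaussian variables, a function of the joint second-order statistics alone, hence of $\Sigma_t$ only, giving $\Sigma_{t+1}=F^2_t(\Sigma_t)$ as in \eqref{eq:lqgevolution2}. Invoking the strategy independence established in the first step, these two recursions are the required maps for every admissible strategy profile.

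I expect the main obstacle to be legitimizing the reduction to affine control laws, i.e. arguing that the generally nonlinear dependence of $\VEC U^i_t$ on the private information $\VEC P^i_t$ does not obstruct the conclusion. This is precisely where Assumption \ref{assm:lqgseparation} is essential: without strategy independence, a nonlinear law would make the conditional mean of $\VEC X_{t+1}$ a nonlinear functional of $\VEC S_t$ and would destroy both the joint Gaussianity and the affineness used above. Strategy independence lets me replace the actual (possibly nonlinear) laws by affine ones when \emph{deriving the form} of $F^1_t$ and $F^2_t$, while the realized increment $\VEC Z_{t+1}$ fed into $F^1_t$ remains the one generated by the true strategies. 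Care must also be taken to note that the affine map $F^1_t$ is genuinely affine in $(\VEC M_t,\VEC Z_{t+1})$ jointly, with the constant terms arising from the control offsets $\ell^i_t$ and from the subtracted predicted increment $\mathds{E}[\VEC Z_{t+1}\mid\VEC C_t]$.
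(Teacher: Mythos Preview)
Your proposal is correct and follows essentially the same route as the paper: reduce to affine control laws via Assumption~\ref{assm:lqgseparation}, obtain joint Gaussianity of $(\VEC S_{t+1},\VEC Z_{t+1})$ given $\VEC C_t$, and read off the affine mean recursion and data-independent covariance recursion from standard Gaussian conditioning. One small caveat: your closing remark that the constant terms in $F^1_t$ ``arise from the control offsets $\ell^i_t$'' is in tension with the strategy independence you correctly invoked at the outset---since $F^1_t$ does not depend on the laws, any apparent dependence on $\ell^i_t,L^i_t$ must cancel, and Assumption~\ref{assm:lqgseparation} is precisely what guarantees this.
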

\begin{proof}
See Appendix \ref{app:lqgmean}.
\end{proof}


Examples of several classes of games that satisfy Assumptions \ref{assm:lqginfoevolution} and \ref{assm:lqgseparation} are given in \cite{nayyar2012a}. For example, if each controller acquires the realizations of the observations and the actions of the other controller with zero or one-step delay, then the corresponding game satisfies Assumptions \ref{assm:lqginfoevolution} and \ref{assm:lqgseparation}.


\section{Main Results}\label{sec:main}
Following the approach introduced in \cite{nayyar2012a}, we now construct a new game \textbf{G2} with two virtual players, where at every time step $t\in\{1,\ldots,T-1\}$, each virtual player observes the common information $\VEC C_t$, but not the private information of the controllers. Since the common information is nested (by Assumption \ref{assm:lqginfoevolution}), game {\bf G2} is a game of perfect recall. This game is intricately related to game {\bf G1}, and we exploit the symmetric information structure of game {\bf G2} to devise a computational scheme to compute a Nash equilibrium of game {\bf G1}. The steps taken to devise the scheme are as follows:
\begin{enumerate}
\item We formulate game {\bf G2} in the next three subsections. Further, we show that the common information based conditional mean $\VEC M_t$ is a Markov state of game {\bf G2} at time $t$.
\item In Subsection \ref{sub:relation}, we show that any Nash equilibrium of game {\bf G2} can be used to obtain a Nash equilibrium of game {\bf G1}, and vice-versa.
\item We focus on Markov perfect equilibria of game {\bf G2} and provide a backward induction characterization of such equilibria. An equilibrium of game {\bf G1} obtained from a Markov perfect equilibrium of game {\bf G2} is called {\it common information based Markov perfect equilibrium} of game {\bf G1}.
\item We interpret the backward induction characterization of common information based Markov perfect equilibrium in terms of a sequence of one-stage Bayesian games.
\end{enumerate}

We now turn our attention to formulating game {\bf G2}. At time step $t$ and for each realization $\VEC c_t$ of the common information, virtual player $i$ selects a measurable function $\gamma^i_t:\ALP P^i_t\rightarrow\ALP U^i_t$. The action space of virtual player $i$ at time $t$ is denoted by $\ALP A^i_t$, and it is defined as
\beq{\label{eqn:alpAit}\ALP A^i_t :=\{\gamma^i_t:\ALP P^i_t\rightarrow\ALP U^i_t \text{ such that }\gamma^i_t \text{ is a Borel measurable map}\}.}

We call the actions taken by virtual players as ``prescriptions'' due to the following reason: After observing the common information $\VEC c_t$ at time step $t$, virtual player $i\in\{1,2\}$ computes the equilibrium prescription $\gamma^i_t\in\ALP A^i_t$, and prescribes it to Controller $i$. The controllers evaluate the prescriptions based on the realizations of their private informations, to compute their actions at that time step.


\subsection{Admissible Strategies of Virtual Players in Game {\bf G2}}
A map $\chi^i_t:\ALP C_t\rightarrow \ALP A^i_t$ denotes the control law of virtual player $i\in\{1,2\}$ at time step $t\in\{1,\ldots,T-1\}$. The control law $\chi^i_t$ maps common information at time $t$ to a prescription, which itself maps private information of Controller $i$ at time $t$ to the control action of Controller $i$. Thus, a choice of $\chi^i_t$ induces a map from $\ALP C_t\times\ALP P^i_t$ to $\ALP U^i_t$, which we denote by $\chi^i_t(\cdot)(\cdot)$. We say $\chi^i_t$ is admissible if
\beqq{\chi^i_t(\cdot)(\cdot) \quad  \text{ is a Borel measurable function from } \ALP C_t\times \ALP P^i_t \text{ to } \ALP U^i_t. }
The set of all such admissible control laws is denoted by $\ALP H^i_t$, $i\in\{1,2\}$ and $t\in\{1,\ldots,T-1\}$. The collection of control laws at all time steps of virtual player $i$ is called the strategy of that virtual player, and it is denoted by $\chi^i :=\{\chi^i_1,\ldots,\chi^i_{T-1}\}$. The space of all strategies of the virtual player $i$, denoted by $\ALP H^i_{1:T-1}$, is called the strategy space of that virtual player. A strategy tuple $(\chi^1,\chi^2)$ is called the strategy profile of virtual players. 

\begin{definition}\label{def:var}
For $i\in\{1,2\}$ and $t\in\{1,\ldots,T-1\}$, let $\varrho^i_t:\ALP H^i_t\rightarrow\ALP G^i_t$ be an operator that takes a function $\chi^i_t:\ALP C_t\rightarrow \ALP A^i_t$ as its input and returns a measurable function $g^i_t:\ALP P^i_t\times \ALP C_t\rightarrow\ALP U^i_t$ as its output, that is $ g^i_t = \varrho^i_t(\chi^i_t)$, such that $g^i_t(\VEC p^i_t,\VEC c_t) := \chi^i_t(\VEC c_t)(\VEC p^i_t)$ for all $\VEC c_t\in\ALP C_t$ and $\VEC p^i_t\in\ALP P^i_t$. For a collection of functions $\chi^i :=\{\chi^i_1,\ldots,\chi^i_{T-1}\}$, let $\varrho^i(\chi^i)$ be defined as the set $\{\varrho^i_1(\chi^i_1),\ldots,\varrho^i_{T-1}(\chi^i_{T-1})\}$.

Similarly, we let $\varsigma^i_t:\ALP G^i_t\rightarrow\ALP H^i_t$ be the operator such that $\varsigma^i_t\circ \varrho^i_t = \text{id}_{\ALP H^i_t}$ and $\varrho^i_t\circ \varsigma^i_t = \text{id}_{\ALP G^i_t}$. Thus, for $g^i_t\in\ALP G^i_t$, if $\chi^i_t=\varsigma^i_t(g^i_t)$, then $\chi^i_t(\VEC c_t)(\VEC p^i_t) := g^i_t(\VEC p^i_t,\VEC c_t)$ for all $\VEC c_t\in\ALP C_t$ and $\VEC p^i_t\in\ALP P^i_t$. Similar to the expression above, for a collection of functions $\VEC g^i :=\{g^i_1,\ldots,g^i_{T-1}\}$, let $\varsigma^i(\VEC g^i)$ be defined as the set $\{\varsigma^i_1(g^i_1),\ldots,\varsigma^i_{T-1}(g^i_{T-1})\}$.{\hfill $\Box$}
\end{definition}
%

\subsection{Cost Functions for Virtual Players}
The cost functions of the virtual players are defined as follows: Fix a time step $t\in\{1,\ldots,T-1\}$ and a virtual player $i$. Let $\pi$ denote a normal distribution on the space $\ALP S_t = \ALP X_t\times\ALP P^1_t\times\ALP P^2_t$ with mean $\VEC m$ and variance $\Sigma_t$, where $\Sigma_t$ is given by the result in Lemma \ref{lemma:lqgmean}, and let $(\gamma^1,\gamma^2)$ be a prescription pair chosen by the virtual players. Define the cost function $\tilde{c}^i_t:\ALP S_t\times\ALP A^1_t\times\ALP A^2_t\rightarrow\Re_+$ of virtual player $i$ at that time step $t\in\{1,\ldots,T-1\}$ to be 
\beqq{\tilde{c}^i_t(\VEC m,\gamma^1,\gamma^2) = \int_{\ALP S_t} c^i_t(\VEC x_t,\gamma^1(\VEC p^1_t),\gamma^2(\VEC p^2_t))\pi(d\VEC s_t),}
where one can view $\VEC x_t,\VEC p^1_t$ and $\VEC p^2_t$ as appropriate projections of the variable $\VEC s_t$. We now define the cost function of the virtual players at the final time step. Let $\pi$ be a Gaussian distribution with mean $\VEC m$ and variance $\Sigma_T$. The cost functions of the virtual players at the final time step is $\tilde{c}^i_T(\VEC m) = \int_{\ALP X_T} c^i_T(\VEC x_T)\pi(d\VEC x_T)$. The total cost for virtual player $i$ is given by
\beqq{\tilde J^i(\chi^1,\chi^2) = \ex{\tilde{c}^i_T(\VEC M_T)+\sum_{t=1}^{T-1}\tilde{c}^i_t(\VEC M_t,\Gamma^1_t,\Gamma^2_t)},} 
where the expectation is taken with respect to the probability measure induced on the mean of common information based conditional belief by the choice of strategies $(\chi^1,\chi^2)$. We have the following claim about the expected cost of game {\bf G2} given a pair of strategy profiles of controllers in game {\bf G1} and {\it vice versa}.

\begin{lemma}\label{lem:vpcost}
Let $(\VEC g^1,\VEC g^2)\in\ALP G^1_{1:T-1}\times\ALP G^2_{1:T-1}$, and let $(\chi^1,\chi^2)$ be defined as $\chi^i:=\varsigma^i(\VEC g^i), i=1,2$. Then, $J^i(\VEC g^1,\VEC g^2) = \tilde J^i(\chi^1,\chi^2)$ for $i=1,2$.

Conversely, let $(\chi^1,\chi^2)\in\ALP H^1_{1:T-1}\times\ALP H^2_{1:T-1}$, and let $(\VEC g^1,\VEC g^2)$ be defined as $\VEC g^i:=\varrho^i(\chi^i), i=1,2$. Then, $\tilde J^i(\chi^1,\chi^2)=J^i(\VEC g^1,\VEC g^2)$ for $i=1,2$.
\end{lemma}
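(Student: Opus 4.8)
The plan is to prove the forward identity $J^i(\VEC g^1,\VEC g^2)=\tilde J^i(\chi^1,\chi^2)$ stage by stage, and then obtain the converse essentially for free: since $\varrho^i_t$ and $\varsigma^i_t$ are mutually inverse (Definition~\ref{def:var}), both directions rest on the single relation $g^i_t(\VEC p^i_t,\VEC c_t)=\chi^i_t(\VEC c_t)(\VEC p^i_t)$. Because every $c^i_t$ is non-negative, Tonelli's theorem guarantees that all the expectations and iterated expectations below are well defined in $[0,\infty]$ and may be freely rearranged, so I need not argue integrability separately. By linearity of expectation it then suffices to match the two cost expressions term by term: for each $t\in\{1,\dots,T-1\}$ I will show $\mathds{E}[c^i_t(\VEC X_t,\VEC U^1_t,\VEC U^2_t)]=\mathds{E}[\tilde c^i_t(\VEC M_t,\Gamma^1_t,\Gamma^2_t)]$, and for the terminal stage $\mathds{E}[c^i_T(\VEC X_T)]=\mathds{E}[\tilde c^i_T(\VEC M_T)]$.

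The key step is to condition on the common information. Under $\VEC g^i=\varrho^i(\chi^i)$ the realized action is $\VEC U^i_t=g^i_t(\VEC P^i_t,\VEC C_t)=\chi^i_t(\VEC C_t)(\VEC P^i_t)=\Gamma^i_t(\VEC P^i_t)$, where $\Gamma^i_t=\chi^i_t(\VEC C_t)$ is the $\VEC C_t$-measurable random prescription, so $c^i_t(\VEC X_t,\VEC U^1_t,\VEC U^2_t)=c^i_t(\VEC X_t,\Gamma^1_t(\VEC P^1_t),\Gamma^2_t(\VEC P^2_t))$. Applying the tower property with respect to $\VEC C_t$, and using that $\Gamma^1_t,\Gamma^2_t$ and $\VEC M_t$ are all functions of $\VEC C_t$, the inner conditional expectation at a realization $\VEC C_t=\VEC c_t$ equals $\int_{\ALP S_t}c^i_t(\VEC x_t,\gamma^1_t(\VEC p^1_t),\gamma^2_t(\VEC p^2_t))\,\pi_t(d\VEC s_t)$, where $\pi_t=\PR\{d\VEC s_t\mid\VEC c_t\}$ is the regular conditional law of $(\VEC X_t,\VEC P^1_t,\VEC P^2_t)$ given $\VEC c_t$.

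It remains to recognize this integral as $\tilde c^i_t(\VEC M_t,\Gamma^1_t,\Gamma^2_t)$ evaluated at $\VEC c_t$, and this is exactly where the structural hypotheses enter. By Assumption~\ref{assm:lqgseparation} the belief $\pi_t$ does not depend on the control laws; by Lemma~\ref{lem:pitgaussian} it is Gaussian; and by Lemma~\ref{lemma:lqgmean} its covariance is the strategy-independent matrix $\Sigma_t$ while its mean is exactly $\VEC M_t$. Hence $\pi_t$ is precisely the Gaussian of mean $\VEC M_t$ and covariance $\Sigma_t$ used to define $\tilde c^i_t$, so the inner integral is by definition $\tilde c^i_t(\VEC M_t,\gamma^1_t,\gamma^2_t)=\tilde c^i_t(\VEC M_t,\Gamma^1_t,\Gamma^2_t)\big|_{\VEC C_t=\VEC c_t}$. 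Taking the outer expectation over $\VEC C_t$ yields the per-stage identity; the terminal stage is the identical computation with the marginal belief on $\VEC X_T$. Summing the stages gives $J^i(\VEC g^1,\VEC g^2)=\tilde J^i(\chi^1,\chi^2)$, and the converse is the same chain of equalities read with the roles of $\varrho^i$ and $\varsigma^i$ interchanged.

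The main obstacle I anticipate is the clean identification in the third paragraph: one must verify that the abstract quantity $\tilde c^i_t(\VEC M_t,\Gamma^1_t,\Gamma^2_t)$, defined as an integral against a generic Gaussian of mean $\VEC M_t$ and covariance $\Sigma_t$, coincides pointwise in $\VEC c_t$ with the genuine regular conditional expectation of the stage cost given $\VEC C_t$. Making this rigorous requires the joint measurability of $\chi^i_t(\cdot)(\cdot)$ (guaranteed by admissibility) so that $\Gamma^i_t(\VEC P^i_t)$ is a bona fide random variable, together with the fact, supplied by Assumption~\ref{assm:lqgseparation} and Lemmas~\ref{lem:pitgaussian}--\ref{lemma:lqgmean}, that the realized $\pi_t$ really is the named Gaussian independent of the strategies actually in play. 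Once these are in place, the remainder is routine bookkeeping with the tower property.
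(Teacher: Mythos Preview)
Your proposal is correct and follows essentially the same route as the paper's proof: condition each stage cost on $\VEC C_t$, rewrite the actions as $\Gamma^i_t(\VEC P^i_t)$, recognize the inner conditional expectation as $\tilde c^i_t(\VEC M_t,\Gamma^1_t,\Gamma^2_t)$, and then take the outer expectation and sum. You are in fact more explicit than the paper about why the realized $\pi_t$ coincides with the Gaussian used in the definition of $\tilde c^i_t$ (invoking Assumption~\ref{assm:lqgseparation} and Lemmas~\ref{lem:pitgaussian}--\ref{lemma:lqgmean}) and about the Tonelli justification, which the paper leaves implicit.
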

\begin{proof}
See Appendix \ref{app:vpcost}.
\end{proof}

\subsection{A Markov State of Game {\bf G2}}
Recall from Lemma \ref{lemma:lqgmean} that given a realization $\VEC c_t$ of common information, the common information based conditional belief $\mathds{P}\{d\VEC s_t|\VEC c_t\}$ admits a Gaussian density function with mean $\VEC m_t$ and variance $\Sigma_t$. Our next result is that the mean $\VEC M_t$ is a controlled Markov chain, and it is controlled by the actions taken (that is, the prescriptions chosen) by the virtual players.

\begin{lemma}\label{lemma:lqgmarkov}
The process $\{\VEC M_t\}_{t\in\{1,\ldots,T\}}$ is a controlled Markov process with the virtual players' prescriptions as the controlling actions. In particular, conditioned on the realization  $\VEC m_t$ of  $\VEC M_t$ and the prescriptions $(\gamma^1_t, \gamma^2_t)$ of the virtual players, the conditional mean $\VEC M_{t+1}$ at the next time step is independent of the current common information, the past conditional means and past prescriptions. Equivalently, this fact is expressed as
\beqq{\mathds{P}\{\VEC M_{t+1}\in \SF M_{t+1}|\VEC c_t, \VEC m_{1:t},\gamma^{1:2}_{1:t}\} = \mathds{P}\{\VEC M_{t+1}\in \SF M_{t+1}|\VEC m_{t},\gamma^{1:2}_t\}}
for all Borel sets $\SF M_{t+1}\subset \ALP S_{t+1}$.
\end{lemma}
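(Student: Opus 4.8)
The plan is to reduce the Markov property to a statement about the conditional law of the common-information increment $\VEC Z_{t+1}$, and then exploit the two features established earlier: the deterministic update of $\VEC M_t$ and the strategy-independent Gaussian belief. By Lemma \ref{lemma:lqgmean} we have the relation $\VEC M_{t+1} = F^1_t(\VEC M_t, \VEC Z_{t+1})$ with $F^1_t$ a fixed affine map. Hence, given $\VEC M_t = \VEC m_t$, the event $\{\VEC M_{t+1}\in\SF M_{t+1}\}$ is exactly the event $\{\VEC Z_{t+1}\in\{\VEC z : F^1_t(\VEC m_t,\VEC z)\in\SF M_{t+1}\}\}$. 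Thus it suffices to show that the conditional law of $\VEC Z_{t+1}$ given $(\VEC c_t, \VEC m_{1:t}, \gamma^{1:2}_{1:t})$ depends on the conditioning variables only through $(\VEC m_t, \gamma^{1:2}_t)$.

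First I would write $\VEC Z_{t+1}$ explicitly in terms of primitive quantities. Substituting the observation equation \eqref{eq:lqgobserv}, the dynamics \eqref{eq:lqgdynamics}, and the prescription relation $\VEC U^i_t = \gamma^i_t(\VEC P^i_t)$ into the projection \eqref{eq:lqgcommoninfo}, one obtains a measurable map $\Phi_t$ with
\[\VEC Z_{t+1} = \Phi_t\big(\VEC S_t, \VEC W^0_t, \VEC W^1_{t+1}, \VEC W^2_{t+1}; \gamma^1_t, \gamma^2_t\big),\]
where $\VEC S_t = (\VEC X_t, \VEC P^1_t, \VEC P^2_t)$. Crucially, $\Phi_t$ depends on the prescriptions only through the current pair $(\gamma^1_t,\gamma^2_t)$; the past prescriptions enter $\VEC Z_{t+1}$ solely through their effect on $\VEC S_t$ itself.

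The heart of the argument is to identify the conditional joint law of $(\VEC S_t, \VEC W^0_t, \VEC W^1_{t+1}, \VEC W^2_{t+1})$ given the full past $(\VEC c_t, \VEC m_{1:t}, \gamma^{1:2}_{1:t})$. By Assumption \ref{assm:lqgseparation} together with Lemma \ref{lem:pitgaussian}, the conditional law of $\VEC S_t$ given $\VEC c_t$ is the Gaussian belief $\mathcal N(\VEC m_t, \Sigma_t)$; since $\Sigma_t$ is data-independent (a fixed quantity by Lemma \ref{lemma:lqgmean}), this law is determined entirely by the realization $\VEC m_t$. Next I would verify that the fresh noises $(\VEC W^0_t, \VEC W^1_{t+1}, \VEC W^2_{t+1})$ are independent of $\VEC S_t$ and of the entire past, and retain their fixed Gaussian law under the conditioning. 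This follows from the mutual independence of the primitive random variables once one observes that $\VEC c_t$, $\VEC m_{1:t}$, $\gamma^{1:2}_{1:t}$ and $\VEC S_t$ are all measurable with respect to the $\sigma$-algebra generated by $\{\VEC X_1, \VEC W^0_{1:t-1}, \VEC W^{1:2}_{1:t}\}$, a collection disjoint from $\{\VEC W^0_t, \VEC W^1_{t+1}, \VEC W^2_{t+1}\}$. Consequently the conditional joint law factors as $\mathcal N(\VEC m_t, \Sigma_t)$ for $\VEC S_t$ times the fixed noise law, and depends on the conditioning only through $\VEC m_t$.

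Combining these observations, pushing the product law forward through $\Phi_t(\,\cdot\,;\gamma^1_t,\gamma^2_t)$ yields a conditional law of $\VEC Z_{t+1}$ given by a fixed stochastic kernel evaluated at $(\VEC m_t, \gamma^{1:2}_t)$; pushing this further through $\{\VEC z : F^1_t(\VEC m_t,\VEC z)\in\SF M_{t+1}\}$ gives $\mathds{P}\{\VEC M_{t+1}\in\SF M_{t+1}\mid \VEC c_t, \VEC m_{1:t},\gamma^{1:2}_{1:t}\} = \kappa_t(\SF M_{t+1};\VEC m_t,\gamma^{1:2}_t)$ for a fixed kernel $\kappa_t$. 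Since the right-hand side is already measurable with respect to the smaller $\sigma$-algebra generated by $(\VEC m_t,\gamma^{1:2}_t)$, the tower property gives that conditioning on $(\VEC m_t,\gamma^{1:2}_t)$ alone returns the same value, which is the claimed identity. The step I expect to be the main obstacle is the conditional-independence claim for the fresh noises: one must track precisely which primitive noises enter $\VEC S_t$ and the past common information, and appeal to the nestedness of the common information from Assumption \ref{assm:lqginfoevolution}, to conclude that $\VEC c_t$ (and the derived $\VEC m_{1:t},\gamma^{1:2}_{1:t}$) carries no information about $(\VEC W^0_t, \VEC W^1_{t+1}, \VEC W^2_{t+1})$.
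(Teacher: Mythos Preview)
Your argument is correct and follows essentially the same route as the paper's proof: both reduce the question to the conditional law of $\VEC Z_{t+1}$ via $\VEC M_{t+1}=F^1_t(\VEC M_t,\VEC Z_{t+1})$, express $\VEC Z_{t+1}$ as a function of $\VEC S_t$, the current prescriptions, and fresh randomness, and then invoke that the conditional law of $\VEC S_t$ given the full past is $\mathcal N(\VEC m_t,\Sigma_t)$ while the fresh randomness is independent. The only cosmetic difference is that the paper writes out the factorized integral using the transition kernels $\mathds{P}\{d\VEC x_{t+1}\mid \VEC x_t,\gamma^1_t(\VEC p^1_t),\gamma^2_t(\VEC p^2_t)\}$ and $\mathds{P}\{d\VEC y^{1:2}_{t+1}\mid \VEC x_{t+1}\}$ rather than naming the noise variables $\VEC W^0_t,\VEC W^{1:2}_{t+1}$ directly, but this is the same computation in different notation.
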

\begin{proof}
See Appendix \ref{app:lqgmarkov}.
\end{proof}

It should be noted that the update equation of the Markov process $\VEC M_t$ is induced by the state dynamics, observation equations, and information structure of the controllers of game {\bf G1}. The main differences between the structures of the two games {\bf G1} and {\bf G2} are summarized in the table below.\\

\begin{center}
\renewcommand{\arraystretch}{1.5}
\begin{tabular}{|l|c|c|}
\hline
At time step $t\in\{1,\ldots,T-1\}$ & Game {\bf G1} & Game {\bf G2}\\\hline
State of the game& $\VEC X_t\in\ALP X_t$ & $\VEC M_t\in\ALP S_t$\\\hline
Action of Player $i$ & $\VEC U^i_t\in\ALP U^i_t$ & $\gamma^i_t\in\ALP A^i_t$\\\hline
Information of Player $i$ & $\VEC I^i_t\in\ALP I^i_t$ & $\VEC C_t\in\ALP C_t$\\\hline
Cost function of Player $i$ & $c^i_t$ & $\tilde c^i_t$\\
\hline
\end{tabular}
\end{center}
\vspace{10pt}

Since both virtual players observe the same information (the common information between the controllers) and the common information always increases by Assumption \ref{assm:lqginfoevolution}, game {\bf G2} between virtual players is a game of symmetric information with perfect recall. 

\subsection{Relation between Games {\bf G1} and {\bf G2}}\label{sub:relation}
In the next theorem, we show that any Nash equilibrium of game {\bf G1} can be used to compute a Nash equilibrium for game {\bf G2} and vice versa.
\begin{theorem}\label{thm:lqgequiv}
Let $(\chi^{1\star}, \chi^{2\star})$ be a Nash equilibrium strategy profile of game \textbf{G2}. Then, the strategy profile $(\mathbf g^{1\star}, \mathbf g^{2\star})$ for game \textbf{G1}, defined as $\VEC g^{i\star} =\varrho^i(\chi^{i\star}), i=1,2$, forms a Nash equilibrium strategy profile of game \textbf{G1}. Conversely, if $(\mathbf g^{1\star}, \mathbf g^{2\star})$ is a Nash equilibrium strategy profile of game {\bf G1}, then the strategy profile $(\chi^{1\star}, \chi^{2\star})$, defined by $\chi^{i\star}:=\varsigma^i(\VEC g^{i\star}), i=1,2,$ is a Nash equilibrium strategy profile for game {\bf G2}.
\end{theorem}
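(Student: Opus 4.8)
The plan is to reduce the statement to pure bookkeeping of inequalities, leveraging two facts already in place: first, that the operators $\varrho^i$ and $\varsigma^i$ of Definition \ref{def:var} are mutually inverse bijections between $\ALP G^i_{1:T-1}$ and $\ALP H^i_{1:T-1}$ acting \emph{separately} on each player's strategy; and second, that Lemma \ref{lem:vpcost} gives an exact identity between the costs $J^i$ and $\tilde J^i$ under this correspondence. Since a Nash equilibrium is characterized entirely through comparisons of costs across unilateral deviations, a cost-preserving bijection of the strategy spaces that respects the player-wise product structure must carry equilibria of one game to equilibria of the other. No new estimates are required.

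For the forward direction I would start from a Nash equilibrium $(\chi^{1\star},\chi^{2\star})$ of game {\bf G2} and set $\VEC g^{i\star}:=\varrho^i(\chi^{i\star})$. To verify the first Nash inequality for {\bf G1}, fix an arbitrary deviation $\tilde{\VEC g}^1\in\ALP G^1_{1:T-1}$ and let $\tilde\chi^1:=\varsigma^1(\tilde{\VEC g}^1)$; since $\varrho^1\circ\varsigma^1$ is the identity, we have $\tilde{\VEC g}^1=\varrho^1(\tilde\chi^1)$, so both profiles $(\VEC g^{1\star},\VEC g^{2\star})$ and $(\tilde{\VEC g}^1,\VEC g^{2\star})$ are exactly the $\varrho$-images of the virtual-player profiles $(\chi^{1\star},\chi^{2\star})$ and $(\tilde\chi^1,\chi^{2\star})$. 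Applying the converse half of Lemma \ref{lem:vpcost} to each profile and then invoking the {\bf G2} Nash inequality $\tilde J^1(\chi^{1\star},\chi^{2\star})\le\tilde J^1(\tilde\chi^1,\chi^{2\star})$ yields
\[
J^1(\VEC g^{1\star},\VEC g^{2\star}) = \tilde J^1(\chi^{1\star},\chi^{2\star}) \leq \tilde J^1(\tilde\chi^1,\chi^{2\star}) = J^1(\tilde{\VEC g}^1,\VEC g^{2\star}).
\]
As $\tilde{\VEC g}^1$ was arbitrary, the first Nash inequality for {\bf G1} holds, and the inequality for Controller $2$ follows by the symmetric argument, so $(\VEC g^{1\star},\VEC g^{2\star})$ is a Nash equilibrium of {\bf G1}.

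The converse direction is entirely symmetric: starting from a Nash equilibrium $(\VEC g^{1\star},\VEC g^{2\star})$ of {\bf G1}, set $\chi^{i\star}:=\varsigma^i(\VEC g^{i\star})$, and for an arbitrary $\tilde\chi^1\in\ALP H^1_{1:T-1}$ put $\tilde{\VEC g}^1:=\varrho^1(\tilde\chi^1)$, with the identity $\varsigma^1\circ\varrho^1=\text{id}$ playing the same role before chaining through Lemma \ref{lem:vpcost}. I expect the only point requiring genuine care --- and hence the main obstacle --- to be the handling of \emph{unilateral} deviations: one must confirm (i) that as $\tilde{\VEC g}^1$ ranges over all of $\ALP G^1_{1:T-1}$ its image $\varsigma^1(\tilde{\VEC g}^1)$ ranges over all of $\ALP H^1_{1:T-1}$, so no deviation available in one game is missed in the other, and (ii) that changing only one component leaves the other component's strategy fixed under the map, so that unilateral deviations map to unilateral deviations. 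Both are immediate consequences of $\varrho^i,\varsigma^i$ being mutually inverse and applied coordinate-wise in the player index, as set up in Definition \ref{def:var}; once these are noted, the equivalence is complete.
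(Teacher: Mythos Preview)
Your proposal is correct and follows essentially the same approach as the paper's own proof: both arguments pick an arbitrary unilateral deviation in one game, transport it through the bijection $\varrho^i\leftrightarrow\varsigma^i$, and then chain the cost identity of Lemma~\ref{lem:vpcost} with the Nash inequality in the other game to obtain the displayed string $J^1=\tilde J^1\le\tilde J^1=J^1$. Your explicit remarks about surjectivity of $\varsigma^i$ and the coordinate-wise nature of the map are a bit more detailed than the paper's write-up, but the logical structure is identical.
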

\begin{proof}
See Appendix \ref{app:lqgequiv}.
\end{proof}

In light of the theorem above, we want to compute a Nash equilibrium of game {\bf G2}, and then project the solution back to the original game using the operators $\varrho^1$ and $\varrho^2$ as introduced in Definition \ref{def:var}.

Since at any time step $t$, both virtual players observe the common information $\VEC C_t$, the virtual players can compute the mean $\VEC M_t$ of the common information based conditional belief. Since the mean $\VEC M_t$ is a Markov state of game {\bf G2} and both virtual players know its realization, game {\bf G2} is a game of perfect state information. Also note that in game {\bf G2}, the cost functions of the virtual players are stagewise-additive. A natural solution concept to compute the Nash equilibrium of a game of perfect information with stagewise-additive cost function is Markov perfect equilibrium \cite{fudenberg1991}. We define the Markov perfect equilibrium of {\bf G2} in the next subsection and prove the main result of the section.

\subsection{Markov Perfect Equilibrium of Game {\bf G2}}
Fix virtual player $i$'s control laws $\chi^i_{1:T-1}$ such that the prescription at time step $t$ is only a function of $\VEC M_t$, say $\chi^i_t(\VEC C_t) = \psi^i_t(\VEC M_t)$ for some function $\psi^i_t:\ALP S_t\rightarrow\ALP A^i_t$, such that $\psi^i_t(\cdot)(\cdot)$ is a measurable function from $\ALP S_t\times\ALP P^i_t$ to $\ALP U^i_t$ at all time step $t\in\{1,\ldots,T-1\}$. Let us use $\bar{\ALP H}^i_t$ to denote the set of all such maps $\psi^i_t$, and note that $\bar{\ALP H}^i_t\subset\ALP H^i_t$. Then, virtual player $j$'s, $j\neq i$, (one-person) optimization problem is to minimize its stagewise additive cost functional that depends on $\VEC M_t$ and on virtual player $i$'s fixed strategy. Thus, virtual player $j$ needs to solve a finite horizon Markov decision problem with state space $\ALP S_t$ and action space $\ALP A^{j}_t$ at time step $t$. This is made precise in our next result.

\begin{lemma}\label{lemma:lqginfostatelemma}
Consider game {\bf G2} among virtual players. Assume that virtual player $i$ is using the strategy $\{\psi^i_1,\ldots,\psi^i_{T-1}\}\in\bar{\ALP H}^i_{1:T-1}$, that is, virtual player $i$ selects the prescriptions at time step $t$ only as a function of the mean $\VEC M_t$ of the common information based conditional belief $\Pi_t$: 
\[ \Gamma^i_t = \psi^i_t(\VEC M_t),\qquad t\in\{1,\ldots, T-1\}.\]
Then, for the fixed strategy of virtual player $i$, virtual player $j$'s ($j\neq i, j\in\{1,2\}$) one-sided optimization problem is a finite horizon Markov decision problem with state $\VEC M_t$, control action $\gamma^j_t$, and cost as $\tilde{c}^j_t(\VEC M_t,\gamma^j_t,\psi^i_t(\VEC M_t))$ at time step $t\in\{1,\ldots,T-1\}$ and terminal cost $\tilde{c}^j_T(\VEC M_T)$.
\end{lemma}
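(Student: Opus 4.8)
The plan is to combine the controlled-Markov property of the process $\{\VEC M_t\}$ from Lemma \ref{lemma:lqgmarkov} with the substitution of virtual player $i$'s fixed strategy $\{\psi^i_1,\ldots,\psi^i_{T-1}\}$, thereby exhibiting the three defining ingredients of a finite-horizon Markov decision problem for virtual player $j$: a controlled state process that is Markov, a per-stage control that selects the transition kernel, and a stage-additive cost depending only on the current state and the current control.

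First I would specialize the cost functional $\tilde J^j$. Fixing $\Gamma^i_t = \psi^i_t(\VEC M_t)$ and letting $\chi^j$ range over the admissible strategies of virtual player $j$, the total cost becomes
\[ \tilde J^j = \ex{\tilde c^j_T(\VEC M_T) + \sum_{t=1}^{T-1} \tilde c^j_t(\VEC M_t, \Gamma^j_t, \psi^i_t(\VEC M_t))}. \]
Since $\psi^i_t(\VEC M_t)$ is a deterministic function of $\VEC M_t$, each summand is a function of the pair $(\VEC M_t, \Gamma^j_t)$ alone; I would record this by defining the stage cost $\hat c^j_t(\VEC m, \gamma^j) := \tilde c^j_t(\VEC m, \gamma^j, \psi^i_t(\VEC m))$, with the terminal cost $\tilde c^j_T(\VEC M_T)$ left unchanged.

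Next I would specialize the transition law. By Lemma \ref{lemma:lqgmarkov} the law of $\VEC M_{t+1}$ conditioned on the entire past depends on that past only through $(\VEC M_t, \gamma^{1:2}_t)$; substituting $\gamma^i_t = \psi^i_t(\VEC M_t)$ collapses the dependence on the joint prescription to a dependence on $(\VEC M_t, \gamma^j_t)$, so that
\[ \mathds{P}\{\VEC M_{t+1}\in\SF M_{t+1} \mid \VEC c_t,\VEC m_{1:t},\gamma^j_{1:t}\} = \mathds{P}\{\VEC M_{t+1}\in\SF M_{t+1}\mid \VEC m_t,\gamma^j_t\} \]
defines a controlled transition kernel in which $\gamma^j_t\in\ALP A^j_t$ is the only free control. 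Together with the additive structure of $\tilde J^j$ and the stage costs $\hat c^j_t$ just defined, this presents virtual player $j$ with a finite-horizon Markov decision problem having state space $\ALP S_t$, action space $\ALP A^j_t$, running cost $\hat c^j_t$, and terminal cost $\tilde c^j_T$.

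The step I expect to require the most care is justifying that the state of this decision problem is $\VEC M_t$ rather than the full common information $\VEC C_t$: an admissible control law $\chi^j_t$ maps all of $\VEC C_t$ into a prescription, so a priori virtual player $j$ may condition its action on information beyond $\VEC M_t$. The resolution is that $\VEC M_t$ is a sufficient statistic (an information state): once $i$'s Markov strategy is fixed, both the stage cost $\hat c^j_t$ and the transition kernel above depend on $\VEC C_t$ only through $\VEC M_t$, and $\{\VEC M_t\}$ is itself Markov under $\gamma^j$. The standard sufficiency argument for controlled Markov chains then shows that conditioning on the residual component of $\VEC C_t$ cannot reduce the expected cost, so the one-sided problem is equivalent to the claimed Markov decision problem, in which the state is $\VEC M_t$ and decision rules may be taken to be functions of $\VEC M_t$ alone.
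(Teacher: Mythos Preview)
Your proposal is correct and follows essentially the same approach as the paper: fix virtual player $i$'s Markov strategy, observe that the resulting stage cost $\tilde c^j_t(\VEC M_t,\gamma^j_t,\psi^i_t(\VEC M_t))$ depends only on $(\VEC M_t,\gamma^j_t)$, and invoke Lemma~\ref{lemma:lqgmarkov} for the controlled-Markov transition to conclude an MDP structure. Your additional paragraph on why $\VEC M_t$ suffices as the state (rather than all of $\VEC C_t$) is a useful elaboration that the paper leaves implicit.
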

\begin{proof}
See Appendix \ref{app:lqginfostatelemma}.
\end{proof}

Note that game {\bf G2} is a dynamic game of perfect information and perfect recall. A Markov strategy of a virtual player is defined as a collection of control laws of that virtual player at all time steps such that the control law at time step $t$ is a measurable map of common information based conditional mean (state of game {\bf G2}) to its action space at that time step. Lemma \ref{lemma:lqginfostatelemma} states that if one virtual player sticks to Markov strategy, then the other virtual player's one-sided optimization problem is a finite horizon Markov decision problem. Under certain assumptions on the cost functions of the virtual players\footnote{See, for example, \cite[Section 3.3]{lerma1996} for a set of such assumptions.}, there exists a Markov strategy of the other virtual player that achieves the minimum in its Markov decision problem. Thus, there is no incentive for the other virtual player to search for optimal strategies outside the class of Markov strategies. This is an important 
observation for game {\bf G2}, because one can define a refinement concept for Nash equilibrium, called Markov perfect equilibrium \cite{fudenberg1991}, for game {\bf G2}.

\begin{definition}
A strategy profile $(\psi^{1\star}_{1:T-1},\psi^{2\star}_{1:T-1})\in \ALP H^1_{1:T-1}\times\ALP H^2_{1:T-1}$ is said to be a Markov perfect equilibrium \cite{fudenberg1991} of game \textbf{G2} if (i) at each time $t$, the control laws of the virtual players at time step $t$ are functions of the mean of the common information based conditional belief $\VEC M_t$, that is, $\psi^i_t\in\bar{\ALP H}^i_t$, and (ii) for all time steps $t\in\{1,\ldots,T-1\}$, the strategy profiles $(\psi^{1\star}_{t:T-1},\psi^{2\star}_{t:T-1})$ form a Nash equilibrium for the sub-game starting at time step $t$ of game {\bf G2}.
\end{definition}

It should be noted that Markov perfect equilibrium is a refinement concept for Nash equilibria of games in which players make perfect state observations. In game {\bf G2} among virtual players, a strategy profile that is {\it not} a Markov perfect equilibrium either depends on the common information (and not just on the mean $\VEC M_t$), or is not a Nash equilibrium of every sub-game in game {\bf G2}, or both. We reemphasize this point later in Section \ref{sec:information}.  

Given a Markov perfect equilibrium of \textbf{G2}, we can construct a corresponding Nash equilibrium of game \textbf{G1} using Theorem~\ref{thm:lqgequiv}. We refer to the class of Nash equilibria of \textbf{G1} that can be constructed from the Markov perfect equilibria of \textbf{G2} as the \emph{common information based Markov perfect equilibria} of game \textbf{G1}.
\begin{definition}
If $(\psi^{1\star}_{1:T-1},\psi^{2\star}_{1:T-1})$ is a Markov perfect equilibrium of game \textbf{G2}, then the strategy profile $(\VEC g^{1\star}, \VEC g^{2\star})$ of the form $\VEC g^{i\star} = \varrho^i(\psi^{i\star}_{1:T-1}), i=1,2,$  is called \emph{common information based Markov perfect equilibrium} of game \textbf{G1}.{\hfill$\Box$}
\end{definition}

A similar concept was introduced for finite games with asymmetric information in our earlier work \cite{nayyar2012a}.

\subsection{Computation of Markov Perfect Equilibrium of Game {\bf G2}}\label{sub:MPEG2}
In this subsection, we characterize Markov perfect equilibrium of game \textbf{G2} using value functions that depend on the mean of the common information based conditional belief. 


\begin{theorem}\label{thm:mpevirtual}
Consider a strategy pair $(\psi^{1\star}_{1:T-1},\psi^{2\star}_{1:T-1})\in\bar{\ALP H}^1_{1:T-1}\times\bar{\ALP H}^2_{1:T-1}$. Define functions $V^i_t:\ALP X_t\times\ALP P^1_{t}\times\ALP P^2_{t}\rightarrow\Re$, called expected value functions of Controller $i$ at time $t$, as follows:
\begin{enumerate}
\item For each possible realization $\VEC m = (\VEC m^0,\VEC m^1,\VEC m^2)$ of $\VEC M_{T}$, define the value functions:
\beq{\label{eqn:viT} V^i_{T}(\VEC m) := \tilde c^i_T(\VEC m) = \mathds{E}[c^i_T(\VEC X_T)|\VEC M_T = \VEC m]\qquad i\in\{1,2\}.}
\item For $t=T-1,\ldots,1$, and for each possible realization $\VEC m$ of $\VEC M_{t}$, define  the value functions:
\beq{V^i_{t}(\VEC m) &:=& \min_{\tilde\gamma^i_t\in\ALP A^i_t} \mathds{E}\Big[ \tilde c^i_t(\VEC M_t,\gamma^1_t,\gamma^2_t) +V^i_{t+1}(F^1_t(\VEC M_t, \VEC Z_{t+1})) \nonumber\\
& & \qquad\qquad\Big|\VEC M_{t}=\VEC m, \gamma^i_t =\tilde\gamma^i_t , \gamma^{-i}_t = \psi^{-i\star}_{t}(\VEC m)\Big]\qquad i\in\{1,2\},\qquad\label{eqn:vit}}
assuming that the minimum exists in the equation above. Then, a necessary and sufficient condition for $(\psi^{1\star}_{1:T-1},\psi^{2\star}_{1:T-1})$ to be a Markov perfect equilibrium of \textbf{G2} is that for every time step $t\in\{1,\ldots,T-1\}$, $i\in\{1,2\}$ and for every realization $\VEC m$ of $\VEC M_t$,
\beq{\label{eqn:psiitm}\psi^{i\star}_t(\VEC m) \in \underset{\tilde\gamma^i_t\in \ALP A^i_t}{\arg\min} \;\;\mathds{E}\Big[ \tilde c^i_t(\VEC m,\tilde\gamma^i_t, \psi^{-i\star}_{t}(\VEC m)) +V^i_{t+1}(F^1_t(\VEC m, \VEC Z_{t+1}))\Big].}
\end{enumerate}
\end{theorem}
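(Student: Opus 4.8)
The plan is to recognize the asserted characterization as a simultaneous pair of dynamic-programming (Bellman) optimality conditions, one for each virtual player, and to derive each from the single-agent optimality theory for finite-horizon Markov decision processes. The bridge is Lemma~\ref{lemma:lqginfostatelemma}: once virtual player $-i$ fixes a Markov strategy $\psi^{-i\star}_{1:T-1}\in\bar{\ALP H}^{-i}_{1:T-1}$, virtual player $i$'s one-sided problem becomes a finite-horizon Markov decision problem with state $\VEC M_t\in\ALP S_t$, action $\gamma^i_t\in\ALP A^i_t$, one-stage cost $\tilde c^i_t(\VEC M_t,\gamma^i_t,\psi^{-i\star}_t(\VEC M_t))$, and terminal cost $\tilde c^i_T(\VEC M_T)$. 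The controlled transition kernel of this problem is supplied by the Markov property of $\{\VEC M_t\}$ (Lemma~\ref{lemma:lqgmarkov}) together with the update map $F^1_t$ of Lemma~\ref{lemma:lqgmean}, so that the conditional expectations in \eqref{eqn:vit} and \eqref{eqn:psiitm} depend on the past only through $(\VEC m,\gamma^1_t,\gamma^2_t)$. By definition, $(\psi^{1\star},\psi^{2\star})$ is a Markov perfect equilibrium exactly when, for each $i$ and each subgame starting at $t$, $\psi^{i\star}_{t:T-1}$ is a best response to $\psi^{-i\star}_{t:T-1}$ in this MDP; hence I would prove the theorem by showing that the value-function recursion \eqref{eqn:viT}--\eqref{eqn:vit} together with the arg-min condition \eqref{eqn:psiitm} is equivalent to that best-response property.

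For the sufficiency direction, I would assume \eqref{eqn:psiitm} holds for both players at every time $t$ and every realization $\VEC m$, and verify that each $\psi^{i\star}$ is optimal in its induced MDP. The recursion \eqref{eqn:vit} is precisely the Bellman (dynamic programming) equation for that problem, and \eqref{eqn:psiitm} states that $\psi^{i\star}_t(\VEC m)$ attains its minimum. The standard verification theorem for finite-horizon MDPs (e.g.\ \cite[Section~3.3]{lerma1996}) then certifies that $V^i_t(\VEC m)$ is the optimal cost-to-go from $\VEC m$ at time $t$ and that $\psi^{i\star}_{t:T-1}$ achieves it, so no deviation of virtual player $i$---whether Markov or history-dependent---lowers its cost on the subgame. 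Since this holds for each $i$ and each $t$, the profile $(\psi^{1\star},\psi^{2\star})$ is a Nash equilibrium of every subgame and therefore a Markov perfect equilibrium.

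For the necessity direction, I would begin from a Markov perfect equilibrium $(\psi^{1\star},\psi^{2\star})$. By definition the control laws lie in $\bar{\ALP H}^i_t$ and, on each subgame, the profile is a Nash equilibrium; fixing $\psi^{-i\star}$, this makes $\psi^{i\star}_{t:T-1}$ an optimal policy for the MDP of Lemma~\ref{lemma:lqginfostatelemma}. Invoking the uniqueness of the optimal value function and the necessity half of the Bellman optimality principle, the optimal cost-to-go must coincide with the $V^i_t$ generated by \eqref{eqn:viT}--\eqref{eqn:vit}, and any optimal Markov policy must attain the minimum in the Bellman equation at each reachable realization $\VEC m$ of $\VEC M_t$, which is exactly \eqref{eqn:psiitm}. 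The whole argument proceeds by backward induction on $t$, anchored at $t=T$ by the definition \eqref{eqn:viT}.

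The main obstacle I anticipate is not the equilibrium logic but the regularity needed to apply MDP dynamic programming on these spaces: the action sets $\ALP A^i_t$ are infinite-dimensional spaces of Borel-measurable prescriptions, so I would need the reduced one-stage cost $\VEC m\mapsto\min_{\tilde\gamma^i_t}\ex{\tilde c^i_t(\VEC m,\tilde\gamma^i_t,\psi^{-i\star}_t(\VEC m))+V^i_{t+1}(F^1_t(\VEC m,\VEC Z_{t+1}))}$ to be measurable and a measurable minimizing selection $\psi^{i\star}_t$ to exist. The theorem sidesteps the attainment issue by assuming the minimum exists in \eqref{eqn:vit}, so the remaining care is to guarantee measurability of the value functions and the validity of interchanging minimization with the conditional expectation; I would discharge this by appealing to the measurable-selection and lower-semicontinuity hypotheses of \cite{lerma1996} and to the fact (Lemma~\ref{lemma:lqgmarkov}) that the transition kernel of $\VEC M_t$ factors through $(\VEC m_t,\gamma^1_t,\gamma^2_t)$ alone.
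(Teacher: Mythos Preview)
Your proposal is correct and follows essentially the same approach as the paper: both directions are obtained by fixing the other virtual player's Markov strategy, invoking Lemma~\ref{lemma:lqginfostatelemma} to reduce each player's problem to a finite-horizon MDP, and then applying the dynamic programming principle (verification theorem for sufficiency, necessity of the Bellman equation for the converse). Your additional attention to measurability and selection issues on the infinite-dimensional action spaces $\ALP A^i_t$ goes beyond what the paper's proof addresses, but the core argument is the same.
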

\begin{proof}
See Appendix \ref{app:mpevirtual}.
\end{proof}

To show that the sub-game admits a Nash equilibrium at time step $t$ requires a fixed-point argument, which means that the reaction curves of the virtual players intersect in the product of their strategy spaces $\bar{\ALP H}^1_t\times\bar{\ALP H}^2_t$ \cite{Basarbook}. In the next section, we show that if the cost functions of the players are quadratic functions of their arguments, then under certain conditions on the cost functions, the reaction curves of the virtual players intersect at a unique point. Thus, under those assumptions, a unique common information based Markov perfect equilibrium exists in LQG games with asymmetric information.

\begin{remark}\label{rem:multiequilibria}
As stated earlier, Markov perfect equilibrium is only a subclass of Nash equilibria of game {\bf G2}. Game {\bf G2} (and equivalently, the corresponding game {\bf G1}) may have several other Nash equilibria besides Markov perfect equilibrium. An example of a two-player two-stage game of asymmetric information in which there is a continuum of Nash equilibria is presented in Section \ref{sec:information}.{\hfill $\Box$}
\end{remark}

\subsection{One-stage Bayesian Games}\label{sub:stagebayesian}
At any time step $t$, let $\VEC m_t$ and $\VEC x_t$, respectively, be realizations of the common information based conditional mean and the state. Let us rewrite the expressions of the expected cost-to-go functions in \eqref{eqn:vit} as
\beq{\label{eqn:hatcit}\bar{c}^i_t(\VEC m_t;\VEC x_t,\VEC u^1_t,\VEC u^2_t):=c^i_t(\VEC x_t,\VEC u^1_t,\VEC u^2_t) +\ex{V^i_{t+1}(F^1_t(\VEC m_t, \VEC Z_{t+1}))\Big| \VEC x_t,\VEC u^1_t,\VEC u^2_t}.}
These are the cost-to-go functions for the two controllers in game {\bf G1} if both controllers stick to the common information based Markov perfect equilibrium for all time step $s>t$. For a realization $\VEC m_t$, Controller $i$ chooses a map $\gamma^i_t:\ALP P^i_t\rightarrow\ALP U^i_t$. We assume that the probability measure on $\ALP X_t\times\ALP P^1_t\times\ALP P^2_t$ admits a Gaussian density function with mean $\VEC m_t$ and variance $\Sigma_t$. One can notice that this is precisely the setup of a Bayesian game. Therefore, we say that the game between Controllers $1$ and $2$, with cost functions $\bar{c}^1_t(\VEC m_t;\cdot)$ and $\bar{c}^2_t(\VEC m_t;\cdot)$, respectively, is the {\it one-stage Bayesian game at time step $t$ with mean $\VEC m_t$}. 


\textbf{An algorithm to compute common information based Markov perfect equilibrium:}

We can now describe a backward induction process to find a common information based Markov perfect equilibrium of game \textbf{G1} using a sequence of one-stage Bayesian games. We proceed as follows: \\
\underline{\textbf{Algorithm 1:}}
\begin{enumerate}
\item At the terminal time $T-1$, for each realization $\VEC m$ of the common information based conditional mean at time $T-1$, we define a one-stage Bayesian game  $SG_{T-1}(\VEC m)$ where 
\begin{enumerate}
\item The probability distribution on $(\VEC X_{T-1}, \VEC P^1_{T-1}, \VEC P^2_{T-1})$, denoted by $\pi$, is a Gaussian distribution with mean $\VEC m$ and covariance $\Sigma_{T-1}$.
\item  Agent\footnote{Agent $i$ can be thought to be the same as Controller $i$. We use a different name here in order to maintain the distinction between games \textbf{G1} and $SG_{T-1}(\VEC m)$.} $i$  observes $\VEC P^i_{T-1}$ and chooses action $\VEC U^i_{T-1}$, $i=1,2$. 
\item  Agent $i$'s cost is $\bar{c}^i_{T-1}(\VEC m;\VEC X_{T-1},\VEC U^1_{T-1},\VEC U^2_{T-1})$, $i=1,2$.
\end{enumerate}
A Bayesian Nash equilibrium\footnote{See \cite{fudenberg1991, Osborne, Myerson_gametheory} for a definition of Bayesian Nash equilibrium.} of this game is a pair of strategies $(\gamma^{1*},\gamma^{2*})$, where $\gamma^{i*}:\ALP P^i_{T-1}\rightarrow\ALP U^i_{T-1}$ is a measurable function such that for any realization $\VEC p^i\in\ALP P^i_{T-1}$, $\gamma^{i*}(\VEC p^i)$ is a solution of the minimization problem 
\[ \min_{\VEC u^i} \mathds{E}^{\pi}[\bar{c}^i_{T-1}(\VEC m;\VEC X_{T-1},\VEC u^i, \gamma^{j*}(\VEC P^{j}_{T-1}))|\VEC P^i_{T-1} = \VEC p^i ], \]
where $j \neq i$ and the superscript $\pi$ denotes that the expectation is with respect to the distribution $\pi$. If a Bayesian Nash equilibrium $(\gamma^{1*},\gamma^{2*})$ of $SG_{T-1}(\VEC m)$ exists, denote the corresponding expected equilibrium costs as $V^i_{T-1}(\VEC m), i=1,2$, and define $\psi^i_T(\VEC m) := \gamma^{i*}$, $i=1,2$.

\item At time $t < T-1$, for each realization $\VEC m$ of the common information based conditional mean at time $t$, we define the one-stage Bayesian   game $SG_t(\VEC m)$ where
\begin{enumerate}
\item The probability distribution on $(\VEC X_t, \VEC P^1_t, \VEC P^2_t)$, denoted by $\pi$, admits a Gaussian density function with mean $\VEC m$ and covariance $\Sigma_{t}$.
\item  Agent $i$ observes $\VEC P^i_t$ and chooses action $\VEC U^i_t$, $i=1,2$. 
\item  Agent $i$'s cost is $\bar{c}^i_t(\VEC m;\VEC X_t,\VEC U^1_t,\VEC U^2_t)$, $i=1,2$. 
\end{enumerate}
A Bayesian Nash equilibrium of this game is a pair of strategies $(\gamma^{1*},\gamma^{2*})$, where $\gamma^{i*}:\ALP P^i_{T-1}\rightarrow\ALP U^i_{T-1}$ is a measurable function such that for any realization $\VEC p^i\in\ALP P^i_t$, $\gamma^{i*}(\VEC p^i)$ is a solution of the minimization problem 
\[ \min_{\VEC u^i} \mathds{E}^{\pi}[\bar{c}^i_t(\VEC m;\VEC X_t,\VEC u^i, \gamma^{j}(\VEC P^{j}_t)) |\VEC P^i_t = \VEC p^i ], \]
where $j \neq i, i,j=1,2,$ when control actions $\VEC U^i_t =\VEC u^i$ and $\VEC U^j_t = \gamma^{j}(\VEC P^{j}_t)$ are used. The expectation is taken with respect to the Gaussian distribution with mean $\VEC m$ and covariance $\Sigma_t$. If a Bayesian Nash equilibrium $(\gamma^{1*},\gamma^{2*})$ of $SG_t(\VEC m)$ exists, denote the corresponding expected equilibrium costs as $V^i_t(\VEC m), i=1,2$ and define $\psi^i_t(\VEC m) := \gamma^{i*}$, $i=1,2$.
\end{enumerate}

\begin{theorem}\label{thm:backward_ind}
The strategies $\boldsymbol \psi^i = (\psi^i_1,\psi^i_2,\ldots,\psi^i_{T-1})$, $i=1,2,$ defined by the backward induction process described in Algorithm 1 form a Markov perfect equilibrium of game \textbf{G2}. Consequently, strategies $\VEC g^1$ and $ \VEC g^2$ defined as
\[ g^i_t(\cdot, \VEC c_t) := \psi^i_t(\VEC m_t), \quad i\in\{1,2\},\:t\in\{1,\ldots,T-1\} \]
form a common information based Markov perfect equilibrium of game \textbf{G1}.
\end{theorem}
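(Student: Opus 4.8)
The plan is to show that the strategies produced by Algorithm 1 satisfy exactly the necessary-and-sufficient characterization of a Markov perfect equilibrium of \textbf{G2} established in Theorem~\ref{thm:mpevirtual}, and then to transfer the conclusion to \textbf{G1} via the definition of common information based Markov perfect equilibrium together with Theorem~\ref{thm:lqgequiv}. The entire argument reduces to two claims, which I would establish simultaneously by backward induction on $t$: (i) the equilibrium costs $V^i_t(\VEC m)$ recorded by Algorithm 1 coincide with the value functions defined by the recursion \eqref{eqn:viT}--\eqref{eqn:vit}, and (ii) a Bayesian Nash equilibrium of the one-stage game $SG_t(\VEC m)$ is the same object as a prescription pair satisfying the fixed-point condition \eqref{eqn:psiitm}.

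The base case $t=T$ is immediate, since $V^i_T(\VEC m)=\tilde c^i_T(\VEC m)$ both by \eqref{eqn:viT} and by the terminal cost in Algorithm 1. For the inductive step at a generic time $t$, assume $V^i_{t+1}$ agrees with the quantity appearing in \eqref{eqn:vit}. The central manipulation is to rewrite the objective minimized in \eqref{eqn:psiitm}. Expanding $\tilde c^i_t$ as the integral of $c^i_t$ against the Gaussian belief $\pi$ with mean $\VEC m$ and covariance $\Sigma_t$, and using \eqref{eq:lqgcommoninfo} to note that the increment $\VEC Z_{t+1}$, and hence $V^i_{t+1}(F^1_t(\VEC m,\VEC Z_{t+1}))$, is a function of $(\VEC X_t,\VEC P^1_t,\VEC P^2_t,\VEC U^1_t,\VEC U^2_t,\VEC Y^1_{t+1},\VEC Y^2_{t+1})$, I would take the conditional expectation of the continuation term given $(\VEC x_t,\VEC u^1_t,\VEC u^2_t)$ and recognize the sum $c^i_t+\ex{V^i_{t+1}(F^1_t(\VEC m,\VEC Z_{t+1}))\mid\VEC x_t,\VEC u^1_t,\VEC u^2_t}$ as $\bar c^i_t$ via \eqref{eqn:hatcit}. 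Thus the objective in \eqref{eqn:psiitm} equals
\[ \mathds{E}^{\pi}\big[\bar c^i_t(\VEC m;\VEC X_t,\tilde\gamma^i_t(\VEC P^i_t),\psi^{-i\star}_t(\VEC m)(\VEC P^{-i}_t))\big], \]
an expectation against $\pi$ with virtual player $-i$'s prescription held fixed.

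The key observation is a decoupling: since the fixed prescription $\psi^{-i\star}_t(\VEC m)$ depends only on $\VEC P^{-i}_t$, the values $\tilde\gamma^i_t(\VEC p^i)$ chosen at distinct realizations $\VEC p^i$ do not interact in the integrand. Consequently, minimizing the displayed expectation over the function space $\ALP A^i_t$ is equivalent to performing, for $\pi$-almost every $\VEC p^i$, the pointwise minimization
\[ \min_{\VEC u^i\in\ALP U^i_t}\ \mathds{E}^{\pi}\big[\bar c^i_t(\VEC m;\VEC X_t,\VEC u^i,\psi^{-i\star}_t(\VEC m)(\VEC P^{-i}_t))\mid\VEC P^i_t=\VEC p^i\big], \]
which is precisely the best-response condition defining a Bayesian Nash equilibrium of $SG_t(\VEC m)$. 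Hence a prescription pair satisfies \eqref{eqn:psiitm} for both $i=1,2$ if and only if it is a Bayesian Nash equilibrium of $SG_t(\VEC m)$, and the minimized value equals the equilibrium cost that Algorithm 1 stores as $V^i_t(\VEC m)$. This closes the induction. By the sufficiency direction of Theorem~\ref{thm:mpevirtual}, $(\psi^{1\star}_{1:T-1},\psi^{2\star}_{1:T-1})$ is a Markov perfect equilibrium of \textbf{G2}; by definition the projection $\VEC g^{i\star}=\varrho^i(\psi^{i\star}_{1:T-1})$ is then the common information based Markov perfect equilibrium of \textbf{G1}, and Theorem~\ref{thm:lqgequiv} confirms it is an honest Nash equilibrium of \textbf{G1}.

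I expect the main obstacle to be the rigorous justification of the interchange of minimization and integration in the decoupling step, i.e. the identity $\inf_{\gamma\in\ALP A^i_t}\int f(\VEC p,\gamma(\VEC p))\,\pi(d\VEC p)=\int\inf_{\VEC u}f(\VEC p,\VEC u)\,\pi(d\VEC p)$ together with the existence of a \emph{measurable} selector $\gamma^{i*}$ attaining the pointwise infimum. In full generality this requires a measurable-selection theorem and the decomposability of $\ALP A^i_t$. In the present development, however, this is subsumed by the standing hypothesis that a Bayesian Nash equilibrium of each $SG_t(\VEC m)$ exists (and, in the LQG specialization of the next section, is exhibited constructively as an affine map), so measurability and existence come for free rather than requiring a separate selection argument here.
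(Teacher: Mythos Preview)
Your proposal is correct and follows essentially the same approach as the paper: verify that the strategies produced by Algorithm~1 satisfy the conditions of Theorem~\ref{thm:mpevirtual}, and then invoke the definition of common information based Markov perfect equilibrium (together with Theorem~\ref{thm:lqgequiv}) for the conclusion about \textbf{G1}. The paper's own proof is in fact a single sentence to this effect; you have simply spelled out the identification between the Bayesian Nash equilibrium condition in $SG_t(\VEC m)$ and the fixed-point condition \eqref{eqn:psiitm}, including the interchange-of-minimization step that the paper leaves implicit.
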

\begin{proof}
To prove the result, we just need to observe that the strategies defined by the backward induction procedure of Algorithm 1  satisfy the  conditions of Theorem \ref{thm:mpevirtual} and hence form a Markov perfect equilibrium of game \textbf{G2}.
\end{proof}

In the next section, we consider LQG games, in which the cost functions of the controllers at any time step are quadratic in the state and actions of the controllers. Under certain sufficient conditions on the cost functions of the controllers, we prove that the one-stage Bayesian game at any time $t\in\{1,\ldots,T-1\}$ with any mean $\VEC m_t\in\ALP S_t$ admits a unique Bayesian Nash equilibrium. We follow the steps of the algorithm above to prove that every LQG game with cost functions satisfying certain conditions admits a unique common information based Markov perfect equilibrium.

\section{Game with Quadratic Cost Functions}\label{sec:lqggames}
Let us now consider the special class of games where the stagewise cost functions $c^i_T$ and $c^i_t$ are quadratic functions of their arguments:
\beqq{c^i_T(\VEC X_{T}) = \VEC X_{T}^\transpose R^i_{11} \VEC X_{T}, & & c^i_t(\VEC X_t,\VEC U^1_t,\VEC U^2_t) = \bmat{\VEC X_t^\transpose , \VEC U^{1\transpose}_t, \VEC U^{2\transpose}_t} R^i \bmat{\VEC X_t \\ \VEC U^1_t\\ \VEC U^2_t}  \\
\quad\text{ where } R^i&:=&\bmat{R^i_{11} & R^i_{12} & R^i_{13}\\R_{12}^{i\transpose} & R^i_{22} & R^i_{23}\\R_{13}^{i\transpose} & R_{23}^{i\transpose} & R^i_{33}},}
$R^i_{11}\geq 0$ and $R^i_{ii}>0$ for $i\in\{1,2\}$. We refer to Gaussian games in which the cost functions are of the form above as dynamic LQG games.

Before we analyze the dynamic LQG game, we first formulate and compute the Nash equilibrium of a static auxiliary (Bayesian) game in the next subsection. We use the result of this auxiliary game to compute the Nash equilibrium strategies of the dynamic game. One of the main results of this section is that any LQG game that satisfies a certain assumption on the cost functions in addition to Assumptions \ref{assm:lqginfoevolution} and \ref{assm:lqgseparation} admits a unique common information based Markov perfect equilibrium in the class of all Borel measurable strategy profiles of the controllers. We prove this in two steps:
\begin{enumerate}
\item The first step consists of computing a Bayesian Nash equilibrium of a particular two-player static game with asymmetric information. This is done in Subsection \ref{sec:basargame}.
\item We then exploit the uniqueness of Nash equilibrium, the structure of the Nash equilibrium strategies of the controllers, and the expected equilibrium costs to the controllers to obtain the main result for LQG games in Subsection \ref{sub:generallqg}. 
\end{enumerate}

\subsection{An Auxiliary Game, {\bf AG1}}\label{sec:basargame}

The static Bayesian game is described as follows: $\VSF X, \VSF Y^1, \VSF Y^2$ are jointly Gaussian random vectors such that $\VSF Y^i = H^i \VSF X$ for some matrix $H^i$ of appropriate dimensions. The mean and covariance of the three-tuple $\VSF X, \VSF Y^1, \VSF Y^2$ are given by
\[ \VSF m = \bmat{ \VSF m_{x}  \\ \VSF m_{y^1} \\\VSF m_{y^2}} \quad \Sigma =\bmat{\Sigma_{xx} & \Sigma_{xy^1} & \Sigma_{xy^2} \\\Sigma_{y^1x} & \Sigma_{y^1y^1} & \Sigma_{y^1y^2} \\\Sigma_{y^2x} & \Sigma_{y^2y^1} & \Sigma_{y^2y^2} },  \]
\beq{\label{eqn:sigmaij}\text{where}\qquad \Sigma_{y^iy^j} = \Sigma_{y^iy^i}^{\frac{1}{2}}\Sigma_{y^jy^j}^{\frac{1}{2}\transpose},\qquad \Sigma_{y^ix} = \Sigma_{y^iy^i}^{\frac{1}{2}}\Sigma_{xx}^{\frac{1}{2}\transpose} \quad\text{for } i,j=1,2.}
Since $\VSF X, \VSF Y^1, \VSF Y^2$ are jointly Gaussian random variables, the conditional expectations $\mathds{E}[\VSF X|\VSF Y^i]$ and $\mathds{E}[\VSF Y^{-i}|\VSF Y^i] $ are  affine functions of $\VSF Y^i$, given by
\beq{\mathds{E}[\VSF X|\VSF Y^i] &=& \VSF m_x + \Sigma_{xy^i}\Sigma_{y^iy^i}^{-1}(\VSF Y^i-\VSF m_{y^i}),\\
\mathds{E}[\VSF Y^{-i}|\VSF Y^i] &=& \VSF m_{y^{-i}} + \Sigma_{y^{-i}y^i}\Sigma_{y^iy^i}^{-1}(\VSF Y^i-\VSF m_{y^i}),}
where $\Sigma_{y^iy^i}^{-1}$ is the generalized inverse (pseudo inverse) of $\Sigma_{y^iy^i}$ \cite{catlin1989} for $i=1,2$.

The cost functions are
\beq{c^1(\VSF X,\VSF U^1,\VSF U^2) &=& \bmat{\VSF X^\transpose,\VSF U^{1\transpose},\VSF U^{2\transpose}}C\bmat{\VSF X \\\VSF U^1\\\VSF U^2} + 2\bmat{d_1, d_2, d_3}\bmat{\VSF X \\\VSF U^1\\\VSF U^2}+ r^1 \label{eqn:j1aux},\\
 c^2(\VSF X,\VSF U^1,\VSF U^2) &=& \bmat{\VSF X^\transpose,\VSF U^{1\transpose},\VSF U^{2\transpose}}E\bmat{\VSF X \\\VSF U^1\\\VSF U^2} + 2\bmat{f_1, f_2, f_3}\bmat{\VSF X \\\VSF U^1\\\VSF U^2} + r^2,\label{eqn:j2aux}\\
\text{where}\quad C &=& \bmat{C_{11} &C_{12} &C_{13}\\C_{12}^{\transpose} &C_{22} &C_{23}\\C_{13}^{\transpose} &C_{23}^{\transpose} & C_{33}} \quad\text{ and }\quad E = \bmat{E_{11} &E_{12} &E_{13}\\E_{12}^{\transpose} &E_{22} &E_{23}\\E_{13}^{\transpose} &E_{23}^{\transpose} & E_{33}}.\nonumber}
with $C\geq 0, E\geq 0$, $C_{22}>0, E_{33}>0$, $C_{ij},E_{ij}$ are matrices, $d_i,f_i$ are row vectors of appropriate dimensions and $r_1,r_2$ are scalar constants. 

Controller $i$ observes $\VSF Y^i, i=1,2$ and selects $\VSF U^i$ according to a decision rule $g^i$, that is, $\VSF U^i = g^i(\VSF Y^i)$, where $g^i$ is a measurable function of $Y^i$ satisfying $\mathds{E}[g^{i\transpose}(\VSF Y^{i})g^{i}(\VSF Y^{i})] < \infty$. Let the space of all such measurable functions $g^i$ be denoted by $\ALP A\ALP G^i$. This game is referred to as game {\bf AG1}$(\VSF m,\Sigma,c^1,c^2)$. We make the following assumption on the cost functions of the controllers in {\bf AG1}$(\VSF m,\Sigma,c^1,c^2)$.

\begin{assumption}\label{assm:eigcost}
For any square matrix $A$, let $\bar{\lambda}(A)$ denotes the positive square root of the maximum eigenvalue of $A^{\transpose}A$. For the matrix tuple $(C,E)$, define $K_1 = C_{22}^{-1}C_{23}E_{33}^{-1}E_{23}^{\transpose}$ and $K_2 = E_{33}^{-1}E_{23}^{\transpose}C_{22}^{-1}C_{23}$. Let $\ALP K_i$ be the space of all matrices that are similar to $K_i,\: i=1,2$, that is, $\tilde K\in\ALP K_i$ implies there exists a square invertible matrix $L$ of appropriate dimensions such that $\tilde K = LK_iL^{-1}$. There exists an $i_0\in\{1,2\}$ and a matrix $K\in \ALP K_{i_0}$ such that $\bar{\lambda}(K)<1$.{\hfill$\Box$}
\end{assumption}

In the next lemma, which builds on and follows from the earlier results in \cite{Basar1975,Basarmulti}, we show that the Nash equilibrium of the auxiliary game {\bf AG1}$(\VSF m,\Sigma,c^1,c^2)$ that satisfies Assumption \ref{assm:eigcost} exists in the space $\ALP A\ALP G^1\times\ALP A\ALP G^2$, is unique, and is affine in the information of the controllers.

 
\begin{lemma}\label{lem:basargame}
The following statements hold:
\begin{enumerate} 
\item For the 1-stage game \textbf{AG1}, a pair of decision rules $g^{1\star}, g^{2\star}$ is a Nash equilibrium if and only if they simultaneously satisfy the following two equations,
\beqq{g^{1\star}(\VSF Y^1) &=& -C_{22}^{-1}d_2^{\transpose} - C_{22}^{-1}C_{12}^\transpose\mathds{E}[\VSF X|\VSF Y^1] - C_{22}^{-1}C_{23}\mathds{E}[g^{2\star}(\VSF Y^2)|\VSF Y^1], \nonumber \\
g^{2\star}(\VSF Y^2) &=& -E_{33}^{-1}f_3^{\transpose} - E_{33}^{-1}E_{13}^\transpose\mathds{E}[\VSF X|\VSF Y^2] - E_{33}^{-1}E_{23}^{\transpose}\mathds{E}[g^{1\star}(\VSF Y^1)|\VSF Y^2].}

\item If the matrices $(C,E)$ in the cost functions of game {\bf AG1}$(\VSF m,\Sigma,c^1,c^2)$ satisfy Assumption \ref{assm:eigcost}, then the game has a unique Nash equilibrium in the class of all Borel measurable strategies $\ALP A\ALP G^1\times\ALP A\ALP G^2$, given as
\beq{\label{eqn:gistaraux} g^{i\star}(\VSF Y^i) =  T^i(\VSF Y^i-\VSF m_{y^i}) + b^i,}
where $b^1,b^2$ are 
solutions of the following pair of equations
\beqq{b^1 &=& -C_{22}^{-1}[d_2^{\transpose}+C_{12}\VSF m_x+C_{23}b^2], \\
b^2 &=& -E_{33}^{-1}[f_3^{\transpose}+E_{13}\VSF m_x +E_{23}^{\transpose}b^1 ],}
and are of the form
\beq{\label{eqn:b1b2}b^1 &=& l^1+L^1 \VSF m_x,\qquad  b^2 = l^2+L^2\VSF m_x,}
and $T^1, T^2$ are solutions of  the following pair of equations
\beq{\label{eqn:t1}T^1 &=& -C_{22}^{-1}[C_{12}^\transpose\Sigma_{xy^1}\Sigma_{y^1y^1}^{-1}+ C_{23}T^2\Sigma_{y^2y^1}\Sigma_{y^1y^1}^{-1}], \\
\label{eqn:t2}T^2 &=& -E_{33}^{-1}[E_{13}^\transpose\Sigma_{xy^2}\Sigma_{y^2y^2}^{-1} +E_{23}^{\transpose}T^1\Sigma_{y^1y^2} \Sigma_{y^2y^2}^{-1}].}
Here, $l^i $ and $L^i$ are independent of $\VSF m$ for both $i=1,2$. 

\item The expected costs to the controllers when they play according to Nash equilibrium $(g^{1\star},g^{2\star})$ are
\beq{\label{eqn:auxeqcost}\mathds{E}[c^i(\VSF X,g^{1\star}(\VSF Y^1),g^{2\star}(\VSF Y^2))] = \VSF m^{\transpose} \Phi^i {\VSF m} + \Xi^{i}\VSF m + \Upsilon^i,}
where the matrices $\Phi^i$, $\Xi^i$ and $\Upsilon^i$ for $i=1,2$ are defined by
\beqq{&\tilde{L} := \bmat{0 &0 & 0\\ L^1 & -T^1 & 0\\ L^2 & 0 & -T^2}, \quad \tilde{T} := \bmat{I &0 & 0\\ 0 & T^1 & 0\\0& 0 & T^2}, \quad
\tilde{l} := \bmat{0 \\ l^1 \\l^2},\\
&\Phi^1 = (\tilde{T}+\tilde{L})^{\transpose}C(\tilde{T}+\tilde{L}), \qquad \Xi^1 = 2\tilde{l}^{\transpose}C(\tilde{T}+\tilde{L})+2[d_1,d_2, d_3](\tilde{T}+\tilde{L}),\\
&\Phi^2 = (\tilde{T}+\tilde{L})^{\transpose}E(\tilde{T}+\tilde{L}), \qquad \Xi^2 = 2\tilde{l}^{\transpose}E(\tilde{T}+\tilde{L})+2[f_1,f_2, f_3](\tilde{T}+\tilde{L}),\\
&\Upsilon^1 = r^1 + 2[d_1,d_2, d_3]\tilde{l} + \textrm{trace}\left(\tilde{T}^{\transpose}C\tilde{T}\Sigma\right)+2\tilde{l}^{\transpose}C\tilde{l}, \\
& \Upsilon^1 = r^2 + 2[f_1,f_2, f_3]\tilde{l} + \textrm{trace}\left(\tilde{T}^{\transpose}E\tilde{T}\Sigma\right)+2\tilde{l}^{\transpose}E\tilde{l}.}
\end{enumerate}
\end{lemma}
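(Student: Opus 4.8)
The plan is to treat \textbf{AG1} as a one-shot strictly convex Bayesian game, exploiting the joint Gaussianity of $(\VSF X,\VSF Y^1,\VSF Y^2)$ throughout, and to build on the fixed-point analysis of \cite{Basar1975,Basarmulti}. For Part~1, I would fix a measurable $g^2\in\mathcal{AG}^2$ and examine Controller~1's conditional objective given its observation $\VSF Y^1=\VSF y^1$: expanding \eqref{eqn:j1aux} and discarding the terms free of $\VSF u^1$ leaves the strictly convex quadratic $\VSF u^{1\transpose}C_{22}\VSF u^1 + 2\big(\mathbf{}C_{12}^{\transpose}\mathds{E}[\VSF X\mid\VSF y^1]+C_{23}\mathds{E}[g^2\mid\VSF y^1]+d_2^{\transpose}\big)^{\transpose}\VSF u^1$, strict convexity coming from $C_{22}>0$. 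The $L^2$ integrability built into $\mathcal{AG}^i$ justifies differentiating under the conditional expectation, so the unique minimizer is given by the stationarity condition, which is exactly the first reaction curve; the symmetric computation using $E_{33}>0$ gives the second. A pair is a Nash equilibrium iff each law is a best response to the other, i.e.\ iff both reaction curves hold simultaneously, proving Part~1.

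Part~2 is where the real work lies, and I expect it to be the main obstacle: the claim is uniqueness over \emph{all} measurable strategies, not merely affine ones. Let $\mathcal{P}_i$ denote the conditional-expectation operator $g\mapsto\mathds{E}[g\mid\VSF Y^i]$, an orthogonal projection on the vector-valued space $L^2(\VSF Y^i)$. Substituting the reaction curve for $g^2$ into that for $g^1$ and using $K_1=C_{22}^{-1}C_{23}E_{33}^{-1}E_{23}^{\transpose}$, any equilibrium $g^{1\star}$ must solve the operator equation $(I-K_1\mathcal{P}_1\mathcal{P}_2)g^{1\star}=h^1$, where $h^1$ is an explicit affine function of $\VSF Y^1$ assembled from the affine formulas for $\mathds{E}[\VSF X\mid\VSF Y^i]$ and $\mathds{E}[\VSF Y^{-i}\mid\VSF Y^i]$; the analogous equation with $K_2$ holds for $g^{2\star}$. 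The crux is to show $I-K_{i_0}\mathcal{P}_{i_0}\mathcal{P}_{-i_0}$ is boundedly invertible on $L^2$. Since each $\mathcal{P}_i$ is a contraction, $\|\mathcal{P}_{i_0}\mathcal{P}_{-i_0}\|\le 1$. Under Assumption~\ref{assm:eigcost} I pick $i_0$ and an invertible $L$ with $\tilde K=LK_{i_0}L^{-1}$ and $\bar{\lambda}(\tilde K)<1$; conjugating by the constant-matrix multiplication operator $g\mapsto Lg$ (which commutes with each $\mathcal{P}_i$) turns $K_{i_0}\mathcal{P}_{i_0}\mathcal{P}_{-i_0}$ into $\tilde K\mathcal{P}_{i_0}\mathcal{P}_{-i_0}$, whose norm is at most $\bar{\lambda}(\tilde K)<1$. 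A Neumann-series argument then gives invertibility, hence uniqueness of the equilibrium in $\mathcal{AG}^1\times\mathcal{AG}^2$.

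Still within Part~2, I would note that $\mathcal{P}_1\mathcal{P}_2$ maps affine functions of $\VSF Y^1$ to affine functions of $\VSF Y^1$ (by joint Gaussianity) and that $h^1$ is affine; thus the finite-dimensional affine subspace is invariant under $I-K_1\mathcal{P}_1\mathcal{P}_2$, injectivity there forces surjectivity, and the resulting affine solution must coincide with the unique global one. Hence $g^{i\star}$ has the form \eqref{eqn:gistaraux}. Plugging this ansatz into the reaction curves and matching the zero-mean fluctuation part (the coefficient of $\VSF Y^i-\VSF m_{y^i}$) yields \eqref{eqn:t1}--\eqref{eqn:t2}, while matching the unconditional-mean part yields the coupled equations for $b^1,b^2$; this finite-dimensional system is solvable because $\bar{\lambda}(\tilde K)<1$ implies $\rho(K_{i_0})<1$, so $I-K_{i_0}$ is nonsingular, giving the affine-in-$\VSF m_x$ form \eqref{eqn:b1b2}. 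The reason this step is delicate is precisely that the similarity/contraction hypothesis in Assumption~\ref{assm:eigcost} is what upgrades ``unique affine equilibrium'' to ``unique measurable equilibrium.''

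For Part~3, with the equilibrium in hand I would stack $\VSF v:=[\VSF X^{\transpose},\VSF U^{1\transpose},\VSF U^{2\transpose}]^{\transpose}$ evaluated at $(g^{1\star},g^{2\star})$ and $\VSF w:=[\VSF X^{\transpose},\VSF Y^{1\transpose},\VSF Y^{2\transpose}]^{\transpose}$. Using $b^i=l^i+L^i\VSF m_x$ one verifies the two identities $\mathds{E}[\VSF v]=(\tilde T+\tilde L)\VSF m+\tilde l$ and $\VSF v-\mathds{E}[\VSF v]=\tilde T(\VSF w-\VSF m)$, so that $\operatorname{Cov}(\VSF v)=\tilde T\Sigma\tilde T^{\transpose}$. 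Applying the standard second-moment identity $\mathds{E}[\VSF v^{\transpose}C\VSF v]=\mathds{E}[\VSF v]^{\transpose}C\,\mathds{E}[\VSF v]+\operatorname{trace}(C\tilde T\Sigma\tilde T^{\transpose})$ to \eqref{eqn:j1aux}, and its analogue with $E$ to \eqref{eqn:j2aux}, and then collecting the quadratic, linear, and constant terms in $\VSF m$ produces \eqref{eqn:auxeqcost} with the stated $\Phi^i,\Xi^i,\Upsilon^i$. This final part is routine moment bookkeeping and I do not expect any obstruction beyond careful transposition.
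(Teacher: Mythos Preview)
Your proposal is correct. Parts~1 and~3 match the paper's treatment essentially line for line (differentiate the conditional cost; write $[\VSF X^{\transpose},\VSF U^{1\transpose},\VSF U^{2\transpose}]^{\transpose}=\tilde T[\VSF X^{\transpose},\VSF Y^{1\transpose},\VSF Y^{2\transpose}]^{\transpose}+\tilde L\VSF m+\tilde l$ and take second moments). For Part~2 the two arguments share the same contraction idea from \cite{Basar1975,Basarmulti} but are organized differently. The paper first invokes \cite[Theorem~1]{Basarmulti} for uniqueness of the measurable equilibrium and its affineness, and then \emph{separately} verifies that the finite-dimensional systems for $(b^1,b^2)$ and $(T^1,T^2)$ are solvable; for the latter it needs a chain of pseudo-inverse lemmas showing that every eigenvalue of $\Sigma_{y^1y^2}\Sigma_{y^2y^2}^{-1}\Sigma_{y^2y^1}\Sigma_{y^1y^1}^{-1}$ lies in $[0,1]$ (this is where the possibly singular $\Sigma_{y^iy^i}$ enter), followed by a Sylvester-type result for $D+P_1DP_2=P_3$ under $\rho(P_1)\rho(P_2)<1$. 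Your route is more unified: you establish invertibility of $I-K_{i_0}\mathcal P_{i_0}\mathcal P_{-i_0}$ on $L^2$ directly via conjugation by $L$ and a Neumann series, then note that the affine subspace is invariant and finite-dimensional, so injectivity there gives surjectivity and hence existence of $(T^1,T^2)$ for free, without the auxiliary matrix lemmas. What the paper's organization buys is an explicit, self-contained finite-dimensional computation that also covers the degenerate-covariance case transparently; what yours buys is brevity and the avoidance of those lemmas, since the $L^2$ projections $\mathcal P_i$ are well defined regardless of whether $\Sigma_{y^iy^i}$ is singular.
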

\begin{proof}
Part 1 of the lemma is proved by differentiating $\ex{c^i(\VSF X,\VSF U^i, g^{-i}(\VSF Y^{-i}))|\VSF Y^i}$ with respect to $\VSF U^i$ and setting it equal to zero. For the proof of Part 2 of the lemma, see Appendix \ref{sec:auxiliary}. For proving Part 3 of the lemma, notice that if $\VSF U^i = g^{i\star}(\VSF Y^i)$ for $i=1,2$, then
\beqq{\bmat{\VSF X \\\VSF U^1\\\VSF U^2} = \tilde{T}\bmat{\VSF X \\\VSF Y^1\\\VSF Y^2}+\tilde{L}\VSF m+\tilde{l}.}
Now, substituting this in the expressions for $c^1$ and $c^2$ and taking the expectations, we get the expected costs of the controllers. This completes the proof of the lemma.
\end{proof}

\begin{remark}\label{rem:taui}
The Nash equilibrium strategy of Player $i$ given in \eqref{eqn:gistaraux} of the auxiliary game {\bf AG1} can be rewritten as
\beqq{g^{i\star}(\VSF Y^i) = \matrix{ccc}{l^i+L^i\VSF m_x-T^i\VSF m_{y^i} &|& T^i } \matrix{c}{1 \\ \VSF Y^i}.}
It should also be noted that the unique Nash equilibrium in the auxiliary game {\bf AG1} exists in the class of all Borel measurable strategies of the controllers. {\hfill$\Box$}
\end{remark}

In the next subsection, we consider a class of dynamic LQG games satisfying certain assumptions, and show that each LQG game in that class admits a unique common information based Markov perfect equilibrium.

\subsection{Generalization to Dynamic LQG Games}\label{sub:generallqg}
In this subsection, we consider LQG games that satisfy Assumptions \ref{assm:lqginfoevolution} and \ref{assm:lqgseparation}. In order to prove the main result of the section, we need the following lemma.

\begin{lemma}\label{lem:bayesiangamet}
Consider an LQG game {\bf G1} that satisfies Assumptions \ref{assm:lqginfoevolution} and \ref{assm:lqgseparation}. Fix a time step $t\in\{1,\ldots,T-2\}$. If the expected value functions $V^i_t, i=1,2$ of the controllers at time $t+1$ are affine-quadratic functions of $\VEC m_{t+1}$, then the one-stage Bayesian game at time step $t$ with any mean $\VEC m_t\in\ALP S_t$ is an instance of auxiliary game {\bf AG1}.
\end{lemma}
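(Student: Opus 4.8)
The plan is to exhibit an explicit dictionary between the one-stage Bayesian game $SG_t(\VEC m_t)$ and an instance of \textbf{AG1}, and then check that every piece of defining data of \textbf{AG1} is reproduced: the noise-free linear observation structure, the covariance factorization \eqref{eqn:sigmaij}, and affine-quadratic costs with the required definiteness. The natural identification takes the \textbf{AG1} state to be the full tuple $\VSF X := \VEC s_t = (\VEC X_t,\VEC P^1_t,\VEC P^2_t)\in\ALP S_t$, the observations to be the private informations $\VSF Y^i := \VEC P^i_t$, and the actions $\VSF U^i := \VEC U^i_t$. By Lemma \ref{lem:pitgaussian}, the conditional law of $\VEC s_t$ given $\VEC c_t$ is Gaussian with mean $\VEC m_t$ and covariance $\Sigma_t$, so $(\VSF X,\VSF Y^1,\VSF Y^2)$ is jointly Gaussian; and since each $\VEC P^i_t$ is literally a sub-vector of $\VEC s_t$, we have $\VSF Y^i = H^i\VSF X$ with $H^i$ the corresponding coordinate projection, matching the noise-free observation model of \textbf{AG1}.

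Next I would dispose of the covariance conditions \eqref{eqn:sigmaij}. Because each $\VSF Y^i$ is a noise-free linear image of $\VSF X$, choosing the factors $\Sigma_{y^iy^i}^{1/2} := H^i\Sigma_{xx}^{1/2}$ (with $\Sigma^{1/2}$ read as any factor satisfying $\Sigma = \Sigma^{1/2}\Sigma^{1/2\transpose}$) gives $\Sigma_{y^iy^j}=H^i\Sigma_{xx}H^{j\transpose}=\Sigma_{y^iy^i}^{1/2}\Sigma_{y^jy^j}^{1/2\transpose}$ and $\Sigma_{y^ix}=H^i\Sigma_{xx}=\Sigma_{y^iy^i}^{1/2}\Sigma_{xx}^{1/2\transpose}$, so these relations hold automatically for a suitable choice of square-root factors rather than imposing an extra constraint.

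The crux is to show that the cost $\bar c^i_t(\VEC m_t;\VEC x_t,\VEC u^1_t,\VEC u^2_t)$ of \eqref{eqn:hatcit} is an affine-quadratic function of $(\VSF X,\VSF U^1,\VSF U^2)=(\VEC s_t,\VEC u^1_t,\VEC u^2_t)$. The stage term $c^i_t$ is already quadratic in $(\VEC x_t,\VEC u^1_t,\VEC u^2_t)$, hence in $(\VEC s_t,\VEC u^1_t,\VEC u^2_t)$. For the cost-to-go, I would use that $F^1_t$ is affine (Lemma \ref{lemma:lqgmean}) and $V^i_{t+1}$ is affine-quadratic by hypothesis, so, with $\VEC m_t$ fixed, $V^i_{t+1}(F^1_t(\VEC m_t,\VEC Z_{t+1}))$ is affine-quadratic in $\VEC Z_{t+1}$. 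By \eqref{eq:lqgcommoninfo}, $\VEC Z_{t+1}$ is a linear projection of $(\VEC P^1_t,\VEC P^2_t,\VEC U^1_t,\VEC U^2_t,\VEC Y^1_{t+1},\VEC Y^2_{t+1})$; and conditioned on $(\VEC s_t,\VEC u^1_t,\VEC u^2_t)$, equations \eqref{eq:lqgdynamics}--\eqref{eq:lqgobserv} give $\VEC Y^i_{t+1}=H^i_{t+1}(A_t\VEC x_t+B^1_t\VEC u^1_t+B^2_t\VEC u^2_t)+H^i_{t+1}\VEC W^0_t+\VEC W^i_{t+1}$, affine in $(\VEC x_t,\VEC u^1_t,\VEC u^2_t)$ plus independent zero-mean Gaussian noise. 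Hence the conditional law of $\VEC Z_{t+1}$ is Gaussian with mean affine in $(\VEC s_t,\VEC u^1_t,\VEC u^2_t)$ and a data-independent covariance; taking the expectation of an affine-quadratic function over such a law yields a quadratic-in-the-mean term plus a constant trace term, i.e.\ an affine-quadratic function of $(\VEC s_t,\VEC u^1_t,\VEC u^2_t)$. Reading off coefficients produces the matrices $C,E$, the row vectors $d_i,f_i$, and the constants $r^1,r^2$ of \textbf{AG1}.

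Finally I would verify the sign conditions. Since $c^i_t\geq 0$ and $V^i_{t+1}\geq 0$ (value functions are expectations of non-negative costs), $\bar c^i_t(\VEC m_t;\cdot)\geq 0$; a non-negative affine-quadratic form has a positive semidefinite quadratic part, so $C\geq 0$ and $E\geq 0$. The own-control blocks retain strict positivity because the cost-to-go contributes only a positive semidefinite quadratic form, whence $C_{22}=R^1_{22}+(\text{PSD})>0$ and $E_{33}=R^2_{33}+(\text{PSD})>0$ from the positive-definite own-control blocks of $R^1,R^2$. This confirms that $SG_t(\VEC m_t)$ meets every requirement of \textbf{AG1}. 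I expect the main obstacle to be precisely the cost-to-go computation: propagating the affine-quadratic form through the affine map $F^1_t$, the linear projection $\zeta_{t+1}$, and the conditional-Gaussian expectation, while confirming that the induced covariance (hence the additive constant) is data-independent. By contrast, once the identification $\VSF X=\VEC s_t$, $\VSF Y^i=\VEC P^i_t$ is fixed, the covariance-factorization and definiteness checks are routine.
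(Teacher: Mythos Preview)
Your proposal is correct and follows essentially the same route as the paper: the same identification $\VSF X=\VEC S_t$, $\VSF Y^i=\VEC P^i_t$, $\VSF U^i=\VEC U^i_t$, the observation that $\VSF Y^i=H^i\VSF X$ with $H^i$ a coordinate projection, and the reduction of $\bar c^i_t$ to an affine-quadratic form by expanding $V^i_{t+1}\circ F^1_t$ and substituting for $\VEC Z_{t+1}$ via \eqref{eq:lqgdynamics}--\eqref{eq:lqgobserv}. You are slightly more thorough than the paper in two places: you spell out why the covariance factorization \eqref{eqn:sigmaij} follows automatically from the noise-free observation model (the paper simply asserts this), and you verify the definiteness conditions $C\geq 0$, $E\geq 0$, $C_{22}>0$, $E_{33}>0$ required by the \textbf{AG1} template, which the paper's proof omits.
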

\begin{proof}
Consider a time step $t\in\{1,\ldots,T-1\}$ and a realization $\VEC c_t$ of the common information at time step $t$. Define
\beqq{\VSF X = \VEC S_t = \bmat{\VEC X_t \\ \VEC P^1_t\\ \VEC P^2_t},\quad \VSF Y^i = \VEC P^i_t,\quad \VSF U^i = \VEC U^i_t, \qquad i=1,2.}
The probability measure on the state $\VSF X$ is taken to be equal to the common information based conditional measure $\pi_t(d\VEC s_t) = \pr{d\VEC s_t|\VEC c_t}$, that admits a Gaussian density function with mean $\VEC m_t$ (dependent on $\VEC c_t$) and variance $\Sigma_t$, which are defined in Lemma \ref{lemma:lqgmean}. The observation $\VSF Y^i$ of auxiliary Controller $i$ is the private information $\VEC P^i_t$.

We now prove that the one-stage Bayesian game defined above is an instance of the Auxiliary game {\bf AG1}. First, note that $\VSF Y^i = H^i\VSF X$ for some appropriate matrix $H^i$, $i\in\{1,2\}$, which implies that the assumption on the covariance matrix of the auxiliary game given in \eqref{eqn:sigmaij} is satisfied by the auxiliary game defined above. We just need to verify that the cost functions of the controllers of the one-stage Bayesian game are of the same form as \eqref{eqn:j1aux} and \eqref{eqn:j2aux}. 

At any time step $t\leq T-1$, let us assume that the expected value function of the Controller $i$ is $V^i_{t+1}(\VEC m_{t+1}) = \VEC m^{\transpose}_{t+1}\Phi^i_{t+1}\VEC m_{t+1}+\Xi^i_{t+1}\VEC m_{t+1}+\Upsilon^i_{t+1}$ for some appropriate positive definite matrix $\Phi^i_{t+1}$, matrix $\Xi^i_{t+1}$ and a non-negative real number $\Upsilon^i_{t+1}$. Recall from Lemma \ref{lemma:lqgmean} that $\VEC M_{t+1}:= F^1_t(\VEC m_t,\VEC Z_{t+1})$, where $F^1_t$ is an affine function of $\VEC m_t$ and $\VEC Z_{t+1}$. Thus, the cost-to-go for the Controller $i\in\{1,2\}$ in the one-stage Bayesian game at time step $t< T$ with mean $\VEC m_t$ is of the form
\beq{\label{eqn:citlqg}\check{\check{c}}^i_{t}(\VEC m_t;\VEC S_t,\VEC U^1_t,\VEC U^2_t,\VEC Z_{t+1}) &:=& c^i_t(\VEC X_t,\VEC U^1_t,\VEC U^2_t)+(F^{1}_t(\VEC m_t,\VEC Z_{t+1}))^\transpose \Phi^i_{t+1} F^1_t(\VEC m_t,\VEC Z_{t+1})\nonumber\\ & & +\Xi^i_{t+1} F^{1}_t(\VEC m_t,\VEC Z_{t+1})+ \Upsilon^i_{t+1}.\qquad}
Now, recall the definition of $\VEC Z_{t+1}$ in Assumption \ref{assm:lqginfoevolution}, and substitute for $\VEC Y^1_{t+1}$ and $\VEC Y^2_{t+1}$ in the expression for $\VEC Z_{t+1}$ in terms of $\VEC X_t$, $\VEC U^1_t$, $\VEC U^2_t$ and noises using \eqref{eq:lqgdynamics} and \eqref{eq:lqgobserv}. Thus, $\VEC Z_{t+1}$ is an affine map of $\VEC S_t$, $\VEC U^1_t$, $\VEC U^2_t$ and noises $\VEC W^0_t,\VEC W^1_{t+1}$ and $\VEC W^2_{t+1}$. Also recall from Lemma \ref{lemma:lqgmean} that $F^1_t$ is an affine map of its arguments. Define $\bar{c}^i_t, i=1,2$ as
\beqq{\bar{c}^i_{t}(\VEC m_t;\VEC S_t,\VEC U^1_t,\VEC U^2_t) = \ex{\check{\check{c}}^i_{t}(\VEC m_t;\VEC S_t,\VEC U^1_t,\VEC U^2_t,\VEC Z_{t+1})|\VEC S_t,\VEC U^1_t,\VEC U^2_t}.}
Thus, the expression for cost function $\bar{c}^i_t$, $i=1,2$ are precisely of the forms
\beq{\label{eqn:barc1t}\bar{c}^1_{t}(\VEC m_t;\VEC S_t,\VEC U^1_t,\VEC U^2_t) &=& \bmat{\VEC S_t^\transpose,\VEC U_t^{1\transpose},\VEC U_t^{2\transpose}}C_t\bmat{\VEC S_t \\\VEC U_t^1\\\VEC U_t^2} + 2\VEC m_t^\transpose D_t \bmat{\VEC S_t \\\VEC U_t^1\\\VEC U_t^2}+ r^1_t \VEC m_t+\tilde\Upsilon^1_{t+1},\qquad\\
\label{eqn:barc2t}\bar{c}^2_{t}(\VEC m_t;\VEC S_t,\VEC U^1_t,\VEC U^2_t) &=& \bmat{\VEC S_t^\transpose,\VEC U_t^{1\transpose},\VEC U_t^{2\transpose}}E_t\bmat{\VEC S_t \\\VEC U_t^1\\\VEC U_t^2} + 2\VEC m_t^\transpose F_t \bmat{\VEC S_t \\\VEC U_t^1\\\VEC U_t^2} + r^2_t \VEC m_t +\tilde\Upsilon^2_{t+1},\qquad}
where $C_t,D_t,E_t,F_t, r^1_t,r^2_t,\tilde{\Upsilon}^1_{t+1}, \tilde{\Upsilon}^2_{t+1}$ are dependent on matrices $R^i$, $\Phi^i_{t+1}$, $\Xi^i_{t+1}$ for $i=1,2$, the linear map $F^1_t$, the variances of noises $\VEC W^0_t,\VEC W^1_{t+1}$ and $\VEC W^2_{t+1}$, and the projection function $\zeta_{t+1}$, where $\zeta_{t+1}$ is defined in \eqref{eq:lqgcommoninfo}. The cost functions of the controllers given above are of the same form as considered in \eqref{eqn:j1aux} and \eqref{eqn:j2aux}. Thus, the one-stage Bayesian game at time $t$ with mean $\VEC m_t$ is the same as auxiliary game {\bf AG1}$(\VEC m_t,\Sigma_t,\bar{c}^1_t(\VEC m_t;\cdot),\bar{c}^2_t(\VEC m_t;\cdot))$. This completes the proof of the lemma.
\end{proof}

\begin{definition}\label{def:agtg1}
For any LQG game {\bf G1}, the corresponding one-stage Bayesian game at time step $t$ with mean $\VEC m_t$ and cost functions of the controllers given by \eqref{eqn:barc1t} and \eqref{eqn:barc2t} is referred to as ${\bf AG}_t({\bf G1};\VEC m_t)$. {\hfill$\Box$}
\end{definition}

Lemma \ref{lem:bayesiangamet} implies that the one-stage Bayesian game of the LQG game {\bf G1} at time $t$ with mean $\VEC m_t$ is an instance of an auxiliary game {\bf AG1}$(\VEC m_t,\Sigma_t,\bar{c}^1_{t}(\VEC m_t;\cdot),\bar{c}^2_{t}(\VEC m_t;\cdot))$, where $\bar{c}^i_{t}$ is defined in \eqref{eqn:citlqg}. If the matrix tuple $(C_t,E_t)$, as defined in \eqref{eqn:barc1t} and \eqref{eqn:barc2t}, satisfies Assumption \ref{assm:eigcost}, then for any realization $\VEC m_t$, Lemma \ref{lem:basargame} implies that there exists a unique Nash equilibrium of the one-stage Bayesian game at time step $t$, which is affine in the private information of the controllers. Furthermore, the expected equilibrium costs are affine-quadratic in the mean $\VEC m_t$. This crucial observation about LQG games leads us to the next theorem, which is also the main result of this section. First, we need the following assumption on the cost functions of the one-stage Bayesian games of game {\bf G1}. 

\begin{assumption}\label{assm:lqgcost} 
At every time step $t\in\{1,\ldots,T-1\}$ of game {\bf G1}, the matrix tuple $(C_t,E_t)$, as defined in \eqref{eqn:barc1t} and \eqref{eqn:barc2t} and obtained using the procedure outlined in Algorithm 1 in Subsection \ref{sub:stagebayesian}, satisfies Assumption \ref{assm:eigcost}. {\hfill$\Box$}
\end{assumption}

The main result of this section is now captured by the following theorem.

\begin{theorem}\label{thm:lqgg1}
Consider an LQG game {\bf G1} that satisfies Assumptions \ref{assm:lqginfoevolution} and \ref{assm:lqgseparation}. If Assumption \ref{assm:lqgcost} also holds for game {\bf G1}, then it admits a unique common information based Markov perfect equilibrium in the class of {\it all Borel measurable strategies} of the controllers. Furthermore, the equilibrium strategy of Controller $i\in\{1,2\}$ is affine in its information at all time steps.
\end{theorem}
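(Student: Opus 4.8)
The plan is to establish the theorem by backward induction on the time index $t$, using the auxiliary game results of Lemma \ref{lem:basargame} as the per-stage building block and Algorithm 1 as the organizing scaffold. The central claim I would carry through the induction is the following invariant: \emph{for each $t\in\{1,\ldots,T\}$, the expected value functions $V^i_t$ are affine-quadratic in $\VEC m_t$ (that is, of the form $\VEC m_t^\transpose\Phi^i_t\VEC m_t+\Xi^i_t\VEC m_t+\Upsilon^i_t$), and the one-stage Bayesian game $SG_t(\VEC m_t)$ admits a unique Bayesian Nash equilibrium that is affine in each controller's private information.} Once this invariant is shown to propagate backward, Theorem \ref{thm:mpevirtual} (which characterizes Markov perfect equilibria of \textbf{G2} via exactly these value functions) together with Theorem \ref{thm:lqgequiv} (which transports equilibria between \textbf{G2} and \textbf{G1}) immediately yields both the existence and uniqueness of the common information based Markov perfect equilibrium, and the affine form of the equilibrium strategies.

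First I would check the base case $t=T$. By \eqref{eqn:viT} the terminal value function is $V^i_T(\VEC m)=\mathds{E}[c^i_T(\VEC X_T)\mid\VEC M_T=\VEC m]$ with $c^i_T$ quadratic; integrating a quadratic against the Gaussian belief of mean $\VEC m$ and \emph{fixed} covariance $\Sigma_T$ (data independence from Lemma \ref{lemma:lqgmean}) produces exactly an affine-quadratic function of $\VEC m$, establishing the invariant at $T$. For the inductive step, I would assume the invariant at $t+1$ and invoke Lemma \ref{lem:bayesiangamet}, which already shows that under this affine-quadratic hypothesis the stage-$t$ Bayesian game $SG_t(\VEC m_t)$ is an instance of the auxiliary game $\textbf{AG1}(\VEC m_t,\Sigma_t,\bar c^1_t,\bar c^2_t)$ with cost matrices $(C_t,E_t)$ given by \eqref{eqn:barc1t}--\eqref{eqn:barc2t}. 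Assumption \ref{assm:lqgcost} guarantees that $(C_t,E_t)$ satisfies Assumption \ref{assm:eigcost}, so Part 2 of Lemma \ref{lem:basargame} gives a \emph{unique} Nash equilibrium in the full class of Borel measurable strategies, affine in $\VEC P^i_t$; and Part 3 gives the resulting expected equilibrium cost as an affine-quadratic function of the mean, which is precisely $V^i_t(\VEC m_t)$. This closes the induction: the invariant holds at $t$.

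I would then assemble the global conclusion. The uniqueness of the per-stage Bayesian Nash equilibrium at every $t$ and every realization $\VEC m_t$ forces the prescriptions $\psi^{i\star}_t$ produced by Algorithm 1 to be uniquely determined, so the Markov perfect equilibrium of \textbf{G2} produced by Theorem \ref{thm:backward_ind} is unique; applying the bijection $\varrho^i$/$\varsigma^i$ of Definition \ref{def:var} and Theorem \ref{thm:lqgequiv} then gives a unique common information based Markov perfect equilibrium of \textbf{G1}. The affine form of the Controller strategies follows because each stage equilibrium is affine in $\VEC P^i_t$ with coefficients $T^i_t$, $L^i_t$, $l^i_t$ determined by \eqref{eqn:t1}--\eqref{eqn:t2} and \eqref{eqn:b1b2}, and the composition $g^i_t(\cdot,\VEC c_t)=\psi^{i\star}_t(\VEC m_t)$ remains affine in the controller's information once $\VEC m_t$ (itself an affine function of the common information, by Lemma \ref{lemma:lqgmean}) is substituted.

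The main obstacle I anticipate is \emph{verifying that the affine-quadratic invariant is genuinely preserved with the regularity needed to reapply Lemma \ref{lem:basargame}}, not merely preserved in form. Two subtleties demand care. First, Assumption \ref{assm:eigcost} is a nontrivial spectral condition on $(C_t,E_t)$, and these matrices are built recursively from $R^i$, $\Phi^i_{t+1}$, $\Xi^i_{t+1}$, the noise covariances, and the projection $\zeta_{t+1}$; the clean way to handle this is simply to \emph{assume} it stagewise via Assumption \ref{assm:lqgcost} rather than to derive it, which the theorem statement indeed does. Second, I must confirm that the $\Phi^i_t$ emerging from Part 3 of Lemma \ref{lem:basargame} retains the positive (semi)definiteness that Lemma \ref{lem:bayesiangamet} tacitly uses at the next step down; this requires tracking that $\Phi^i_t=(\tilde T+\tilde L)^\transpose C(\tilde T+\tilde L)$-type expressions stay well defined and that the cost is genuinely quadratic-plus-affine in $\VEC m_t$ after the trace/covariance terms (which depend only on the data-independent $\Sigma_t$) are absorbed into the constant $\Upsilon^i_t$. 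Beyond that, the argument is a routine induction; the conceptual weight has already been carried by Lemmas \ref{lem:basargame} and \ref{lem:bayesiangamet} and by Theorems \ref{thm:lqgequiv} and \ref{thm:mpevirtual}.
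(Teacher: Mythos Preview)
Your proposal is correct and follows essentially the same approach as the paper: backward induction via Algorithm 1, using Lemma \ref{lem:bayesiangamet} to identify each stage game as an instance of \textbf{AG1}, then Assumption \ref{assm:lqgcost} together with Lemma \ref{lem:basargame} to obtain the unique affine stage equilibrium and the affine-quadratic value function needed for the next step down. Your write-up is in fact more explicit than the paper's own proof about the induction invariant and the role of Lemma \ref{lemma:lqgmean} in converting affineness in $\VEC m_t$ to affineness in the common information; the concern you flag about positive (semi)definiteness of $\Phi^i_t$ is minor, since $\Phi^i_t=(\tilde T+\tilde L)^\transpose C(\tilde T+\tilde L)$ with $C\ge 0$ is automatically positive semidefinite, and Lemma \ref{lem:bayesiangamet} only needs the affine-quadratic form to hold.
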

\begin{proof}
We follow the procedure outlined in Algorithm 1 in Subsection \ref{sub:stagebayesian}. At time step $T-1$, let $\VEC m_{T-1}$ be a realization of the common information based conditional mean. Consider the one-stage Bayesian game ${\bf AG}_{T-1}({\bf G1};\VEC m_{T-1})$ at time $T-1$. Since Assumption \ref{assm:lqgcost} holds, we use the result of Lemma \ref{lem:basargame} to conclude that a unique Nash equilibrium policies of the controllers exist. Furthermore, the Nash equilibrium policies of the one-stage Bayesian game ${\bf AG}_{T-1}({\bf G1};\VEC m_{T-1})$ are affine in the conditional mean and the private information of the controllers at that time step. Since the conditional mean $\VEC m_{T-1}$ is affine in the common information of the controllers, we conclude that the Nash equilibrium policies of the controllers in the one-stage Bayesian game ${\bf AG}_{T-1}({\bf G1};\VEC m_{T-1})$ is affine in the information of the controllers. 

We continue this process for all possible means $\VEC m_t\in\ALP S_t$ at all time steps $t\in\{T-1,T-2,\ldots,1\}$ to conclude that there is a unique common information based Markov perfect equilibrium for game {\bf G1}.
\end{proof}

\begin{remark}
It should be noted that one can compute the Bayesian Nash equilibrium of game {\bf AG1} simply by solving a set of linear equations. In the dynamic LQG game {\bf G1} satisfying the sufficient conditions of Theorem \ref{thm:lqgg1}, the controllers can just solve a set of linear equations at successive time steps to obtain the unique common information based Markov perfect equilibrium; thus, computing the equilibrium is inexpensive in the class of LQG games.{\hfill $\Box$}
\end{remark}

\subsection{LQG Games not satisfying Assumption \ref{assm:lqgcost}}\label{sub:not}
In this subsection, we show that even if an LQG game does not satisfy Assumption \ref{assm:lqgcost}, it may still admit a common information based Markov perfect equilibrium under some mild conditions. However, we cannot claim uniqueness of that equilibrium like we did in Theorem \ref{thm:lqgg1} in the previous subsection.

Consider the auxiliary game {\bf AG1} discussed above. We needed Assumption \ref{assm:eigcost} in two places in the result of Lemma \ref{lem:basargame} above - (i) to conclude the uniqueness of Nash equilibrium provided that it exists, and (ii) to show the existence of matrices $l^1,l^2, L^1, L^2,T^1$ and $T^2$ as defined in \eqref{eqn:b1b2}, \eqref{eqn:t1} and \eqref{eqn:t2} above. However, if we drop this assumption and instead make a milder assumption, then we can obtain a result that is weaker than what we got above. First, we state the assumption we make to obtain the weaker result.  

\begin{assumption}\label{assm:eigcostb}
In game {\bf AG1}, the either $(I - C_{22}^{-1}C_{23}E_{33}^{-1}E_{23}^{\transpose})$ is invertible or $(I-E_{33}^{-1}E_{23}^{\transpose}C_{22}^{-1}C_{23})$ is invertible, and there exists a unique solution to the coupled pair of equations \eqref{eqn:t1} and \eqref{eqn:t2}. {\hfill$\Box$}
\end{assumption}

If we make this assumption on the auxiliary game {\bf AG1}, then we can conclude the following about the Nash equilibrium of the game.
  
\begin{lemma}\label{lem:basargameb}
If the auxiliary game {\bf AG1}$(\VSF m,\Sigma,c^1,c^2)$ satisfies Assumption \ref{assm:eigcostb}, then the game admits a Nash equilibrium. The expressions of Nash equilibrium control laws and expected costs are the same as in Lemma \ref{lem:basargame}. 
\end{lemma}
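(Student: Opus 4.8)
The plan is to follow the same argument as in Lemma \ref{lem:basargame}, but observe precisely where Assumption \ref{assm:eigcost} was used and replace it with the weaker Assumption \ref{assm:eigcostb}. Recall that Assumption \ref{assm:eigcost} (the spectral radius condition $\bar\lambda(K)<1$) enters the proof of Lemma \ref{lem:basargame} in exactly two ways: first, to guarantee that the coupled fixed-point equations \eqref{eqn:t1}--\eqref{eqn:t2} for the gain matrices $T^1,T^2$ (and correspondingly the affine offset equations for $b^1,b^2$, i.e.\ $l^1,l^2,L^1,L^2$) possess a unique solution via a contraction/Neumann-series argument; and second, to rule out any additional Nash equilibria lying outside the affine class, thereby yielding \emph{uniqueness} of the equilibrium. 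Assumption \ref{assm:eigcostb} is designed to retain only the first of these two consequences in a weakened form: it directly postulates the \emph{existence of a unique solution} to \eqref{eqn:t1}--\eqref{eqn:t2}, and it asks that at least one of $(I - K_1)$ or $(I - K_2)$ be invertible, where $K_1 = C_{22}^{-1}C_{23}E_{33}^{-1}E_{23}^{\transpose}$ and $K_2 = E_{33}^{-1}E_{23}^{\transpose}C_{22}^{-1}C_{23}$. The invertibility of $(I-K_1)$ or $(I-K_2)$ is what allows one to solve the affine offset equations for $b^1,b^2$ in closed form.

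The key steps, in order, are as follows. First I would take the necessary-and-sufficient characterization from Part 1 of Lemma \ref{lem:basargame}, which holds for \emph{any} game {\bf AG1} since it is obtained merely by differentiating the (strictly convex, because $C_{22}>0$ and $E_{33}>0$) conditional cost in $\VSF U^i$ and setting the gradient to zero; this step does not use any eigenvalue hypothesis at all. Second, I would posit an affine candidate of the form \eqref{eqn:gistaraux}, $g^{i\star}(\VSF Y^i) = T^i(\VSF Y^i - \VSF m_{y^i}) + b^i$, substitute it into the two stationarity conditions, and separate the resulting identities into a ``gain'' part (the coefficient of $\VSF Y^i - \VSF m_{y^i}$, giving \eqref{eqn:t1}--\eqref{eqn:t2}) and a ``bias'' part (the constant term, giving the coupled equations for $b^1,b^2$). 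Third, Assumption \ref{assm:eigcostb} guarantees directly that the gain equations \eqref{eqn:t1}--\eqref{eqn:t2} are uniquely solvable for $(T^1,T^2)$, and the invertibility of $(I-K_1)$ or $(I-K_2)$ guarantees that the bias equations are solvable for $(b^1,b^2)$ in the affine form \eqref{eqn:b1b2}. Hence an affine Nash equilibrium exists, and its explicit expressions coincide with those in Lemma \ref{lem:basargame}; the expected-cost formula \eqref{eqn:auxeqcost} then follows by the identical substitution argument given in the proof of Part 3, since that computation used only the algebraic form of the equilibrium strategies, not their uniqueness.

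The main obstacle, and the reason the conclusion is genuinely weaker, is that under Assumption \ref{assm:eigcostb} one cannot rule out non-affine equilibria: the full uniqueness argument of Lemma \ref{lem:basargame} relied on the contraction property $\bar\lambda(K)<1$ to show that the reaction-curve composition is a contraction on the whole space $\ALP A\ALP G^1 \times \ALP A\ALP G^2$ of Borel measurable strategies, collapsing all equilibria to a single affine one. With only invertibility of $(I-K_1)$ or $(I-K_2)$ in hand, the relevant operator need not be a contraction, so while we can exhibit one equilibrium (the affine one constructed above), we cannot assert it is the only one. Accordingly the statement of the lemma claims existence of \emph{a} Nash equilibrium with the stated expressions, and deliberately omits the uniqueness clause. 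I would therefore structure the proof to emphasize that Parts 1 and 3 of Lemma \ref{lem:basargame} go through verbatim, that Part 2's \emph{construction} of the affine equilibrium survives under the weaker hypothesis, and that only the uniqueness assertion of Part 2 is lost.
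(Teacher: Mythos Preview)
Your proposal is correct and takes essentially the same approach as the paper, which simply states that the proof is analogous to that of Lemma~\ref{lem:basargame}. You have accurately identified the two places where Assumption~\ref{assm:eigcost} enters that proof and correctly observed that Assumption~\ref{assm:eigcostb} directly supplies the existence of $(T^1,T^2)$ and, via invertibility of $(I-K_1)$ or $(I-K_2)$, the solvability of the bias equations for $(b^1,b^2)$, while Parts~1 and~3 carry over verbatim and only the uniqueness claim is forfeited.
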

\begin{proof}
The proof is analogous to the proof of Lemma \ref{lem:basargame}.
\end{proof}

This brings us to the following result for LQG games that may not satisfy Assumption \ref{assm:lqgcost}.
\begin{theorem}\label{thm:lqgg2}
Consider an LQG game {\bf G1} that satisfies Assumptions \ref{assm:lqginfoevolution} and \ref{assm:lqgseparation}. For all time steps $t\in\{1,\ldots,T-1\}$ and realizations of the mean $\VEC m_t\in\ALP S_t$, obtain the one-stage Bayesian game ${\bf AG}_t({\bf G1};\VEC m_t)$ by following the steps of Algorithm 1 in Subsection \ref{sub:stagebayesian}. If ${\bf AG}_t({\bf G1};\VEC m_t)$ satisfies Assumption \ref{assm:eigcostb} for all $t\in\{1,\ldots,T-1\}$ and $\VEC m_t\in\ALP S_t$, then game {\bf G1} admits a common information based Markov perfect equilibrium.
\end{theorem}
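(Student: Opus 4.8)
The plan is to reproduce the backward-induction construction that proves Theorem~\ref{thm:lqgg1}, but with Lemma~\ref{lem:basargameb} substituted for Lemma~\ref{lem:basargame} at every stage and with the uniqueness conclusion dropped. The inductive invariant I would carry down from $t=T$ is that each controller's expected value function $V^i_t$ is an affine-quadratic function of $\VEC m_t$; preserving this invariant is precisely what keeps Lemma~\ref{lem:bayesiangamet} applicable at each successive stage.

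For the base case at the terminal time, \eqref{eqn:viT} gives $V^i_T(\VEC m)=\tilde c^i_T(\VEC m)=\EXP[\VEC X_T^\transpose R^i_{11}\VEC X_T\mid\VEC M_T=\VEC m]$. Since the common information based conditional law is Gaussian with mean $\VEC m$ and a data-independent covariance $\Sigma_T$ (Lemma~\ref{lemma:lqgmean}), this expectation equals $\VEC m^{0\transpose}R^i_{11}\VEC m^{0}$ plus a constant trace term, hence is affine-quadratic in $\VEC m$. For the inductive step at time $t$, I assume $V^i_{t+1}$ is affine-quadratic and apply Lemma~\ref{lem:bayesiangamet} to identify the one-stage game ${\bf AG}_t({\bf G1};\VEC m_t)$ with an instance of {\bf AG1}$(\VEC m_t,\Sigma_t,\bar c^1_t(\VEC m_t;\cdot),\bar c^2_t(\VEC m_t;\cdot))$, whose cost matrices $(C_t,E_t)$ satisfy Assumption~\ref{assm:eigcostb} by hypothesis. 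Lemma~\ref{lem:basargameb} then yields a Nash equilibrium $(g^{1\star},g^{2\star})$ of the stage game, affine in the private information as in \eqref{eqn:gistaraux}, together with equilibrium costs given by the same formula \eqref{eqn:auxeqcost} as in Lemma~\ref{lem:basargame}. Setting $\psi^i_t(\VEC m_t):=g^{i\star}$ and letting $V^i_t(\VEC m_t)$ be the corresponding equilibrium cost, formula \eqref{eqn:auxeqcost} shows $V^i_t$ is again affine-quadratic in $\VEC m_t$, which closes the induction.

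Having run the sweep down to $t=1$, the resulting strategies $\boldsymbol\psi^i=(\psi^i_1,\ldots,\psi^i_{T-1})$ are exactly those generated by Algorithm~1; by Theorem~\ref{thm:backward_ind} they form a Markov perfect equilibrium of game {\bf G2}, and projecting them through the operators $\varrho^i$ of Definition~\ref{def:var} produces a common information based Markov perfect equilibrium of {\bf G1}. This establishes existence, which is all the theorem claims.

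The step I expect to require the most care is verifying that the affine-quadratic form of $V^i_t$ is genuinely reproduced rather than escalating in degree: after substituting the affine stage prescriptions and integrating out the Gaussian increment $\VEC Z_{t+1}$ through the affine map $F^1_t$ (Lemma~\ref{lemma:lqgmean}), the dependence on $\VEC m_t$ must collapse back to an affine-quadratic expression, which is exactly what the structured formula \eqref{eqn:auxeqcost} delivers. Finally, I would stress why uniqueness is unavailable here: Assumption~\ref{assm:eigcostb} guarantees only existence of the stage equilibrium (and uniqueness merely within the affine class, via unique solvability of \eqref{eqn:t1}--\eqref{eqn:t2}), so non-affine Nash equilibria of the one-stage Bayesian games are not excluded, and different stage selections may generate distinct common information based Markov perfect equilibria.
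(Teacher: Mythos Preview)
Your proposal is correct and follows essentially the same approach as the paper's proof: a backward induction that mirrors the proof of Theorem~\ref{thm:lqgg1}, replacing Lemma~\ref{lem:basargame} by Lemma~\ref{lem:basargameb} at each stage and dropping the uniqueness claim. You are in fact more explicit than the paper about the inductive invariant (the affine-quadratic form of $V^i_t$) and about invoking Theorem~\ref{thm:backward_ind} to conclude; the paper's one additional remark---that for a fixed affine strategy of one controller the other's one-sided dynamic optimization still yields an affine optimizer---is implicitly covered by your appeal to Algorithm~1 and Theorem~\ref{thm:backward_ind}.
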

\begin{proof}
The proof follows from the same arguments as in the proof of Theorem \ref{thm:lqgg1}. It should be noticed that for a fixed affine strategy of Controller 1, the one-person dynamic optimization problem for Controller 2 can be solved using a dynamic programming (existence follows from Assumption \ref{assm:eigcostb}) to obtain optimal strategies that are linear in its information. 
\end{proof}

Notice that we do not claim uniqueness of the common information based Markov perfect equilibrium for LQG games that do not satisfy Assumption \ref{assm:lqgcost}.
\subsection{An Illustrative Example}\label{sub:global}
In this section, we consider an example of a two-player non-zero sum game considered above. There are three states in the game, out of which one is a global state that is observed by both controllers, and two states are local states of the controllers. The state evolution is given by
\beqq{\VEC X^0_{t+1} &=& A \VEC X^0_t+B^1\VEC U^1_t+B^2\VEC U^2_t+\VEC W^{0}_t,\\
\VEC X^i_{t+1} &=& A^i \VEC X^0_t+D^{i}_1\VEC U^1_t+D^i_2 \VEC U^2_t+\VEC W^i_t,\quad i=1,2,}
where $\VEC X^0_t$ is the global state and $\VEC X^i_t,i=1,2$ are the local states for $t=1,\ldots,T-1$. The noise processes $\VEC W^i_t$ are assumed to be mutually independent mean-zero Gaussian random variables with variances $\Lambda^i_t$ for $i=0,1,2$ and $t=1,\ldots,T-1$. Player $i$'s information $\VEC I^i_t$ lies in a Euclidean space, that is $\ALP I^i_t = \ALP X^0_{1:t}\times\ALP X^i_{1:t}\times\ALP U^{1:2}_{1:t-1}$. The cost functions of the players are given by
\beqq{c^i_T(\VEC x^0_T,\VEC x^i_T) &=& \VEC x^{0\transpose}_T Q^i \VEC x^0_T+\VEC x^{i\transpose}_T Q^i \VEC x^i_T,\\
c^i_t(\VEC x^0_t,\VEC x^i_t,\VEC u^{1:2}_t) &=& \VEC x^{0\transpose}_t Q^i \VEC x^0_t+\VEC x^{i\transpose}_t Q^i \VEC x^i_t+\VEC u^{1\transpose}_{t}R^i\VEC u^1_t+\VEC u^{2\transpose}_{t}S^i\VEC u^2_t, \quad i=1,2.}
We first verify in the following lemma that both assumptions on the information structure is satisfied by this game. Toward this end, recall that the space of common information of the players is $\ALP C_t:=\ALP X^0_{1:t}\times\ALP U^{1:2}_{1:t-1}$ and the space of private information of Player $i$ is $\ALP P^i_t:=\ALP X^i_{1:t}$.
\begin{lemma}
\label{lem:global}
The information structure of the controllers in the game defined above satisfies Assumptions \ref{assm:lqginfoevolution} and \ref{assm:lqgseparation}.
\end{lemma}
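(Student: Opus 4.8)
The plan is to verify the two assumptions directly by exhibiting the explicit projection functions $\zeta_{t+1}$ and $\xi^i_{t+1}$ required by Assumption \ref{assm:lqginfoevolution}, and then to check that the resulting conditional beliefs are strategy-independent as demanded by Assumption \ref{assm:lqgseparation}. First I would observe that the common information space is $\ALP C_t=\ALP X^0_{1:t}\times\ALP U^{1:2}_{1:t-1}$ and the private information of Player $i$ is $\ALP P^i_t=\ALP X^i_{1:t}$. The increment in common information going from $t$ to $t+1$ is $\VEC Z_{t+1}=(\VEC X^0_{t+1},\VEC U^1_t,\VEC U^2_t)$. I would then read off from the state evolution equation for $\VEC X^0_{t+1}$ that this increment is indeed a function of $(\VEC P^1_t,\VEC P^2_t,\VEC U^1_t,\VEC U^2_t,\VEC Y^1_{t+1},\VEC Y^2_{t+1})$ in the sense required by \eqref{eq:lqgcommoninfo}. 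In this example the ``observations'' $\VEC Y^i_{t+1}$ should be identified with the directly observed quantities $(\VEC X^0_{t+1},\VEC X^i_{t+1})$, so that $\VEC X^0_{t+1}$ is literally one of the observed coordinates and the projection $\zeta_{t+1}$ is simply the coordinate projection onto $(\VEC X^0_{t+1},\VEC U^1_t,\VEC U^2_t)$. Likewise the private information update $\VEC P^i_{t+1}=\VEC X^i_{1:t+1}=(\VEC X^i_{1:t},\VEC X^i_{t+1})$ is an affine (indeed coordinate-augmenting) function of $(\VEC P^i_t,\VEC X^i_{t+1})$, and since $\VEC X^i_{t+1}$ is determined by $\VEC X^0_t,\VEC U^1_t,\VEC U^2_t$ and the noise $\VEC W^i_t$, it is of the form $\xi^i_{t+1}([\VEC P^{i\transpose}_t,\VEC U^{i\transpose}_t,\VEC Y^{i\transpose}_{t+1}]^\transpose)$ required by \eqref{eq:lqgprivateinfo}. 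This establishes Assumption \ref{assm:lqginfoevolution}.

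For Assumption \ref{assm:lqgseparation}, the key step is to argue that the conditional belief $\pi_t=\PR\{d\VEC s_t\mid\VEC C_t\}$ on the state-and-private-information tuple $(\VEC X^0_t,\VEC X^1_{1:t},\VEC X^2_{1:t})$ does not depend on the control laws used. The cleanest way I would do this is to exploit the affine structure: write the private states $\VEC X^i_{1:t}$ as the sum of a control-dependent term that is a deterministic function of the common information $\VEC C_t$ (which contains $\VEC X^0_{1:t}$ and $\VEC U^{1:2}_{1:t-1}$) plus a purely noise-driven term that is independent of the controls and of the common information. Concretely, because each $\VEC X^i_{\tau+1}=A^i\VEC X^0_\tau+D^i_1\VEC U^1_\tau+D^i_2\VEC U^2_\tau+\VEC W^i_\tau$ has its control and global-state contribution $A^i\VEC X^0_\tau+D^i_1\VEC U^1_\tau+D^i_2\VEC U^2_\tau$ fully measurable with respect to $\VEC C_t$, conditioning on $\VEC C_t$ removes this term entirely and leaves only the residual $\sum_\tau \VEC W^i_\tau$-type noise. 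Thus the conditional law of $(\VEC X^1_{1:t},\VEC X^2_{1:t})$ given $\VEC C_t$ is the (fixed, control-free) law of the accumulated independent Gaussian noises, merely shifted by a $\VEC C_t$-measurable mean. This is exactly the content of strategy-independence of beliefs, and it simultaneously exhibits the Gaussian form promised by Lemma \ref{lem:pitgaussian}.

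I expect the main obstacle to be the bookkeeping needed to show rigorously that the control-dependent part of the private states is genuinely $\VEC C_t$-measurable, i.e. that nothing in the private-state recursion depends on a noise realization that is \emph{not} recoverable from the common information. One must be careful that $\VEC X^0_\tau$ for $\tau\le t$ is in the common information (it is, since $\ALP C_t\supset\ALP X^0_{1:t}$) and that the past actions $\VEC U^{1:2}_{1:t-1}$ are as well; given these, the only stochastic ingredient left in $\VEC X^i_{1:t}$ is the own-noise sequence $\VEC W^i_{1:t-1}$, which is independent of $\VEC C_t$ and of the strategies because the strategies are $\VEC C_\tau$- and $\VEC P^i_\tau$-measurable functions of quantities determined before $\VEC W^i_\tau$ is realized. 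The cleanest formalization is to invoke Lemma \ref{lem:piupdate} to reduce the claim to verifying that the update map $F_t$ acts on $(\pi_t,\VEC z_{t+1})$ alone (with $\VEC z_{t+1}=(\VEC X^0_{t+1},\VEC U^1_t,\VEC U^2_t)$) and does not reference $\Gamma^1_t,\Gamma^2_t$; since $\VEC X^0_{t+1}$ enters $\VEC z_{t+1}$ and absorbs the control terms, the dependence of $F_t$ on the prescriptions collapses, which is precisely the form \eqref{eqn:pit1} asserted by Assumption \ref{assm:lqgseparation}.
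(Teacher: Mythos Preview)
Your proposal is correct and follows essentially the same approach as the paper, whose proof is only two sentences: it observes that the information structure is nested (giving Assumption~\ref{assm:lqginfoevolution}) and that the conditional law of the private information given $\VEC c_t$ is determined solely by the mutually independent Gaussian noises $\VEC W^i_t$, hence strategy-independent (giving Assumption~\ref{assm:lqgseparation}). You have simply spelled out the projection maps $\zeta_{t+1},\xi^i_{t+1}$ and the decomposition of $\VEC X^i_\tau$ into a $\VEC C_t$-measurable part plus independent noise that the paper leaves implicit.
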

\begin{proof}
Since the information structure is nested, Assumption \ref{assm:lqginfoevolution} is automatically satisfied. Given the common information $\VEC c_t$, it is easy to verify that the joint distribution of private informations of the players are Gaussian and independent of past strategies of the players since $\VEC W^i_t$ are mutually independent Gaussian random variables for all time steps, and $i=1,2$. This implies that Assumption \ref{assm:lqgseparation} is also satisfied.
\end{proof}

We are interested in computing common information based Markov perfect equilibrium of this game. At a time step $t$, since the private state of both controllers are affected by the global state at time step $t-1$ but not the private states at the previous time steps, the past realizations of private states of Controller $i$, $\VEC x^i_1,\ldots,\VEC x^i_{t-1}$, do not affect the common information based Markov perfect equilibrium. Therefore, for ease of exposition, we make minor changes in the notations from what we have used in previous sections. Let us denote the mean of the random variable $\VEC X^i_{t}$ given the common information $\VEC C_t$ as $\VEC M^i_{t}, i=1,2$. Note that the mean is given by $\VEC M^i_{t} =  A^i \VEC X^0_{t-1}+D^{i}_1\VEC U^1_{t-1}+D^i_2 \VEC U^2_{t-1}$. The variance of $\VEC X^i_t$ given the common information $\VEC C_t$ is $\Lambda^i_t$. Define $\VEC M_t := [\VEC X^{0\transpose}_t,\VEC M^{1\transpose}_t,\VEC M^{2\transpose}_t]^\transpose$ to be the conditional mean of the states 
$[\VEC X^{0\transpose}_t,\VEC X^{1\transpose}_t,\VEC X^{2\transpose}_t]^\transpose$ given the common information $\VEC C_t$. Also note that $\VEC Z_{t+1}=[\VEC U^{1\transpose}_t, \VEC U^{2\transpose}_t,\VEC X^{0\transpose}_{t+1}]^\transpose$. The evolution of $\VEC M_t$ is given by
\beq{\VEC M_{t+1} &=& \matrix{c}{\VEC X^0_{t+1}\\A^1 \VEC X^0_t+D^{1}_1\VEC U^1_t+D^1_2 \VEC U^2_t\\ A^2 \VEC X^0_t+D^{2}_1\VEC U^1_t+D^2_2 \VEC U^2_t}\nonumber\\
\label{eqn:f1tglobal}& =: & F^1_t(\VEC X^0_t,\VEC U^1_t,\VEC U^2_t,\VEC X^0_{t+1}).}
The conditional covariance matrix is $\Sigma_{t+1} := \text{diag}\{0,\Lambda^1_{t+1},\Lambda^2_{t+1}\}$. Let us assume that this game satisfies Assumption \ref{assm:lqgcost}. By Theorem \ref{thm:lqgg1}, we conclude that this game admits a unique common information based Markov perfect equilibrium. Now, we compute the unique common information based Markov perfect equilibrium of this game.

\begin{enumerate}
\item At the terminal time $T$, for each realization $\VEC m:=[\VEC x^{0\transpose}_T, \VEC m^{1\transpose},\VEC m^{2\transpose}]^\transpose$ of $\VEC M_{T}$, the one-stage Bayesian game  $SG_{T}(\VEC m)$ is defined as follows 
\begin{enumerate}
\item The conditional probability distribution on $\ALP X^0_{T}\times \ALP X^1_{T}\times\ALP X^2_{T}$ given the common information $\VEC c_T$ is a Gaussian density with mean $\VEC m$ and covariance $\Sigma_{T}$.
\item Player $i$ observes $\VEC X^0_{T}, \VEC X^i_{T}$, $i=1,2$. No action is chosen.
\item Player $i$'s cost is $c^i_T(\VEC x^0_T,\VEC x^i_T)$. 
\end{enumerate}
The expected costs as functions of beliefs are of the form $V^i_{T}(\VEC m) =\VEC x^{0\transpose}_T Q^i \VEC x^0_T +\VEC m^{i\transpose} Q^i\VEC m^i +\textrm{trace}(Q^i\Lambda^i_T)$.

\item At time $t<T$, for each realization $\VEC m$ of $\VEC M_{t}$, we define a one-stage Bayesian game $SG_{t}(\VEC m)$ where 
\begin{enumerate}
\item The probability distribution on $\ALP X^0_{t}\times \ALP X^1_{t}\times\ALP X^2_{t}$ is a Gaussian density with mean $\VEC m$ and covariance $\Sigma_{t}$.
\item  Player $i$ observes $\VEC X^0_{t}, \VEC X^i_{t}$ and chooses action $\VEC U^i_{t}$, $i=1,2$. 
\item  The cost of the subgame $SG_t(\VEC m)$ accrued by Player $i$ is 
\beqq{c^i_t(\VEC x^0_t,\VEC x^i_t,\VEC u^{1:2}_t) + \ex{V^i_{t+1}(F^1_{t}(\VEC x^0_t,\VEC u^1_{t},\VEC u^2_{t},\VEC X^0_{t+1}))},}
where $F^1_{t}$ is defined in \eqref{eqn:f1tglobal}. If we expand this cost function and write it in terms of $\VEC x^0_t,\VEC u^1_t$, $\VEC u^2_t$ and noise variables, then we observe that the resulting cost function is of the same form as in the auxiliary game {\bf AG1} considered in Subsection \ref{sec:basargame} above. The Nash equilibrium of the Bayesian game $SG_t(\VEC m)$ is computed using the result in Lemma \ref{lem:basargame}. The value functions of the virtual players are of the form $V^i_{t}(\VEC m) =\VEC x^{0\transpose}_t \Phi^i_t \VEC x^0_t +\VEC m^{i\transpose} \Xi^i_t \VEC m^i + \Upsilon^i_t$, where $\Phi^i_t$ and $\Xi^i_t$ are non-negative definite matrices and $\Upsilon^i_t$ is a non-negative real number.
\end{enumerate}
 Thus, we computed the unique common information based Markov perfect equilibrium of the game.
\end{enumerate}
In the next section, we show that there could be several other Nash equilibria of game {\bf G1}.

\section{A Game with Multiple Nash Equilibria}
\label{sec:information}
In previous sections, we outlined an algorithm that can be used to compute Nash equilibria of games that satisfy Assumptions \ref{assm:lqginfoevolution} and \ref{assm:lqgseparation}. In this section, we exhibit an example of a game of asymmetric information that has several Nash equilibria and a unique common information based Markov perfect equilibrium. This reinforces Remark \ref{rem:multiequilibria} stated after Theorem \ref{thm:mpevirtual}, which points out that our algorithm computes only a subclass of those Nash equilibria that can be obtained using the Markov perfect equilibrium of the corresponding symmetric information game between the virtual players.

To illustrate the existence of multiple Nash equilibria, we follow the lines in Example 1 of \cite[p. 241]{Basarmulti}, and consider the following two-stage game in which all variables are scalar:
\beqq{x_2 &=& x_1+u^1_1+u^2_1+w^0_1, \quad y^1_1 = x_1+w^1_1,\quad y^2_1 = x_1+w^2_1,\\
x_3 &=& x_2+u^2_2+w^0_2,\qquad\quad y^2_2 = x_2+w^2_2.} 
The primitive random variables $\{X_1,W^0_1,W^0_2,W^1_1,W^2_1,W^2_2\}$ are all mutually independent, mean zero Gaussian random variables with unit variance. The information structure of Controller $1$ is $\VEC I^1_1 = Y^1_1$ and $\VEC I^1_2 = [U^1_1,U^2_1,Y^1_1,Y^2_1]^\transpose$, and the information structure of Controller $2$ is $\VEC I^2_1 = Y^2_1$ and $\VEC I^2_2 = [Y^2_2,U^1_1,U^2_1,Y^1_1,Y^2_1]^\transpose$. Since the information structure is of the one-step delayed sharing pattern type, this game satisfies Assumptions \ref{assm:lqginfoevolution} and \ref{assm:lqgseparation}. The cost functions of the controllers are
\beqq{J^1(g^1_1,g^2_1,g^2_2) &=& \ex{(X_3)^2+(U^1_1)^2},\\
J^2(g^1_1,g^2_1,g^2_2)  &=& \ex{(X_3)^2+(U^2_1)^2+(U^2_2)^2}.}
The Kalman filter equations for evolution of common information based belief are given by
\beqq{m^0_2 &=& \ex{X_2|\VEC c_2} = \frac{y^1_1+y^2_1}{3}+u^1_1+u^2_1,\quad \Sigma^0_2 = \ex{(X_2-m^0_2)^2} = \frac{4}{3}.}
Now, we compute the common information based Markov perfect equilibrium of this game.
\subsection{Common Information based Markov Perfect Equilibrium}
Recall that since this is an LQG game, there exists a unique common information based Markov perfect equilibrium. We now apply our algorithm to compute the common information based Markov perfect equilibrium of this game. In the sub-game starting at time step $t=2$, Controller 2 is the only player acting and the cost-to-go function for the sub-game starting at $t=2$ is strictly convex in the control action of Controller 2. The unique optimal control law for Controller 2 for the sub-game at $t=2$, given its information, is given by   
\beq{\label{eqn:g2}g^{2\star}_2(\VEC i^2_2) = -\frac{1}{2}\ex{X_2|\VEC i^2_2} = -\frac{1}{2}\ex{X_2|m^0_2,y^2_2}  = -\frac{1}{2}\left(m^0_2+\frac{4}{7}(y^2_2-m^0_2)\right).}
The expected value functions of the controllers at time step $2$ are
\beqq{V^1_2(m^0_2) = \frac{1}{4}(m^0_2)^2+\frac{37}{21},\qquad V^2_2(m^0_2) = \frac{1}{2}(m^0_2)^2+\frac{41}{21}.}
Now, using the result of Lemma \ref{lem:basargame}, we obtain the unique Nash equilibrium of the sub-game starting at the first time step to be
\beq{\label{eqn:g1g2}g^{1\star}_1(y^1_1) = -\frac{5}{59} y^1_1,\qquad g^{2\star}_1(y^2_1) = -\frac{9}{59} y^2_1.}
Thus, this game has a unique common information based Markov perfect equilibrium $(g^{1\star}_1,(g^{2\star}_1,g^{2\star}_2))$. In the next subsection, we show that there exists a continuum of Nash equilibria in this game, and those equilibria cannot be obtained using our approach.

\subsection{Other Nash Equilibria}
In this subsection, we show that if Controller 2 uses the common information (instead of only mean $m^0_2$) to construct its control law at time step $t=2$, then we have a continuum of Nash equilibrium in this game.

We now define a tuple of strategies $(g^{1\dagger}_1,(g^{2\dagger}_1,g^{2\dagger}_2))\in \ALP G^1_1\times\ALP G^2_{1:2}$ of both controllers, parametrized by a real number $\lambda\neq -59/22$:
\beq{g^{1\dagger}_1(y^1_1) &=& -\frac{10\lambda +5}{22\lambda +59} y^1_1,\quad\qquad g^{2\dagger}_1(y^2_1) = -\frac{2\lambda+9}{22\lambda+59} y^2_1,\nonumber\\
\label{eqn:g2dagger} g^{2\dagger}_2(\VEC i^2_2) &=& -\frac{1}{2}\left(m^0_2+\frac{4}{7}(y^2_2-m^0_2)\right) + \lambda\left(u^1_1 -g^{1\dagger}_1(y^1_1)\right).}

%

We now have the following result:
\begin{lemma}
The strategy profile $(g^{1\dagger}_1,(g^{2\dagger}_1,g^{2\dagger}_2))$ is a Nash equilibrium of the game formulated above for any value of $\lambda\neq -59/22$. 
\end{lemma}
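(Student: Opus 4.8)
The plan is to verify the two defining inequalities of a Nash equilibrium directly: that $g^{1\dagger}_1$ is a best response to $(g^{2\dagger}_1,g^{2\dagger}_2)$, and that $(g^{2\dagger}_1,g^{2\dagger}_2)$ is a best response to $g^{1\dagger}_1$. The structural observation that organizes the whole argument is that the extra summand $\lambda(u^1_1-g^{1\dagger}_1(y^1_1))$ in $g^{2\dagger}_2$ is an \emph{off-equilibrium threat}: it vanishes identically on every sample path on which Controller~1 plays $g^{1\dagger}_1$, and it depends only on $(u^1_1,y^1_1)$ and never on $U^2_1$. Hence it reshapes Controller~1's time-$1$ incentives without ever altering Controller~2's realized action.

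First I would establish Controller~2's best response. Fix Controller~1 at $g^{1\dagger}_1$, so that $u^1_1=g^{1\dagger}_1(y^1_1)$ holds on every realization and the threat term is zero; thus $g^{2\dagger}_2$ coincides on the entire support of $\VEC i^2_2$ with the myopic rule $-\tfrac12\ex{X_2|\VEC i^2_2}$. Since the time-$2$ cost-to-go $\ex{(X_3)^2+(u^2_2)^2\mid\VEC i^2_2}$ is strictly convex in $u^2_2$, this rule is its unique minimizer, and evaluating it (using $\mathrm{Var}(X_2\mid\VEC c_2)=\tfrac43$ and the update through $y^2_2$, giving conditional error variance $\tfrac47$) yields the optimal cost-to-go $\tfrac12(\ex{X_2|\VEC i^2_2})^2+\tfrac{11}{7}$. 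By the principle of optimality Controller~2's remaining time-$1$ decision is then a scalar LQG problem; writing $u^2_1=\kappa y^2_1$ and $u^1_1=a\,y^1_1$ with $a=-\tfrac{10\lambda+5}{22\lambda+59}$, I would expand $\ex{(m^0_2)^2}$ via $\ex{(y^i_1)^2}=2$ and $\ex{y^1_1y^2_1}=1$, differentiate in $\kappa$, and obtain $\kappa=-(1+a)/6=-\tfrac{2\lambda+9}{22\lambda+59}$, which is exactly $g^{2\dagger}_1$.

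Next I would establish Controller~1's best response. Fix Controller~2 at $(g^{2\dagger}_1,g^{2\dagger}_2)$ and substitute $m^0_2=\tfrac{y^1_1+y^2_1}{3}+u^1_1+u^2_1$ and $u^2_2=g^{2\dagger}_2$ into $X_3=X_2+u^2_2+w^0_2$. Collecting terms, $X_3$ splits as $(\tfrac12+\lambda)u^1_1+R$, where the coefficient $\tfrac12+\lambda$ arises because $u^1_1$ enters both through $\tfrac12 m^0_2$ and through the threat, and the residual $R$ is free of the direct $u^1_1$-dependence. The objective $\ex{((\tfrac12+\lambda)u^1_1+R)^2+(u^1_1)^2\mid y^1_1}$ is strictly convex, so its minimizer is $u^1_1=-\tfrac{\tfrac12+\lambda}{(\tfrac12+\lambda)^2+1}\,\ex{R\mid y^1_1}$. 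The crucial simplification is that the filtering error $X_2-m^0_2$ is independent of $y^1_1$ (since $y^1_1$ is part of $\VEC c_2$), so in $\ex{R\mid y^1_1}$ only the $y^1_1$-measurable pieces survive; using $\ex{y^2_1\mid y^1_1}=\tfrac12 y^1_1$ and collecting over the common denominator $4(22\lambda+59)$ gives $\ex{R\mid y^1_1}=\tfrac{5(4\lambda^2+4\lambda+5)}{2(22\lambda+59)}y^1_1$. Since $(\tfrac12+\lambda)^2+1=\tfrac{4\lambda^2+4\lambda+5}{4}$, the factors $(4\lambda^2+4\lambda+5)$ cancel and the minimizer reduces to $-\tfrac{10\lambda+5}{22\lambda+59}y^1_1=g^{1\dagger}_1(y^1_1)$, confirming consistency.

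The individual computations are routine scalar Gaussian conditioning, so the main point is conceptual rather than technical: one must justify that $g^{2\dagger}_2$ is a genuine best response even though, as a function on all of $\ALP I^2_2$, it is not the myopic optimum. The resolution is that the threat is supported only off the equilibrium path and is therefore never triggered while Controller~1 complies; moreover this remains true even when Controller~2 itself deviates at time~$1$, because the threat depends only on $(u^1_1,y^1_1)$. Finally, the exclusion $\lambda\neq-59/22$ is precisely what keeps every gain coefficient finite, while the convexity constants $(\tfrac12+\lambda)^2+1>0$ and the strict convexity of the time-$2$ cost hold for all admissible $\lambda$, so the equilibrium persists along the entire continuum.
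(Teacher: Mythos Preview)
Your proposal is correct and follows the same two-sided best-response verification as the paper's proof: fix $g^{1\dagger}_1$ and check that $(g^{2\dagger}_1,g^{2\dagger}_2)$ is optimal for Controller~2 (using that the threat term vanishes on-path, so $g^{2\dagger}_2$ coincides with the myopic minimizer), then fix $(g^{2\dagger}_1,g^{2\dagger}_2)$ and check that $g^{1\dagger}_1$ is optimal for Controller~1. The only difference is bookkeeping on Controller~1's side: the paper writes the time-$2$ cost-to-go as
\[
\tfrac{1}{4}(m^0_2)^2+\tfrac{37}{21}+\lambda\, m^0_2\bigl(u^1_1-g^{1\dagger}_1(y^1_1)\bigr)+\lambda^2\bigl(u^1_1-g^{1\dagger}_1(y^1_1)\bigr)^2,
\]
isolating the deviation $\delta=u^1_1-g^{1\dagger}_1(y^1_1)$, whereas you extract the coefficient $\tfrac12+\lambda$ of $u^1_1$ in $X_3$ and minimize directly; these are algebraically equivalent. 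Your write-up is in fact more explicit than the paper's terse argument, and you correctly flag the structural point the paper leaves implicit, namely that the threat depends only on $(u^1_1,y^1_1)$ and hence remains inert under any time-$1$ deviation by Controller~2, which is what licenses the dynamic-programming reduction on Controller~2's side.
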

\begin{proof}
We prove that for a fixed strategy $g^{1\dagger}_1$, $(g^{2\dagger}_1,g^{2\dagger}_2)$ is the best response strategy of Controller 2 and vice versa. 

First, fix $u^1_1 = g^{1\dagger}_1(y^1_1)$. Then, $g^{2\dagger}_2$ minimizes the cost $\ex{(X_3)^2\big|u^1_1 = g^{1\dagger}_1(y^1_1), \VEC i^2_2}$ (note that $\VEC i^2_2$ contains $y^1_1$). One can then verify that $g^{2\dagger}_1$ minimizes the cost functional $J^2(g^{1\dagger}_1,(g^2_1,g^{2\dagger}_2))$. Therefore, we conclude that $J^2(g^{1\dagger}_1,(g^{2\dagger}_1,g^{2\dagger}_2))\leq J^2(g^{1\dagger}_1,(g^{2}_1,g^{2}_2))$ for any $\VEC g^2:=(g^{2}_1,g^{2}_2)\in\ALP G^2_{1:2}$.

Next, fix $u^2_1 =g^{2\dagger}_1(y^2_1), u^2_2 = g^{2\dagger}_2(\VEC i^2_2)$. Then, the cost-to-go function for Controller 1 is at time step $1$ for the fixed strategy of Controller 2 is
\beqq{\frac{1}{4}(m^0_2)^2+\frac{37}{21}+\lambda m^0_2(u^1_1-g^{1\dagger}_1(y^1_1))+\lambda^2\left(u^1_1-g^{1\dagger}_1(y^1_1)\right)^2.}
The control law $g^{1\dagger}_1$ minimizes the cost functional of Controller 1 at time step $t=1$. Thus, we conclude $J^1(g^{1\dagger}_1,(g^{2\dagger}_1,g^{2\dagger}_2))\leq J^1(g^{1}_1,(g^{2\dagger}_1,g^{2\dagger}_2))$ for any $g^1_1\in\ALP G^1_1$. 
\end{proof}

Thus, we have proved that, in fact, there are several Nash equilibria of this game, and the Nash equilibrium obtained using our algorithm is just one among them (notice that the common information based Markov perfect equilibrium corresponds to the choice of $\lambda =0$).

\begin{remark}
It has been shown in \cite{Basaronestep} that if the control actions are not shared among the controllers, then there exists a unique Nash equilibrium in the game formulated above. The unique Nash equilibrium is the same set of control laws given in \eqref{eqn:g2} and \eqref{eqn:g1g2} with $u^i_1$ substituted with $g^{i\star}_1(y^i_1)$ in the expression of $m^0_2$. Thus, the existence of multiple Nash equilibria in this game is due to the information available to Controller 2 at time step $2$ about the action taken by the Controller 1 at time step 1. This example illustrates that more information to controllers may be harmful in a game as it gives rise to several other Nash equilibria! {\hfill$\Box$}
\end{remark}

\begin{remark}
It should also be noted that in case control actions are not shared, Assumption \ref{assm:lqgseparation} does not hold. However, \cite{Basaronestep} proved that an LQG game with one-step delayed observation sharing pattern admits a unique Nash equilibrium. Thus, Nash equilibrium in a dynamic game of asymmetric information may exist even in the absence of Assumption \ref{assm:lqgseparation} on that game. {\hfill$\Box$}
\end{remark}

\subsection{Effects on Expected Costs}
We now compare the expected costs to the controllers, if the Controller 2 plays according to a Nash equilibrium given by \eqref{eqn:g2dagger}. First, note that $\ex{(Y^i_1)^2} = 2, i=1,2$ and $\ex{Y^1_1Y^2_1} = 1$. This implies
\beqq{(m^0_2)^2 = \frac{2}{9(22\lambda+59)^2}\left((-8\lambda+44)^2+(16\lambda+32)^2+(-8\lambda+44)(16\lambda+32)\right).}
The expected costs to the controllers at Nash equilibrium $(g^{1\dagger}_1,(g^{2\dagger}_1,g^{2\dagger}_2))$ are given by
\beqq{J^{1\dagger}(\lambda):=J^1(g^{1\dagger}_1,(g^{2\dagger}_1,g^{2\dagger}_2)) &=& \frac{1}{4}(m^0_2)^2+\frac{2(10\lambda +5)^2}{(22\lambda +59)^2}+\frac{37}{21},\\
J^{2\dagger}(\lambda):=J^2(g^{1\dagger}_1,(g^{2\dagger}_1,g^{2\dagger}_2)) & = & \frac{1}{2}(m^0_2)^2+\frac{2(2\lambda+9)^2}{(22\lambda+59)^2}+\frac{41}{21}.}
If we take the limit $\lambda\rightarrow\infty$, we get
\beqq{\lim_{\lambda\rightarrow\infty}J^{1\dagger}(\lambda) = \frac{200+\frac{32}{3}}{484}+\frac{37}{21} \approx 2.197,\quad\lim_{\lambda\rightarrow\infty}J^{2\dagger}(\lambda) = \frac{8+\frac{64}{3}}{484}+\frac{41}{21} \approx 2.013.} 
On the other hand, $J^{1\dagger}(0) \approx 1.832$ and $J^{2\dagger}(0) \approx 2.092$, which corresponds to the expected costs to the controllers if they play according to the common information based Markov perfect equilibrium. Clearly, Controller 2, by choosing an appropriate (very large) value of $\lambda$, can reduce its expected cost, while increasing the expected cost to Controller 1; this observation has connections to incentive designs where Controller 2 can be viewed as the designer (leader) in the game.

\section{Discussion}\label{sec:discussion}
One of the crucial assumptions we made in the game formulation is the strategy independence of beliefs in Assumption \ref{assm:lqgseparation}. As discussed in our companion paper \cite{nayyar2012a}, this assumption plays a crucial role in computing the Markov perfect Nash equilibrium of game {\bf G2} using a backward induction dynamic programming. In this section, we briefly describe the reason why this assumption is important. For a more detailed and technical discussion, we refer the reader to \cite{nayyar2012a}.

Recall that as a consequence of Assumption \ref{assm:lqginfoevolution}, we have
\beqq{\Pi_{t+1} = F_t(\Pi_t,\Gamma^1_t,\Gamma^2_t,\VEC Z_{t+1}).}
Suppose Assumption \ref{assm:lqgseparation} does not hold for some game {\bf G1}, and the common information based conditional belief is dependent on the strategy of Controller 1 at some time step $t_0$. At any time step $t\geq t_0$, in order for Controller 2 to know the belief $\Pi_t$ exactly, the virtual player 1 needs to share its prescriptions in game {\bf G2}. If virtual player 1 does not share its prescription, then virtual player 2 needs to know the precise strategy of virtual player 1. Now, if the virtual players neither share their prescriptions nor their strategies, then each controller has an incentive to deviate from the Nash equilibrium to reap the benefit of asymmetry in the beliefs caused by changing its strategy. In other words, each controller has an incentive to ``deceive'' the other controller if Assumption \ref{assm:lqgseparation} does not hold. Assumption \ref{assm:lqgseparation} assumes that the common information is ``rich enough'' so that controllers cannot deceive each other.

In case the cost functions of the controllers are aligned (that is, they are same at all time steps), then the game problem is just a team problem, and Assumption \ref{assm:lqgseparation} is not required. In team problems, the agents can agree, prior to the start of the play, on what strategies they will use during the course of the play. Moreover, since the cost functions of all the agents are aligned, no agent has an incentive to deviate from the pre-agreed strategies. Consequently, we need not make Assumption \ref{assm:lqgseparation}. 

We now consider the case when the cost functions have opposite signs in game {\bf G1}, that is $c^1_t = -c^2_t$ for all time steps $t\in\{1,\ldots, T\}$. This makes game {\bf G1} as a zero-sum game with asymmetric information among the controllers. In this case, if a common information based Markov perfect equilibrium exists, then due to the ordered-interchangeability property of multiple saddle-point equilibria of zero-sum games \cite{Basarbook}, the expected costs to the controllers remain the same for {\it all} other saddle-point equilibrium strategies of the game. Thus, our backward induction algorithm in Theorem \ref{thm:mpevirtual} provides a constructive method to compute saddle-point equilibrium in a zero-sum game of asymmetric information among the controllers, provided that such an equilibrium exists. 

\section{Conclusion and Future Work}\label{sec:conclude}
We studied dynamic two-player linear-Gaussian non-zero sum games in this paper, where the controllers have asymmetric information and their information structures satisfy two assumptions. We showed that under certain structural assumptions on the admissible strategies and stagewise additive cost functions of the players, the existence of a common information based Markov perfect equilibrium can be established by proving the existence of Nash equilibrium in a sequence of static games of symmetric information. For LQG games with cost functions satisfying certain assumptions, we showed that there exists a unique common information based Markov perfect equilibrium. We also gave analytical expressions for the common information based Markov perfect equilibrium for a class of LQG games.

The main idea consisted of defining a new game of symmetric information and perfect observations among virtual players, computing the Markov perfect equilibrium of that game, and then using the Markov perfect equilibrium strategies to obtain a Nash equilibrium of the original game of asymmetric information. We also developed a backward induction algorithm that computes the common information based Markov perfect equilibrium of the game, as long as a Nash equilibrium exists in the static symmetric information game between the virtual players at every time step in the backward induction algorithm. This conceptual approach can be extended to non-zero sum dynamic stochastic games with a finite number of players, which satisfy the two assumptions on the information structures and the assumptions on the cost functions and admissible strategies.

We showed that there may be other Nash equilibria in games, which however cannot be computed using the common information based approach. This is due to the fact that Markov perfect equilibria of the corresponding symmetric and perfect information game among virtual players is a small subclass of Nash equilibria of that game. The conceptual framework we developed in this paper can be used to compute the Nash equilibrium of several classes of games of asymmetric information with infinite and uncountable state and action spaces.

The case of multi-player LQG games can also be solved along similar lines using the result of \cite{Basarmulti}. What still remains to be investigated is the value of information in games with asymmetric information. This is a challenging problem, because as we showed in Section \ref{sec:information}, extra information given to one player generates several other Nash equilibria in the game. It will be interesting to identify a refinement concept for several Nash equilibria arising out of asymmetry in information or extra information to one player.

\section*{Acknowledgments}
This work was supported in part by the AFOSR MURI Grant FA9550-10-1-0573.

\appendix

\section{Proof of Lemma \ref{lem:piupdate}}\label{app:piupdate}
Let $\VEC c_t$ be the realized common information at time step $t$. For $i\in\{1,2\}$, let $\gamma^i_t$ be such that $g^i_t(\VEC p^i_t,\VEC c_t)=\gamma^i_t(\VEC p^i_t)$ for all realizations of $\VEC p^i_t\in\ALP P^i_t$. Let $\pi_t(d\VEC s_t) = \mathds{P}\{d\VEC s_t|\VEC c_t\}$. Recall maps $\xi^i_{t+1}, i=1,2$ and $\zeta_{t+1}$ from Assumption \ref{assm:lqginfoevolution}. Let $\SF S_{t+1}\subset\ALP S_{t+1}$ and $\SF Z_{t+1}\subset\ALP Z_{t+1}$ be Borel sets. Now, notice that
\beqq{& &\pr{\SF S_{t+1}\times \SF Z_{t+1}|\VEC c_t} \\
&=& \int_{\SF S_{t+1}\times \SF Z_{t+1}\times \ALP U^{1:2}_t\times\ALP Y^{1:2}_t\times\ALP S_t}
 \pr{d\VEC s_{t+1},d\VEC z_{t+1},d\VEC u^{1:2}_{t},d\VEC y^{1:2}_t,d\VEC s_t|\VEC c_t}\\
 &=& \int_{\SF S_{t+1}\times \SF Z_{t+1}\times \ALP U^{1:2}_t\times\ALP Y^{1:2}_t\times\ALP S_t}
\mathds{1}_{\{\zeta_{t+1}(\VEC p^1_t, \VEC p^2_t, \VEC u^1_t, \VEC u^2_t, \VEC y^1_{t+1}, \VEC y^2_{t+1})\}}(d\VEC z_{t+1})\\
& & \mathds{1}_{\{\xi^1_{t+1}(\VEC p^1_t, \VEC u^1_t,  \VEC y^1_{t+1})\}}(d\VEC p^1_{t+1})\mathds{1}_{\{\xi^2_{t+1}(\VEC p^2_t, \VEC u^2_t,  \VEC y^2_{t+1})\}}(d\VEC p^2_{t+1})\mathds{1}_{\{\gamma^1_t(\VEC p^1_t)\}}(d\VEC u^1_t)\\
& & \mathds{1}_{\{\gamma^2_t(\VEC p^2_t)\}}(d\VEC u^2_t)\mathds{P}\{d\VEC y^1_{t+1},d\VEC y^2_{t+1}|\VEC x_{t+1}\}\mathds{P}\{d\VEC x_{t+1}|\VEC x_t, \VEC u^1_t, \VEC u^2_t\}
\pi_t(d\VEC x_t, d\VEC p^1_t, d\VEC p^2_t).}
It should be noted that the right side of the expression above depends only on $\pi_t$ and the choice of prescription pair $(\gamma^1_t,\gamma^2_t)$. Therefore, if the conditional probability measure $\pr{\SF S_{t+1}\times \SF Z_{t+1}|\VEC c_t}$ is disintegrated with respect to the random variable $\VEC z_{t+1}$, then we get that $\pi_{t+1}(d\VEC s_{t+1}):=\pr{d\VEC s_{t+1}|\VEC c_t,\VEC z_{t+1}}$ depends on $\pi_t$, the choice of prescription pair $(\gamma^1_t,\gamma^2_t)$ and the realization of the random variable $\VEC z_{t+1}$. It should be noted that the measure update equation is a combination of integral equation and a disintegration of probability measure, which does not depend on the choice of the strategy pair $(\VEC g^1,\VEC g^2)$. This completes the proof of the lemma.

\section{Proof of Lemma \ref{lem:pitgaussian}}\label{app:pitgaussian}
The proof is divided into three steps.

{\it Step 1:} At the first time step, since $\VEC X_1, \VEC W^{0:2}_1$ are mutually independent Gaussian random variables and observations are affine functions of the state, we conclude that the joint measure $\pr{d\VEC s_1|\VEC c_1}$ admits a distribution with Gaussian density function. 

{\it Step 2:} For time steps $t\geq 2$, assume that all control laws used till that time are affine functions of common and private information. Moreover, since all noise variables have full support, every possible value of common information $\VEC c_t$ in $\ALP C_t$ can be realized with the choice of affine control laws of the controllers. With affine control laws, the state, the private information and the common information random variables are jointly Gaussian and hence the conditional distribution on the state and private information given $\VEC c_t$ admits a Gaussian density for every $\VEC c_t\in\ALP C_t$.

{\it Step 3:} Assumption \ref{assm:lqgseparation} states that the common information based conditional measure $\pi_t$ does not depend on the choice of control laws. Therefore, under any choice of control laws, the conditional probability measure on the state and the private information given $\VEC c_t$ must admit a Gaussian density for every possible realization $\VEC c_t\in\ALP C_t$. 

Thus, for any  strategy profile of the controllers, the conditional measure $\pi_t$ admits a Gaussian density at all time steps $t\in\{1,\ldots,T-1\}$. This completes the proof of the lemma.

\section{Proof of Lemma \ref{lemma:lqgmean}}\label{app:lqgmean}
Since any Gaussian distribution is characterized by its mean and covariance, from \eqref{eqn:pit1}, we know that
\beq{\label{eqn:pit3}(\VEC M_{t+1},\Sigma_{t+1}) = F_t((\VEC M_t,\Sigma_t),\VEC Z_{t+1}).}
To establish the result, we need to prove that $\Sigma_{t}$ does not depend on the realizations of the random variable $\VEC C_t$, and $F^1_t$ is affine.

We first show that $\Sigma_{t}$ does not depend on the realizations of the random variable $\VEC C_t$. Assume that the control laws of both controllers at all time steps are affine in their information at that time step. Due to linearity of system dynamics and the observation equations, the state, private informations and the common information are jointly Gaussian. Recall that if $(\VEC X,\VEC Y)$ are jointly Gaussian random variables, then the conditional measure on $\VEC X$ given $\VEC y$, denoted by $\pr{d\VEC x|\VEC y}$, admits a Gaussian density function with conditional covariance matrix independent of the realization $\VEC y$. As a consequence of this result, we get that conditional measure on $\ALP S_t = \ALP X_t\times\ALP P^1_t\times\ALP P^2_t$ given the common information $\VEC c_t$ is a Gaussian distribution with the conditional covariance matrix independent of the realization of the common information. Thus, the covariance matrix $\Sigma_t$ evolves according to \eqref{eq:lqgevolution2}.

Now, notice that $\ex{\VEC S_t|\VEC c_t}$ is an affine function of $\VEC c_t$. Therefore, $\VEC M_{t+1}$ is an affine function of $\VEC C_{t+1}$ and $\VEC M_t$ is an affine function of $\VEC C_t$. Combining this with \eqref{eqn:pit3}, we conclude that $\VEC M_{t+1}$ is an affine function of $\VEC M_t$ and $\VEC Z_{t+1}$. Thus, for any time step $t$, $F^1_t$ is an affine function of its arguments, and $\VEC M_t$ evolves according to \eqref{eq:lqgevolution1}. This completes the proof of the lemma.

\section{Proof of Lemma \ref{lem:vpcost}}\label{app:vpcost}

We use nested expectation to prove this result. Let $\Gamma^i_t = \chi^i_t(\VEC C_t)$. This gives us
\beqq{\ex{c^i_t(\VEC X_t,\VEC U^1_t,\VEC U^2_t)} &=& \mathds{E}\Big[\ex{c^i_t(\VEC X_t,\VEC U^1_t,\VEC U^2_t)\Big|\VEC C_t}\Big]\\
&=& \mathds{E}\Big[\ex{c^i_t(\VEC X_t,\Gamma^1_t(\VEC P^1_t),\Gamma^2_t(\VEC P^2_t))\Big|\VEC C_t}\Big],\\
&=& \mathds{E}\Big[\tilde c^i_t(\VEC M_t,\chi^1_t(\VEC C_t),\chi^2_t(\VEC C_t))\Big],}
where the first equality follows from the property of expectation, the second equality merely substitutes $\VEC U^i_t = \Gamma^i_t(\VEC P^i_t)$, and the third equality follows from the definition of $\tilde c^i_t$. The above equalities, together with Assumption \ref{assm:lqginfoevolution} (the common information $\VEC C_t$ always increases at all time steps $t$), lead us to the following equalities
\beqq{J^i(g^1,g^2) &=&  \mathds{E}\bigg[\mathds{E}\bigg[\ldots\mathds{E}\bigg[\mathds{E}\bigg[c^i_T(\VEC X_T) \\
& & +c^i_{T-1}(\VEC X_{T-1},g^1_{T-1}(\VEC P^1_{T-1},\VEC C_{T-1}),g^2_{T-1}(\VEC P^2_{T-1},\VEC C_{T-1}))\bigg|\VEC C_{T-1}\bigg]\\
& & +c^i_{T-2}(\VEC X_{T-2},g^1_{T-2}(\VEC P^1_{T-2},\VEC C_{T-2}),g^2_{T-2}(\VEC P^2_{T-2},\VEC C_{T-2}))\bigg|\VEC C_{T-2}\bigg]\ldots\bigg|\VEC C_{1}\bigg]\bigg],\\
&=&  \mathds{E}\bigg[\mathds{E}\bigg[\ldots\mathds{E}\bigg[\mathds{E}\bigg[\tilde c^i_T(\VEC M_T)\\
& & +\tilde c^i_{T-1}(\VEC M_{T-1},\chi^1_{T-1}(\VEC C_{T-1}),\chi^2_{T-1}(\VEC C_{T-1}))\bigg|\VEC C_{T-1}\bigg]\\
& & +\tilde c^i_{T-2}(\VEC M_{T-2},\chi^1_{T-2}(\VEC C_{T-2}),\chi^2_{T-2}(\VEC C_{T-2}))\bigg|\VEC C_{T-2}\bigg]\ldots\bigg|\VEC C_{1}\bigg]\bigg],\\
&=& \tilde J^i(\chi^1,\chi^2).}
This completes the proof of the lemma. The converse can also be proved using similar arguments.

\section{Proof of Lemma \ref{lemma:lqgmarkov}}\label{app:lqgmarkov}
Consider a realization of common information $\VEC c_t$ and realizations $(\VEC m_{1:t},\gamma^{1:2}_{1:t})$ of means and prescriptions until time step $t$. From \eqref{eq:lqgevolution1} in Lemma \ref{lemma:lqgmean}, we have $\VEC M_{t+1} = F^1_t({\VEC m}_t,\VEC Z_{t+1})$. As a consequence of this equation, it is sufficient to prove that 
\beqq{\mathds{P}\{\SF Z_{t+1}|\VEC c_t, \VEC m_{1:t},\gamma^{1:2}_{1:t}\} = \mathds{P}\{\SF Z_{t+1}|\VEC m_{t},\gamma^{1:2}_t\} \text{ for all Borel sets } \SF Z_{t+1}\subset\ALP Z_{t+1}.}
Consider an arbitrary Borel set $\SF Z_{t+1}\subset\ALP Z_{t+1}$. From Assumption \ref{assm:lqginfoevolution}, we get 
\beqq{\VEC Z_{t+1} &=&  \zeta_{t+1}(\VEC P^1_t, \VEC P^2_t, \VEC U^1_t, \VEC U^2_t, \VEC Y^1_{t+1}, \VEC Y^2_{t+1}) \\
&=&  \zeta_{t+1}(\VEC P^1_t, \VEC P^2_t,\gamma^1_t(\VEC P^1_t), \gamma^2_t(\VEC P^2_t), \VEC Y^1_{t+1}, \VEC Y^2_{t+1}),}
where we used the fact that the players used the strategies prescribed by the virtual players. Define $\ALP O_t := \ALP P^1_t\times \ALP P^2_t\times \ALP Y^1_{t+1}\times \ALP Y^2_{t+1} $ and 
\beqq{\tilde\zeta_{t+1}(\VEC O_t,\gamma^1_t,\gamma^2_t):=\zeta_{t+1}(\VEC P^1_t, \VEC P^2_t, \gamma^1_t(\VEC P^1_t), \gamma^2_t(\VEC P^2_t), \VEC Y^1_{t+1}, \VEC Y^2_{t+1}),}
where $\VEC O_t \in\ALP O_t$. Let $N(\cdot;\Sigma)$ denote the density function of a multi-variate mean-zero Gaussian random vector with variance $\Sigma$. Now notice the following:
\beqq{& & \mathds{P}\{\VEC Z_{t+1}\in \SF Z_{t+1}|\VEC c_t, \VEC m_{1:t},\gamma^{1:2}_{1:t}\} \\
&=& \int_{\SF Z_{t+1}}\int_{\ALP X_t\times \ALP X_{t+1}\times \ALP O_t} \mathds{P}\{d\VEC z_{t+1},d\VEC x_t,d\VEC x_{t+1},d\VEC o_t|\VEC c_t, \VEC m_{1:t},\gamma^{1:2}_{1:t}\}\\
& =&  \int_{\SF Z_{t+1}}\int_{\ALP X_t\times \ALP X_{t+1}\times \ALP O_t}\ind{\tilde\zeta_{t+1}(\VEC o_t,\gamma^1_t,\gamma^2_t)}(d\VEC z_{t+1}) \mathds{P}\{d\VEC y^{1:2}_{t+1}|\VEC x_{t+1}\}\\
& & \mathds{P}\{d\VEC x_{t+1}|\VEC x_t,\gamma^1_t(\VEC p^1_t),\gamma^2_t(\VEC p^2_t)\} \mathds{P}\{d\VEC x_t,d\VEC p^1_t,d\VEC p^2_t|\VEC c_t, \VEC m_{1:t},\gamma^{1:2}_{1:t}\},\\
&=& \int_{\SF Z_{t+1}}\int_{\ALP X_t\times \ALP X_{t+1}\times \ALP O_t}\ind{\tilde\zeta_{t+1}(\VEC o_t,\gamma^1_t,\gamma^2_t)}(d\VEC z_{t+1}) \mathds{P}\{d\VEC y^{1:2}_{t+1}|\VEC x_{t+1}\}\\
& & \mathds{P}\{d\VEC x_{t+1}|\VEC x_t,\gamma^1_t(\VEC p^1_t),\gamma^2_t(\VEC p^2_t)\} N(\VEC s_t-\VEC m_t;\Sigma_t)d\VEC s_t,}
where we used the fact that the conditional distribution $\mathds{P}\{d\VEC x_t,d\VEC p^1_t,d\VEC p^2_t|\VEC c_t, \VEC m_{1:t},\gamma^{1:2}_{1:t}\}$ is a Gaussian distribution with mean $\VEC m_t$ and variance $\Sigma_t$ (recall that $\Sigma_t$ is independent of the realizations of random variables by Lemma \ref{lemma:lqgmean}). The right side of the equation above depends only on $\VEC m_t$ and the choice of prescriptions $\gamma^{1:2}_t$. This establishes the result of the lemma.

\section{Proof of Theorem \ref{thm:lqgequiv}}\label{app:lqgequiv}
Let $(\chi^{1\star}, \chi^{2\star})$ be a Nash equilibrium strategy profile of game \textbf{G2}. We want to show that the strategy profile $(\mathbf g^{1\star},\mathbf g^{2\star})$ is a Nash equilibrium of game {\bf G1}. Let $\mathbf g^1\in\ALP G^1_{1:T-1}$ be an arbitrary strategy of Player 1. Define $\chi^1:=\varsigma^1(\VEC g^1)$. From the definition of Nash equilibrium of game {\bf G2}, we get
\beqq{J^1(\mathbf g^{1\star},\mathbf g^{2\star}) = \tilde J^1(\chi^{1\star}, \chi^{2\star})\leq \tilde J^1(\chi^{1}, \chi^{2\star}) =  J^1(\mathbf g^{1},\mathbf g^{2\star}),}
where we used the result of Lemma \ref{lem:vpcost}. Similarly, we get $J^2(\mathbf g^{1\star},\mathbf g^{2\star})\leq J^2(\mathbf g^{1\star},\mathbf g^{2})$ for all $\mathbf g^2\in\ALP G^2_{1:T-1}$. Thus, strategy profile $(\mathbf g^{1\star},\mathbf g^{2\star})$ is indeed a Nash equilibrium of game {\bf G1}. 

Using similar steps as above, we prove the converse. This establishes the result of the theorem.

\section{Proof of Lemma \ref{lemma:lqginfostatelemma}} \label{app:lqginfostatelemma}
Assume that virtual player $2$ uses a fixed strategy of the form $\Gamma^2_t = \psi^2_t(\VEC M_t), t\in\{1,\ldots, T\}$. We show that the virtual player $1$'s problem is simply a Markov decision problem with state process $\{\VEC M_t\}_{t\in\{1,\ldots,T\}}$ and actions $\{\Gamma^1_t\}_{t\in\{1,\ldots,T-1\}}$. 

Suppose at time $t$, $\VEC c_t$ is the realization of the common information, $\VEC m_t$ is the realization of the mean of the conditional measure $\pi_t$, $\gamma^2_t := \psi^2_t(\VEC m_t)$, and virtual player $1$ selects $\gamma^1_t$ as its action. Recall that $\mathds{P}\{d\VEC x_t,d\VEC p^1_t,d\VEC p^2_t|\VEC c_t\}$ is a Gaussian distribution with mean $\VEC m_t$ and variance $\Sigma_t$. The expected instantaneous cost is
\beqq{\tilde{c}^1_t(\VEC m_t,\gamma^1_t,\psi^2_t(\VEC m_t)) &=& \ex{c^1_t(\VEC X_t,\gamma^1_t(\VEC P^1_t),\gamma^2_t(\VEC P^2_t))|\VEC c_t}\\
&=& \int_{\ALP S_t}c^1_t(\VEC x_t,\gamma^1_t(\VEC p^1_t),\gamma^2_t(\VEC p^2_t)) \mathds{P}\{d\VEC s_t|\VEC c_t\}.}
Since $\mathds{P}\{d\VEC s_t|\VEC c_t\}$ and $\gamma^2_t$ are functions of $\VEC m_t$ at all time steps $t\in\{1,\ldots, T-1\}$, we conclude that the cost of virtual player 1 at time step $t$ is only a function of $\VEC m_t$ and $\gamma^1_t$. Also recall that in Lemma \ref{lemma:lqgmarkov}, we proved that $\{\VEC M_t\}_{t\in\{1,\ldots,T\}}$ is a controlled Markov chain. Therefore, virtual player $1$'s optimization is a Markov decision problem with $\VEC M_t$ as the state and $\gamma^1_t$ as the controlling action. The corresponding statement for virtual player 2 can also be proved using similar arguments. This completes the proof of the lemma.

%

\section{Proof of Theorem \ref{thm:mpevirtual}}\label{app:mpevirtual}
Suppose that the strategy profile $(\psi^{1\star},\psi^{2\star})$ of virtual players is a Markov perfect equilibrium of game {\bf G2}. Fix a time step $t\in\{1,\ldots,T-1\}$ and a virtual player $i\in\{1,2\}$. By definition of Markov perfect equilibrium, we know that $\psi^{i\star}_{t:T-1}$ minimizes the expected cost
\beqq{\ex{\tilde c^i_T(\VEC M_T)+\sum_{s=t}^{T-1}\tilde{c}^i_s(\VEC M_s,\gamma^i_s,\psi^{-i\star}_s(\VEC M_s))}.}
Applying the principle of dynamic programming, we can easily verify that \eqref{eqn:psiitm} is satisfied by the control law $\psi^{i\star}_t$.
 
We now prove the converse. For a fixed sub-game strategy $\psi^j_{t:T-1}$ of virtual player $j$, let us denote the one-sided optimization problem for virtual player $i\neq j$ at time instant $t$ given the Markov state $\VEC m_t$ by $MDP^i_t(\psi^j_{t:T-1},\VEC m_t)$. We prove the converse of the statement of the theorem by showing that the strategy pair $(\psi^{1\star}_{t:T-1},\psi^{2\star}_{t:T-1})$ as defined by \eqref{eqn:psiitm} is a sub-game perfect equilibrium for every sub-game starting at time instant $t\in\{1,\ldots,T-1\}$.

Fix the strategy profile of virtual player 2 to $\psi^{2\star} = \{\psi^{2\star}_1,\ldots,\psi^{2\star}_{T-1}\}$ and fix any time step $t\in\{1,\ldots,T-1\}$.
Then, using the principle of dynamic programming for Markov decision processes, the recursion in \eqref{eqn:viT} and \eqref{eqn:vit} implies that $\psi^{1\star}_{t:T-1}$ is the optimal strategy of virtual player 1 for $MDP^1_t(\psi^{2\star}_{t:T-1},\VEC m_t)$. Since the time step $t$ was arbitrary, we conclude that $\psi^{1\star}$ is the best response strategy of virtual player 1. A similar argument proves that $\psi^{2\star}$ is the best response strategy of virtual player 2 given the virtual player 1's strategy $\psi^{1\star}$. This completes the proof of the converse part of the theorem.

\section{Proof of Lemma \ref{lem:basargame}}\label{sec:auxiliary}
We first need several results about eigenvalues, eigenvectors, and pseudo-inverses of symmetric non-invertible matrices. We turn our attention to the proof of Lemma \ref{lem:basargame} thereafter.
\begin{lemma}\label{lem:invmat}
In the statements below, for any square matrix $D\in \Re^{n\times n}$, $D^{-1}$ denotes generalized inverse of the matrix.
\begin{enumerate}
\item For any square matrix $D\in \Re^{n\times n}$, $DD^{-1}D=D$.
\item Let $D_1$ and $D_2$ be matrices of appropriate dimensions. Then, the sets of non-zero eigenvalues of $D_1D_2$ and $D_2D_1$ are the same.
\item Any symmetric matrix $D\in \Re^{n\times n}$ has real eigenvalues and a mutually orthogonal set of eigenvectors that spans $\Re^n$. Thus, spectral radius of $D^\transpose D = D^2$ is the same as the square of the spectral radius of $D$. 
\item For any matrix $D\in\Re^{m\times n}$, the matrix $D^\transpose(DD^\transpose)^{-1}D$ is a symmetric positive semi-definite matrix with every non-zero eigenvalue of the matrix equal to 1.
\item Let $D_1,D_2$ be matrices of appropriate dimensions. Define a square matrix $D$ as \beqq{D:=(D_1D_1^\transpose)^{-1}D_1D_2^\transpose(D_2D_2^\transpose)^{-1}D_2D_1^\transpose.}
Then, all eigenvalues of $D$ are non-negative real numbers that are less than or equal to 1.
\end{enumerate}
\end{lemma}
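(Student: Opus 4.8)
The plan is to establish the five items essentially in the stated order, bootstrapping the harder claims from the easier ones; I expect all the real work to be in item (5). Items (1) and (3) I would dispatch immediately: (1) is exactly one of the defining Penrose conditions of the Moore--Penrose generalized inverse, and (3) is the spectral theorem for real symmetric matrices, after which the spectral-radius statement follows by noting that an orthonormal eigenbasis of $D$ with eigenvalues $\lambda_i$ is also an eigenbasis of $D^2=D^\transpose D$ with eigenvalues $\lambda_i^2$, so $\rho(D^2)=\max_i\lambda_i^2=(\max_i|\lambda_i|)^2=\rho(D)^2$.

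For item (2) I would give the classical direct argument: if $\lambda\neq 0$ is an eigenvalue of $D_1D_2$ with eigenvector $v$, then $D_2v\neq 0$ (else $\lambda v=D_1D_2v=0$ forces $\lambda=0$), and applying $D_2$ to $D_1D_2v=\lambda v$ yields $D_2D_1(D_2v)=\lambda(D_2v)$, exhibiting $\lambda$ as an eigenvalue of $D_2D_1$. Interchanging $D_1$ and $D_2$ gives the opposite inclusion, so the two sets of non-zero eigenvalues coincide. For item (4) I would set $P:=D^\transpose(DD^\transpose)^{-1}D$ and show it is an orthogonal projection. Symmetry is immediate since $DD^\transpose$, and hence its generalized inverse, is symmetric. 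Idempotency follows by writing $P^2=D^\transpose(DD^\transpose)^{-1}(DD^\transpose)(DD^\transpose)^{-1}D$ and collapsing the middle via the pseudo-inverse identities for the symmetric matrix $A:=DD^\transpose$ (item (1), $AA^{-1}A=A$, together with its companion $A^{-1}AA^{-1}=A^{-1}$), giving $P^2=P$. A symmetric idempotent matrix is diagonalizable with eigenvalues in $\{0,1\}$ by item (3); in particular $P$ is positive semidefinite and its non-zero eigenvalues all equal $1$.

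Item (5) is the crux and where I expect the main obstacle. Writing $P_k:=D_k^\transpose(D_kD_k^\transpose)^{-1}D_k$ for $k=1,2$, which are orthogonal projections by item (4), I would factor $D=AB$ with $A:=(D_1D_1^\transpose)^{-1}D_1$ and $B:=D_2^\transpose(D_2D_2^\transpose)^{-1}D_2D_1^\transpose$, and compute $BA=P_2P_1$. By item (2), the non-zero eigenvalues of $D=AB$ coincide with those of $P_2P_1$. The difficulty is that a product of two projections need not be symmetric and need not have real eigenvalues a priori, so I would symmetrize: two further applications of item (2) show that $P_2P_1$, $P_1P_2$, and $M:=P_1P_2P_1$ all share the same non-zero eigenvalues. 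Now $M$ is genuinely symmetric, positive semidefinite because $x^\transpose Mx=(P_1x)^\transpose P_2(P_1x)\geq 0$, and bounded above via $P_1-M=P_1(I-P_2)P_1\succeq 0$, i.e. $M\preceq P_1\preceq I$.

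Hence the eigenvalues of $M$ lie in $[0,1]$, so the non-zero eigenvalues of $D$ are real and lie in $(0,1]$; together with the zero eigenvalues this gives that every eigenvalue of $D$ is a non-negative real number at most $1$, as claimed. The subtle point throughout is that only pseudo-inverse identities, never true invertibility, may be invoked, so each cancellation must be justified by item (1) and its Penrose companions rather than by a genuine inverse. This is precisely why items (1), (2), and (4) are proved first: item (4) supplies that the building blocks $P_1,P_2$ really are projections, and item (2) is the engine that reduces the possibly non-symmetric operator $D$ to the symmetric $M$ on which the Loewner-order bound $0\preceq M\preceq I$ can be applied.
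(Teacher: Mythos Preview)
Your proposal is correct. Parts (1)--(3) match the paper's treatment, which simply cites standard references. The differences appear in Parts (4) and (5).

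For Part (4), the paper does not show idempotency directly. Instead it uses Part (2) to transfer the non-zero spectrum of $D^\transpose(DD^\transpose)^{-1}D$ to that of $(DD^\transpose)^{-1}DD^\transpose$, then takes an eigenvector $\VEC e$ with non-zero eigenvalue $\lambda$, multiplies $(DD^\transpose)^{-1}DD^\transpose\VEC e=\lambda\VEC e$ on the left by $\VEC e^\transpose DD^\transpose$, and invokes Part (1) to conclude $\lambda=1$. Your route via $P^2=P$ is shorter and more conceptual, since it identifies $P$ as an orthogonal projection outright rather than inferring the spectrum one eigenvalue at a time.

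For Part (5), both arguments begin with the same reduction via Part (2) to the non-zero spectrum of $P_1P_2$ (or $P_2P_1$). The paper then argues by norms: for a non-zero eigenvalue $\lambda$ with unit eigenvector $\VEC e$, it bounds $|\lambda|^2=\|P_1P_2\VEC e\|^2\le\|P_2\VEC e\|^2\le\|\VEC e\|^2$ using that each $P_i$, being symmetric with spectrum in $\{0,1\}$, has operator norm at most $1$. This yields $|\lambda|\le 1$ but, as written, does not separately address reality and non-negativity of the eigenvalues. Your symmetrization to $M=P_1P_2P_1$ and the Loewner sandwich $0\preceq M\preceq P_1\preceq I$ handles all three conclusions at once. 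The paper's approach is slightly quicker if one is content with the modulus bound; yours is cleaner and actually delivers the full statement of the lemma.
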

\begin{proof}
\begin{enumerate}
\item See \cite[Theorem 4.3.2, p. 100]{catlin1989}
\item See \cite[p. 24]{minc1964}.
\item Since $D$ is symmetric, it is normal, that is, it is diagonalizable via a similarity transformation \cite[4.10.3, p. 67]{minc1964}. From \cite[4.10.3, p. 67]{minc1964}, we also know that the set of eigenvectors of a symmetric matrix in $\Re^{n\times n}$ forms an orthonormal basis for $\Re^n$ (for a proof, see \cite[Theorem 2, p. 54]{bellman1970}).
\item The fact that  $D^\transpose(DD^\transpose)^{-1}D$ is symmetric is clear. We first prove that $D^\transpose(DD^\transpose)^{-1}D$ is positive semi-definite. Since $DD^\transpose$ is positive semi-definite, $(DD^\transpose)^{-1}$ is positive semi-definite by \cite[Theorem 4.4.3, p. 109]{minc1964}. Thus, $D^\transpose(DD^\transpose)^{-1}D$ is positive semi-definite, and therefore, its eigenvalues are non-negative real numbers.  Using Part 2, we know that $D^\transpose(DD^\transpose)^{-1}D$ and $(DD^\transpose)^{-1}DD^\transpose$ have the same set of non-zero eigenvalues. Now, let $\lambda$ be a non-zero eigenvalue of $(DD^\transpose)^{-1}DD^\transpose$ and $\VEC e$ be the corresponding eigenvector. Then,
\beqq{(DD^\transpose)^{-1}DD^\transpose\VEC e = \lambda \VEC e.}
From the above equation, we note that $\VEC e$ cannot lie in the nullspace of $D^\transpose$ as $\lambda\neq 0$, which further implies that $ \VEC e^\transpose DD^\transpose\VEC e >0$. Multiplying both sides  in the above equation by $\VEC e^\transpose DD^\transpose$ and using the identity in Part 1, we get
\beqq{\VEC e^\transpose DD^\transpose(DD^\transpose)^{-1}DD^\transpose\VEC e = \VEC e^\transpose DD^\transpose\VEC e = \lambda \VEC e^\transpose DD^\transpose\VEC e.}
Thus, $\lambda = 1$. This completes the proof of this part of the lemma.

\item Let $P_i := D_i^\transpose(D_iD_i^\transpose)^{-1}D_i$ for $i=1,2$. Note that $P_1$ and $P_2$ are symmetric positive semi-definite matrices and have the same dimension. Part 2 of the lemma implies that $D$ and $P_1P_2$ have the same set of non-zero eigenvalues. Part 4 of the lemma implies that spectral radius of $P_i$ is equal to 1 for $i=1,2$. Part 3 implies $P_i^\transpose P_i$ has spectral radius 1 for $i=1,2$. Let $\lambda$ be a non-zero eigenvalue of $P_1P_2$ and $\VEC e$ be the corresponding normalized eigenvector. Then, we have $P_1P_2 \VEC e = \lambda \VEC e$. Taking the usual norm on both sides of the equation, we get 
\beqq{|\lambda|^2 \VEC e^\transpose\VEC e  = ( P_2\VEC e)^\transpose P_1^\transpose P_1 (P_2 \VEC e)\leq \VEC e^\transpose P_2^\transpose P_2 \VEC e\leq \VEC e^\transpose\VEC e.}
Thus, $|\lambda|\leq 1$, which completes the proof of this part of the lemma.
\end{enumerate}
\end{proof}

\begin{lemma}\label{lem:p1p2p3}
Let $\lambda(\cdot)$ denote the spectral radius of a matrix $(\cdot)$. Let $P_1, P_2$ and $P_3$ be square matrices in $\Re^{n\times n}$. If $\lambda(P_1)\lambda(P_2)< 1$, then there exists a unique $D\in \Re^{n\times n}$ such that $D+P_1DP_2 = P_3$ is satisfied.
\end{lemma}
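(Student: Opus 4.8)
The plan is to regard the equation $D + P_1 D P_2 = P_3$ as a single linear equation for the unknown matrix $D$ in the $n^2$-dimensional space $\Re^{n\times n}$, and to show that the associated linear operator is invertible precisely because $\lambda(P_1)\lambda(P_2)<1$. Writing $\mathcal{L}(D) := D + P_1 D P_2$, the claim reduces to showing that $\mathcal{L}$ is a bijection on $\Re^{n\times n}$, which yields existence and uniqueness of $D = \mathcal{L}^{-1}(P_3)$ simultaneously.

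First I would vectorize the equation. Using the standard identity $\mathrm{vec}(P_1 D P_2) = (P_2^{\transpose}\otimes P_1)\,\mathrm{vec}(D)$, the equation becomes $(I_{n^2} + P_2^{\transpose}\otimes P_1)\,\mathrm{vec}(D) = \mathrm{vec}(P_3)$, so existence and uniqueness are equivalent to invertibility of the $n^2\times n^2$ matrix $I_{n^2} + P_2^{\transpose}\otimes P_1$, i.e. to the statement that $-1$ is not an eigenvalue of $P_2^{\transpose}\otimes P_1$. The eigenvalues of a Kronecker product $A\otimes B$ are exactly the products of an eigenvalue of $A$ with an eigenvalue of $B$; since $P_2^{\transpose}$ has the same eigenvalues as $P_2$, every eigenvalue of $P_2^{\transpose}\otimes P_1$ has the form $\mu\nu$ with $\mu$ an eigenvalue of $P_2$ and $\nu$ an eigenvalue of $P_1$.

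The hypothesis then enters directly: for any such (possibly complex) $\mu,\nu$ we have $\ABS{\mu\nu} = \ABS{\mu}\,\ABS{\nu}\le \lambda(P_2)\lambda(P_1) < 1$, so in particular $\mu\nu\neq -1$. Hence $1+\mu\nu\neq 0$ for every eigenvalue $\mu\nu$ of $P_2^{\transpose}\otimes P_1$, the matrix $I_{n^2}+P_2^{\transpose}\otimes P_1$ has no zero eigenvalue, and is therefore invertible. This establishes both the existence and the uniqueness of $D$.

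Finally, I would record that the solution can be exhibited explicitly, which also furnishes a self-contained route avoiding the Kronecker spectral facts: since $\lambda(P_1)\lambda(P_2)<1$, Gelfand's formula lets me pick $\epsilon>0$ with $(\lambda(P_1)+\epsilon)(\lambda(P_2)+\epsilon)<1$ and a constant $C$ so that $\NORM{P_1^k P_3 P_2^k}\le C\big((\lambda(P_1)+\epsilon)(\lambda(P_2)+\epsilon)\big)^k$ for large $k$, whence the Neumann-type series $D = \sum_{k=0}^{\infty}(-1)^k P_1^k P_3 P_2^k$ converges absolutely; a telescoping computation then gives $D + P_1 D P_2 = P_3$, and uniqueness follows because any difference $E$ of two solutions satisfies $E = (-1)^k P_1^k E P_2^k$ for all $k$, forcing $\NORM{E}\to 0$. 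The only mild subtlety in either route is the passage from the spectral-radius hypothesis to a genuine eigenvalue or norm bound, handled by the Kronecker eigenvalue description in the first approach and by Gelfand's formula in the second; beyond that the argument is routine.
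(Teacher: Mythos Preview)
Your argument is correct: vectorizing the equation to $(I_{n^2}+P_2^{\transpose}\otimes P_1)\,\mathrm{vec}(D)=\mathrm{vec}(P_3)$ and invoking the Kronecker spectrum $\{\mu\nu:\mu\in\sigma(P_2),\ \nu\in\sigma(P_1)\}$ cleanly reduces the claim to $\lambda(P_1)\lambda(P_2)<1$, and your Neumann-series alternative via Gelfand's formula is also sound.

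There is little to compare against, since the paper does not prove this lemma at all; it simply cites Theorem~3 of \cite{Basar1975}. Your write-up therefore supplies what the paper omits. The Neumann/contraction route you sketch is in the spirit of the iterative best-response arguments used in \cite{Basar1975}, whereas the Kronecker argument is a more direct finite-dimensional linear-algebra proof; either is adequate here, and the Kronecker version has the advantage of giving existence and uniqueness in one stroke without any limiting process.
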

\begin{proof}
See the proof of Theorem 3 in \cite{Basar1975}.
\end{proof}

We now prove Lemma \ref{lem:basargame} in three steps. First, we consider another game {\bf AG2} with a different cost function and show that games {\bf AG1} and {\bf AG2} have the same set of Nash equilibria. The cost functions of the players in game {\bf AG2} have a form that is similar to the game considered in \cite{Basar1975}. However, in \cite{Basar1975}, the matrices $\Sigma_{y^iy^i}, i=1,2$  are assumed to be invertible, which we relax in this proof. In Step 2 of the proof, we use the result from \cite{Basar1975,Basaronestep,Basarmulti} to show that the Nash equilibrium of game {\bf AG2} exists and is affine in the information of the players. Then, we use the Step 1 of the proof to obtain the Nash equilibrium of game {\bf AG1}.

{\it Step 1:} Consider game {\bf AG2} in which the players have the following cost functions
\beq{\label{eqn:bj1aux}\bar c^1(\VSF X,\VSF U^1,\VSF U^2) &=& \VSF U^{1\transpose}C_{22}\VSF U^1+ 2\VSF U^{1\transpose}C_{12}^\transpose \VSF X+2\VSF U^{1\transpose}C_{23}\VSF U^2 +2d_2 \VSF U^1,\\
\label{eqn:bj2aux}\bar c^2(\VSF X,\VSF U^1,\VSF U^2) &=& \VSF U^{2\transpose}E_{33}\VSF U^2+2\VSF U^{2\transpose}E_{13}^\transpose \VSF X+2\VSF U^{2\transpose}E_{23}^\transpose\VSF U^1+2f_3 \VSF U^2.}
The difference in {\bf AG1} and {\bf AG2} lies in the cost functions of the players. In game {\bf AG1}, cost $c^i$ has the terms that are not dependent on $\VSF U^i$ for $i=1,2$ whereas in game {\bf AG2}, cost $\bar c^i$ has only the terms dependent on $\VSF U^i$ for $i=1,2$. Thus, games {\bf AG1} and {\bf AG2} are strategically equivalent. 

{\it Step 2:} In this step, we prove that game {\bf AG2} has a unique Nash equilibrium that is affine in the information of the controllers. From \cite[Theorem 1, p. 236]{Basarmulti}, we know that if Assumption \ref{assm:eigcost} holds, then the Nash equilibrium strategy tuple of game {\bf AG2} exists, is unique and affine in its argument. Assume that the Nash equilibrium strategies are given by
\beqq{g^{i\star}(\VSF Y^i) =  T^i(\VSF Y^i-\VSF m_{y^i}) + b^i ,\qquad i=1,2.}
Then, $b^1,b^2$ must be the solutions of the following pair of equations
\beqq{b^1 &=& -C_{22}^{-1}[d_2^{\transpose}+C_{12}\VSF m_x+C_{23}b^2] \\
b^2 &=& -E_{33}^{-1}[f_3^{\transpose}+E_{13}\VSF m_x +E_{23}^{\transpose}b^1 ],}
and $T^1, T^2$ must be the solutions of the following pair of equations
\beqq{T^1 &=& -C_{22}^{-1}[C_{12}^\transpose\Sigma_{xy^1}\Sigma_{y^1y^1}^{-1}+ C_{23}T^2\Sigma_{y^2y^1}\Sigma_{y^1y^1}^{-1}], \\
T^2 &=& -E_{33}^{-1}[E_{13}^\transpose\Sigma_{xy^2}\Sigma_{y^2y^2}^{-1}+E_{23}^{\transpose}T^1\Sigma_{y^1y^2}\Sigma_{y^2y^2}^{-1}].}
We now show that there exist pairs $(b^1,b^2)$ and $(T^1,T^2)$ which satisfy the above set of equations. If $\Sigma_{y^iy^i}, i=1,2$ are invertible, then the existence of such pairs $(b^1,b^2)$ and $(T^1,T^2)$ follow from \cite{Basar1975}. We now prove that such pairs exist even if $\Sigma_{y^iy^i}, i=1,2$ are not invertible. In what follows, $\Sigma_{y^iy^i}^{-1}$ represents the generalized inverse of $\Sigma_{y^iy^i}, i=1,2$.

Since Assumption \ref{assm:eigcost} holds, there exists an $i_0\in\{1,2\}$ and a matrix $K\in \ALP K_{i_0}$ such that $\bar{\lambda}(K)<1$. Without loss of generality, assume that $i_0=1$ and let $L$ be the matrix such that $K = LK_1L^{-1}$. Let $\tilde b^1 = L b^1, \tilde b^2 = Lb^2,\tilde T^1 = LT^1$ and $\tilde T^2 = LT^2$. Now, notice that 
\beqq{\tilde b^1 &=& -LC_{22}^{-1}[d_2^{\transpose}+C_{12}\VSF m_x+C_{23}L^{-1}\tilde b^2] \\
\tilde b^2 &=& -LE_{33}^{-1}[f_3^{\transpose}+E_{13}\VSF m_x +E_{23}^{\transpose}L^{-1}\tilde b^1],}
which admits a unique solution since $\bar{\lambda}(K)<1$. This further implies that a pair of $(b^1, b^2)$ exists. We now prove that there exist $T^1$ and $T^2$ satisfying the above pair of equations. Substituting the expression for $T^2$ in the expression of $T^1$ and writing the expression in terms of $\tilde T^1$, we get
\beq{\tilde T^1 = -LC_{22}^{-1}C_{12}^\transpose\Sigma_{xy^1}\Sigma_{y^1y^1}^{-1}+ LC_{22}^{-1}C_{23}E_{33}^{-1}E_{13}^\transpose\Sigma_{xy^2}\Sigma_{y^2y^2}^{-1}\nonumber\\
+ L(C_{22}^{-1}C_{23}E_{33}^{-1}E_{23}^{\transpose})L^{-1}\tilde T^1(\Sigma_{y^1y^2}\Sigma_{y^2y^2}^{-1}\Sigma_{y^2y^1}\Sigma_{y^1y^1}^{-1}),\label{eqn:t1unique}}
where $L(C_{22}^{-1}C_{23}E_{33}^{-1}E_{23}^{\transpose})L^{-1}$ is equal to $K$. Also note that by Lemma A2 in \cite[p. 327]{Basar1975}, $\bar{\lambda}(K)<1$ implies $\lambda(K)<1$. Now, recall that $\Sigma_{y^iy^j} = \Sigma_{y^iy^i}^{\frac{1}{2}}\Sigma_{y^jy^j}^{\frac{1}{2}\transpose}$ for $i,j=1,2$. Thus, Lemma \ref{lem:invmat} Part 5 implies that $\lambda(\Sigma_{y^1y^2}\Sigma_{y^2y^2}^{-1}\Sigma_{y^2y^1}\Sigma_{y^1y^1}^{-1})\leq 1$. Since $\lambda(K)\lambda(\Sigma_{y^1y^2}\Sigma_{y^2y^2}^{-1}\Sigma_{y^2y^1}\Sigma_{y^1y^1}^{-1})<1$, we conclude from Lemma \ref{lem:p1p2p3} that there exists a unique $\tilde T^1$, which satisfies \eqref{eqn:t1unique}. This further implies the existence of a unique $T^1$ in \eqref{eqn:t1}-\eqref{eqn:t2}. We can substitute this value of $T^1$ in the expression for $T^2$ to get its unique value.

The case of $i_0=2$ is analogous to the argument as above. We first get a unique value of $T^2$ and then substitute $T^2$ in the expression for $T^1$ to get the unique value of $T^1$. Also note that computing the value of $b^1$, $b^2$, $T^1$ and $T^2$ is equivalent to solving a linear program.

{\it Step 3:} Since games {\bf AG2} and {\bf AG1} are strategically equivalent, we can use the result of Step 2 to obtain the Nash equilibrium strategies of the players for game {\bf AG1}. 

The proof of Part 2 of the lemma is thus complete.

\bibliographystyle{IEEEtran}
\bibliography{myref,collection,papers,mypaper,game,refbib}


\end{document}